\newtheorem{remark}{Remark}
\newtheorem{definition}{Definition}
\newtheorem{lemma}{Lemma}
\newtheorem{theorem}{Theorem}
\newtheorem{assumption}{Assumption}
\def\##1\#{\begin{align}#1\end{align}}
\def\$#1\${\begin{align*}#1\end{align*}}
\newcommand{\cN}{\mathcal{N}}
\newcommand{\CC}{\mathbb{C}}
\newcommand{\EE}{\mathbb{E}}
\newcommand{\NN}{\mathbb{N}}
\newcommand{\PP}{\mathbb{P}}
\newcommand{\RR}{\mathbb{R}}
\numberwithin{equation}{section}
\title{Testing for large-dimensional covariance matrix under differential privacy}
 \author{
Shiwei Sang\thanks{Shiwei Sang and Yicheng Zeng are co-first authors of this work.}
\thanks{School of Mathematics and Statistics, Xi’an Jiaotong University, Xi'an, China.}
\and
Yicheng Zeng\footnotemark[1]
\thanks{School of Science, Sun Yat-sen University, Shenzhen, China.}
\and
Xuehu Zhu\thanks{Xuehu Zhu and Shurong Zheng are co-corresponding authors.}
\footnotemark[2]
\and
Shurong Zheng\footnotemark[4]
\thanks{School of Mathematics and Statistics, Northeast Normal University, Changchun, China.}
}
\begin{document}

\maketitle
	
\begin{abstract}

The increasing prevalence of high-dimensional data across various applications has raised significant privacy concerns in statistical inference. In this paper, we propose a differentially private integrated test statistic for testing large-dimensional covariance structures, enabling accurate statistical insights while safeguarding privacy. First, we analyze the global sensitivity of sample eigenvalues for sub-Gaussian populations, where our method bypasses the commonly assumed boundedness of data covariates. For sufficiently large sample size, the privatized statistic guarantees privacy with high probability. Furthermore, when the ratio of dimension to sample size, $d/n \to y \in (0, \infty)$, the privatized test is asymptotically distribution-free with well-known critical values, and detects the local alternative hypotheses distinct from the null 
at the fastest rate of $1/\sqrt{n}$. Extensive numerical studies on synthetic and real data showcase the validity and powerfulness of our proposed method. 

\textbf{Keywords:} Privacy protection, random matrix theory, likelihood ratio statistics. 

\end{abstract}
\newpage
\tableofcontents

\newpage

\section{Introduction}
	
The covariance matrix plays a central role in multivariate statistical analysis and has long been regarded the cornerstone of many fundamental statistical methods~\citep{0An,Hardle2019, mertler2021advanced}. With the emergence of high-dimensional data in a wide range of contemporary applications, such as genomics, bio-informatics, image processing, financial economics and social sciences~\citep{fan2014challenges}, statistical inference for large-dimensional covariance structures has gradually become one of the most prominent research highlights.
	
With advances in machine learning and data mining, there has been an increasing availability of large datasets that contain sensitive information. Conventional statistical inference emphasizes accuracy, but often neglects individual privacy, thereby increasing the risk of personal information disclosure. \textit{Differential privacy} (DP), initially proposed in~\citep{dwork2006calibrating, dwork2006our}, has gradually become the most widely adopted privacy criterion. Their seminal work provides a general and flexible framework for privacy problems and has become a commonly used standard in many realistic applications~\citep{erlingsson2014rappor, ding2017collecting, team2017learning}. In recent years, DP-based inference has drawn significant attention, enabling accurate statistical insights while making it challenging for attackers to infer personal information from published statistics.

Although much of research has focused on estimation problems~\citep{cai2024optimal,cai2021cost,amin2019differentially,liu2022dp}, hypothesis testing within the differential privacy framework, especially in high-dimensional settings, remains underexplored~\citep{narayanan2022private, canonne2020private, huang2022comparison}.  
To the best of our knowledge, DP-based testing for large-dimensional covariance structures has not yet been investigated. Driven by this research gap, this paper focuses on testing the large-dimensional covariance matrix under differential privacy constraints. 
More specifically, we investigate the following simple global hypothesis \eqref{simple hypothesis} under DP via independent samples $\{\mathbf{x}_i\}_{i = 1}^n$ drawn from the population $\mathbf{x}$:
\begin{equation}\label{simple hypothesis}
H_0: \mathbf{\Sigma} = \mathbf{I}_d \quad {\rm versus} \quad H_1: \mathbf{\Sigma} \neq \mathbf{I}_d,
\end{equation}
where the dimension $d$ and sample size $n$ are comparable in the sense that $y_n:= d/n \to y \in (0,\infty)$. Here $\mathbf{\Sigma}$ and $\mathbf{I}_d$ denote the population covariance matrix and identity matrix, respectively. It is worth noting that testing $H_0: \mathbf{\Sigma} = \mathbf{\Sigma}_0$ against $H_1: \mathbf{\Sigma} \neq \mathbf{\Sigma}_0$ for some given $\mathbf{\Sigma}_0$ can always be reduced to problem \eqref{simple hypothesis} by the transformation $\mathbf x \mapsto \mathbf{\Sigma}_0^{-1/2}\mathbf{x}$.
   
In the classical literature of low dimension and large sample size, many testing methods for covariance matrix have been developed in multivariate statistics~\citep{0An,Hardle2019,mertler2021advanced}. For instance, under the normality assumption, a frequently-used statistic for testing $H_0: \mathbf{\Sigma} = \mathbf{I}_d$ is the likelihood ratio (LR) statistic, which is defined by
\begin{equation}\label{LRT}
L_{n, d}:=n \cdot \sum_{1\leq i \leq d}\left\{{\lambda}_i - \log(\lambda_i)-1\right\},
\end{equation}
where $\{\lambda_i\}_{i = 1}^d$ are eigenvalues of the sample covariance matrix $\mathbf{S_x}:= n^{-1}\sum_{i=1}^n (\mathbf{x}_i - \bar{\mathbf{x}})(\mathbf{x}_i - \bar{\mathbf{x}})^{\top}$ with $\bar{\mathbf{x}}: = n^{-1}\sum_{i=1}^n \mathbf{x}_i$ denoting the sample mean. It is known that $L_{n,d}$ weakly converges to the $\chi^2_{d(d+1)/2}$ distribution as $n \to \infty$ under $H_0$. However, almost all conventional testing methods, including likelihood ratio tests, fail when the dimension grows with the sample size~\citep{ledoit2002some}. To deal with the high-dimensional issue,~\citep{bai2009corrections} proposed a corrected likelihood ratio (CLR) statistic, which is defined as
\begin{equation}\label{CLR}
\widetilde{L}_{n,d}:= \sum_{1\leq i \leq d}\left\{{\lambda}_i - \log(\lambda_i)-1 - F^{y_n}\right\},
\end{equation}
where $F^{y_n} := 1+(1-y_n)\log (1-y_n)/y_n$. With the help of techniques from modern random matrix theory (RMT), \citep{bai2009corrections} derived its limiting null distribution when $y_n \to y \in (0,1)$ without normality assumption, thus yielding a valid test.

In the literature of DP-based hypothesis testing, it is common practice to use randomized mechanisms to extend non-private tests to their private counterparts. It is worth noting that either the LR statistic \eqref{LRT} or its corrected version \eqref{CLR} only relies on the eigenvalues of $\mathbf{S_x}$ without requiring any additional information. Hence, given the dataset, the statistics for testing \eqref{simple hypothesis} can be obtained by performing a deterministic query, with the sample eigenvalues as the output. Inspired by this, we propose a privatized test statistic for  problem \eqref{simple hypothesis}, which is given by
\begin{equation}\label{LRTdp}
L_{1}^{\text{dp}}:= \frac{1}{K} \cdot \sum_{1\leq i\leq K}\left(\left|\tilde\lambda_i\right| - \log \left| \tilde \lambda_i\right|-1\right),
\end{equation}
where $K: = \min\{d,n\}$ and $\{\tilde \lambda_i\}_{i=1}^K$ are the privatized eigenvalues of $\mathbf{S_x}$.  

Under the classical parametric testing framework, the likelihood ratio test is widely regarded as the most powerful test according to the \textit{Neyman-Pearson} lemma~\citep{lehmann2008testing}. However, this optimality relies on specific assumptions, such as fully known distributions and fixed dimensions. In high-dimensional settings, the parameter space increases rapidly, and the underlying distributions may be partially or entirely unspecified, making the conditions of the Neyman-Pearson lemma inapplicable. Furthermore, privacy constraints often require adding noise or applying other modifications to the data, which can distort the likelihood functions and undermine the optimality of the likelihood ratio test. To address these challenges and enhance the reliability of test results, we complement the statistic \eqref{LRTdp} with two additional privatized statistics, based on quadratic loss and absolute deviation loss, defined as
\begin{equation}\label{L2dp} L_{2}^{\mathrm{dp}} := \frac{1}{K} \cdot \sum_{1 \leq i \leq K} \left|\tilde{\lambda}_i - 1\right|^2,  \end{equation}
and
\begin{equation}\label{L3dp} L_{3}^{\mathrm{dp}} := \frac{1}{K} \cdot \sum_{1 \leq i \leq K} \left|\tilde{\lambda}_i - 1\right|,
\end{equation}
respectively. \citep{zheng2019hypothesis} indicates that the quadratic loss test appears to have a higher power than the likelihood ratio test in specific settings. This current work proposes an integrated strategy that combines the strengths of multiple loss functions. This integrated statistic allows for a more comprehensive assessment of deviations under alternatives, and presents the robustness of the testing procedure in high-dimensional and privacy-constrained scenarios.

In the design of the privatized statistics (\ref{LRTdp}-(\ref{L3dp}), the privatization for sample eigenvalues can be achieved by some classical DP mechanisms, including \textit{Laplacian} mechanism and \textit{Gaussian} mechanism~\citep{dwork2014algorithmic}. 
In this process, the so-called \textit{global sensitivity} plays a crucial role, as it determines the scale of noise injected into the original sample eigenvalues, which is essential for both privacy guarantees and the accuracy of statistical inference. While most existing work concentrated on testing for discrete or bounded populations~\citep{ding2018comparing,dunsche2022multivariate,huang2022comparison,gaboardi2016differentially}, which ensures a finite noise scale, we analyze the global sensitivity of sample eigenvalues for unbounded data. Our theoretical results confirm that when the sample size is sufficiently large, the privatized statistics (\ref{LRTdp})-(\ref{L3dp}) ensure privacy protection with high probability. Given a properly calibrated noise scale, we can establish a general framework for testing covariance structures while preserving differential privacy.

\subsection{Related work and discussion}
In recent years, there has been extensive research on hypothesis testing under privacy constraints. Under the framework of differential privacy, numerous privatized tests have been proposed as options to classical inference methods in order to address privacy concerns, such as mean tests~\citep{ding2018comparing, dunsche2022multivariate}, tests for distributions~\citep{lam2022minimax,dubois2019goodness, canonne2019structure},  tests for independence~\citep{wang2015revisiting, gaboardi2016differentially,couch2019differentially}, tests in linear regression models~\citep{sheffet2017differentially,wang2018revisiting, alabi2022differentially, alabi2022hypothesis}, and some kernel-based two-sample tests~\citep{ kim2023differentially, raj2019differentially}. 
However, no testing procedure for covariance structures under DP is provided therein. To bridge this gap, a natural approach is to introduce perturbations to existing tests using classical DP mechanisms. This, however, raises two key challenges. First, many testing methods for large-dimensional matrix have been developed, which types of statistics should be chosen as the benchmark? Second, in the high-dimensional and unbounded settings, how can we make the trade-off between privacy assurance and statistical accuracy?

In the literature of RMT, many testing methods for large-dimensional covariance matrix have been developed~\citep{srivastava2005some, srivastava2012testing, birke2005note,cai2013optimal,zheng2015substitution,zheng2019hypothesis}. For testing whether a covariance matrix equals the identity matrix in (\ref{simple hypothesis}), there has been extensive research based on the spectrum of the sample covariance matrix. This problem  was studied in~\citep{johnstone2001distribution} focusing on spiked-type alternatives, where they particularly addressed the behavior of the largest eigenvalue of sample covariance matrix under $H_0: \mathbf{\Sigma} = \mathbf{I}_d$. Under normality assumption, the distribution of the largest eigenvalue converges to the Tracy-Widom law as $d/n \to y \in (0,\infty)$. This work was later considered in sub-Gaussian cases~\citep{soshnikov2002note} and further generalized to  broader settings with moment conditions~\citep{peche2009universality}. While LR statistics are no longer applicable as the dimension grows with the sample size, \citep{bai2009corrections} theoretically analyzed the failure of the LR statistic \eqref{LRT} and proposed the corrected version \eqref{CLR}, deriving its asymptotic null distribution with some moment conditions. Their work was further studied in~\citep{wang2013identity, jiang2012likelihood,zheng2015substitution}, and has been adapted for contemporary applications, such as cognitive radio networks~\citep{taherpour2024large} and multimodal imaging data~\citep{chang2024statistical}. These spectrum-based statistics solely rely on sample eigenvalues, thereby simplifying privacy analysis. In parallel, the eigenvectors can be safeguarded using even more secure methods, which ensures  comprehensive  privacy protection across the dataset. With this dual focus on eigenvalue analysis and enhanced eigenvector security, we construct the privatized statistics (\ref{LRTdp})-(\ref{L3dp}) with the aid of classical DP mechanisms.

To obtain accurate statistical inference while preserving privacy, the amount of the introduced noise must be meticulously controlled. There are extensive studies concentrating on bounded or truncated data for other statistical tasks~\citep{dunsche2022multivariate, amin2019differentially, dwork2014analyze, chaudhuri2011differentially, liu2022dp, blum2005practical,ding2018comparing}, which ensures finite global sensitivity, and thus simplifies the determination of the noise scale. Inspired by some recent studies~\citep{cai2024optimal, cai2021cost}, this work characterizes a probabilistic upper bound for the global sensitivity of sample eigenvalues in high-dimensional sub-Gaussian regime, allowing our proposed tests to adapt to a wide range of data scenarios.

\subsection{Organization}
The rest of this paper is organized as follows. Section \ref{section:2} provides a brief overview of differential privacy, along with some relevant concepts and results from random matrix theory. In Section \ref{section:3}, we develop the privatized test, and study its privacy guarantees and its asymptotic properties.  Section \ref{section:4} presents simulation studies and real-data applications to illustrate the performance of our proposed method.  Section \ref{section:5} concludes with a discussion of the method's strengths and limitations, as well as potential directions for future research. Proofs of our main results and technical lemmas are provided in the Appendices \ref{app: A}-\ref{app: B}.

\section{Notation and preliminaries}
\label{section:2}
\subsection{Notation}
Throughout this paper, we use $c,c^{\prime}, c_1,c_2$ to denote universal constants, which may vary from line to line. For a vector $\mathbf{x}$, $\mathbf{x}^{\top}
$ and $ \|\mathbf{x}\|_p$ denote the transpose and $\ell_p$ norm of $\mathbf{x}$, respectively. For a $d\times d$ Hermitian matrix $\mathbf{A}$, $\mathbf{A}^{\top}, \mathrm{Tr}(\mathbf{A})$, and $\|\mathbf{A}\|$ denote the transpose, trace and spectral norm of $\mathbf{A}$, respectively. Let $\lambda_1(\mathbf{A}) \geq \cdots \geq \lambda_d(\mathbf{A})$  denote the ordered eigenvalues of 
$\mathbf{A}$. For a random variable (or vector) $X$, $\mathbb E(X), \mathrm{Var}(X)$, and $ \mathrm{Cov}(X)$ denote the expectation, variance and covariance of $X$, respectively. The notation ``$\xrightarrow{ D}$" denotes  convergence in distribution. The notation ``$\mathrm{Lap}(\mu,\sigma)$'' denotes  Laplacian distribution with the location parameter $\mu$ and the scale parameter $\sigma$. The indicator function of an event $\mathcal{A}$ is denoted as $\mathbf{1}_{\{\mathcal{A}\}}$. The cardinality of a set $S$ is denoted as $\sharp_{\{S\}}$. 

For two quantities $a$ and $b$, $a\vee b \equiv \max\{a,b\}$ denotes the larger of the two, while $a \wedge b \equiv \min\{a,b\}$ denotes the smaller. For a random sequence $\{X_n\}_{n\geq1}$ and deterministic sequences $\{a_n\}_{n\geq1}, \{b_n\}_{n\geq1}$, $X_n = O_p(a_n)$
means $X_n/a_n$ is bounded in probability and $X_n = o_p(a_n)$
means $X_n/a_n$ converges to zero in probability. Furthermore, $a_n \lesssim b_n$ (or $a_n = O(b_n)$) means $|a_n| \leq c_1|b_n|$ for some constant $c_1>0$,   $a_n \gtrsim b_n$ means $|a_n| \geq c_2|b_n|$ for some constant $c_2>0$, $a_n \asymp b_n$ means $c_1 |b_n| \leq |a_n| \leq c_2 |b_n|$ for some constants $c_1,c_2 >0$, and $a_n = o(b_n)$ means $\lim_{n \to \infty} a_n/b_n =0$. 

\subsection{Review on differential privacy}
This subsection provides a brief overview of fundamental concepts and properties of differential privacy. A matrix $\mathbf{X}\in \RR^{n\times d}$ consisting of $n$ samples from a $d$-dimensional population $\mathbf{x}$ is called a dataset (or design matrix). Two datasets $\mathbf{X}$ and $\tilde{\mathbf{X}}$ are called \textit{neighboring} if they differ by only one sample, denoted as $\mathbf{X} \simeq \tilde{\mathbf{X}}$. 
A \textit{query} $f: \mathbb R^{n \times d} \to \mathcal R$ is a function that maps the dataset to a specified output space $\mathcal{R}$, while a \textit{randomized mechanism} $\mathcal M:\mathbb R^{n \times d} \to \mathcal R$ provides an approximation of the query result. The definition of differential privacy is as follows.
     
     \begin{definition}[Differential privacy~\citep{dwork2014algorithmic}]
     	\label{def:dp}
     	A randomized mechanism $\mathcal{M}:\mathbb{R}^{n \times d} \to \mathcal{R}$ is called $(\varepsilon, \delta)$-differentially private (($\varepsilon,\delta$)-DP) if for any $\varepsilon >0,\delta \in [0,1)$, the inequality
     	\begin{equation}
     		\mathbb{P}\left\{\mathcal{M}(\mathbf{X}) \in S\right\} \leq e^{\varepsilon} \cdot \mathbb{P}\left\{\mathcal{M}(\tilde{\mathbf{X}}) \in S\right\} +\delta
     	\end{equation}
      	 holds for any neighboring datasets $\mathbf{X}\simeq \tilde{\mathbf{X}}$ and any measurable subset $S \subset \mathcal{R}$.
     \end{definition}
     
In the above definition, $\varepsilon$ and $\delta$ are parameters that quantify privacy loss, with smaller values indicating less privacy loss and stronger privacy guarantees. As stated in the~\citep{dwork2014algorithmic}, $\varepsilon$ is usually considered as a small constant, and $\delta$ is ideally set to be less than the reciprocal of any polynomial in the sample size.  As the sample size grows, the possibility of privacy breaches diminishes rapidly, which maintains a high level of privacy guarantees, particularly in large datasets. 
When $\delta = 0$, $(\varepsilon,0)$-differential privacy is simply called $\varepsilon$-DP (or pure-DP). Intuitively, $(\varepsilon, \delta)$-DP provides an approximation of $\varepsilon$-DP, allowing the randomized mechanism to preserve $\varepsilon$-DP with high probability about $1-\delta$.
     
Several mechanisms for achieving DP have been extensively studied~\citep{dwork2014algorithmic}, 
with the most common approach being the addition of random noise to query results. A key consideration in DP research is determining the appropriate noise scale, which plays a crucial role in balancing privacy protection and data utility. This noise scale is directly linked to the concept of \textit{global sensitivity}, which quantifies the maximum possible change in a query's output when applied to neighboring datasets. Formally, the global sensitivity of $f$ is defined by
     \begin{equation}\label{sensitivity}
     	\Delta f := \sup_{\mathbf{X} \simeq \tilde{\mathbf{X}}} \left\|f(\mathbf{X})-f(\tilde{\mathbf{X}})\right \|,
     \end{equation}
Especially, when $\| \cdot \|$ represents the $\ell_1$ norm defined on $\mathcal{R}=\mathbb{R}^s$ with $s \geq 1$, the $\ell_1$ sensitivity of $f$, denoted as $\Delta_1f$, determines the noise scale of \textit{Laplacian} mechanism.

     \begin{lemma}[Laplacian mechanism~\citep{dwork2014algorithmic}]\label{lemma Laplacian mechanism}
     	Assume that $f:  \mathbb{R}^{n \times d} \to \mathbb{R}^s$ is a $s$-dimensional vector-valued query. Given $\varepsilon>0$, the Laplacian mechanism is defined by
     	\begin{equation}
     		\label{laplacian mechanism}
     		\mathcal{M}_{\mathrm{Lap}}(\mathbf{X}):=f(\mathbf{X})+(\ell_1,\cdots,\ell_s)^{\top},
     	\end{equation}
     	where $\{\ell_i\}_{i = 1}^s$ are independently drawn from the central Laplacian distribution with scale parameter of $\Delta_1 f /\varepsilon$. Furthermore, $\mathcal{M}_{\mathrm{Lap}}$ is $\varepsilon$-differentially private.
     \end{lemma}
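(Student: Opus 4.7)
The plan is to verify the definition of $\varepsilon$-differential privacy directly by comparing the output densities on neighboring datasets. Since the noise coordinates $\ell_1,\dots,\ell_s$ are independent and each marginal is a centered Laplacian with scale $\Delta_1 f / \varepsilon$, the random vector $\mathcal{M}_{\mathrm{Lap}}(\mathbf{X})$ has a joint density at $z = (z_1,\dots,z_s)^\top \in \mathbb{R}^s$ given by
$$
p_{\mathbf{X}}(z) \;=\; \prod_{i=1}^{s} \frac{\varepsilon}{2\,\Delta_1 f}\,\exp\!\left(-\frac{\varepsilon\,\bigl|z_i - f(\mathbf{X})_i\bigr|}{\Delta_1 f}\right),
$$
and similarly $p_{\tilde{\mathbf{X}}}(z)$ with $f(\tilde{\mathbf{X}})$ in place of $f(\mathbf{X})$. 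So as a first step I would write the pointwise ratio
$$
\frac{p_{\mathbf{X}}(z)}{p_{\tilde{\mathbf{X}}}(z)} \;=\; \exp\!\left(\frac{\varepsilon}{\Delta_1 f}\sum_{i=1}^{s}\Bigl(\bigl|z_i - f(\tilde{\mathbf{X}})_i\bigr| - \bigl|z_i - f(\mathbf{X})_i\bigr|\Bigr)\right).
$$

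Next I would bound the exponent by the triangle inequality, noting coordinatewise that $|z_i - f(\tilde{\mathbf{X}})_i| - |z_i - f(\mathbf{X})_i| \leq |f(\mathbf{X})_i - f(\tilde{\mathbf{X}})_i|$, and summing yields $\sum_{i=1}^{s}|f(\mathbf{X})_i - f(\tilde{\mathbf{X}})_i| = \|f(\mathbf{X}) - f(\tilde{\mathbf{X}})\|_1$. By the definition of the $\ell_1$ global sensitivity in \eqref{sensitivity} applied to neighboring $\mathbf{X}\simeq\tilde{\mathbf{X}}$, this last quantity is at most $\Delta_1 f$, so the ratio is dominated pointwise by $e^{\varepsilon}$.

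Finally, I would lift this pointwise density bound to a set bound by integration: for any measurable $S\subset\mathbb{R}^s$,
$$
\mathbb{P}\bigl\{\mathcal{M}_{\mathrm{Lap}}(\mathbf{X}) \in S\bigr\} \;=\; \int_S p_{\mathbf{X}}(z)\,dz \;\leq\; e^{\varepsilon}\int_S p_{\tilde{\mathbf{X}}}(z)\,dz \;=\; e^{\varepsilon}\,\mathbb{P}\bigl\{\mathcal{M}_{\mathrm{Lap}}(\tilde{\mathbf{X}}) \in S\bigr\},
$$
which matches Definition \ref{def:dp} with $\delta = 0$. By symmetry of the roles of $\mathbf{X}$ and $\tilde{\mathbf{X}}$, the same bound holds in the reversed direction, and the proof is complete.

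This is a textbook statement, so I do not expect any genuine obstacle; the only subtle points are (i) using independence of the noise coordinates to factorize the joint density, and (ii) applying the triangle inequality coordinatewise before invoking the $\ell_1$ sensitivity bound, rather than trying to bound the ratio in some joint fashion. Care is also needed that the sensitivity is finite (here assumed implicitly through $\Delta_1 f$ being well-defined), since otherwise the Laplacian scale is degenerate; the subsequent sections of the paper are devoted precisely to controlling this quantity with high probability for the eigenvalue query.
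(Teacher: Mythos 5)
Your proof is correct and is precisely the standard density-ratio argument for the Laplace mechanism from the cited reference \citep{dwork2014algorithmic}; the paper itself states this lemma without proof, so there is nothing to diverge from. Your closing caveat about finiteness of $\Delta_1 f$ is also well placed, since that is exactly the issue the paper's Theorems \ref{upper bound of eigenvalue sensitivity} and \ref{privacy guarantee} address for the unbounded eigenvalue query.
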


     \begin{remark}[Gaussian mechanism]
         To achieve $(\varepsilon,\delta)$-DP, one can use the Gaussian mechanism~\citep{dwork2014algorithmic,dwork2006our}, which is another frequently-used mechanism and applies Gaussian noise scaled to the $\ell_2$ sensitivity. As shown in~\citep{acharya2018differentially}, any algorithm preserving $(\varepsilon + \delta,0)$-DP also satisfies $(\varepsilon,\delta)$-DP, while the former can be achieved straightforwardly using Laplacian mechanism. Therefore, for the reminder of the paper, we focus on testing procedures implemented via the Laplacian mechanism.
     \end{remark}

In practical applications, private algorithms are often designed in a modular manner. An appealing feature of DP is its \textit{post-processing} property, which allows complex private algorithms to be built from simpler ones without compromising privacy guarantees.

     \begin{lemma}[Post-processing~\citep{dwork2014algorithmic}]
     	\label{Post-processing}
     	Assume that $\mathcal{M}: \mathbb{R}^{n \times d } \to \mathcal{R}$ is $(\varepsilon,\delta)$-differentially private and $\phi:\mathcal{R} \to \mathcal{R}^{\prime}$ is a measurable mapping from $\mathcal{R}$ to $\mathcal{R}^{\prime}$, then the mechanism $\phi \circ \mathcal{M}:\mathbb{R}^{n \times d} \to \mathcal{R}^{\prime}$ remains $(\varepsilon,\delta)$-differentially private.
     \end{lemma}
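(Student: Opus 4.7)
The plan is to reduce the differential privacy of $\phi \circ \mathcal{M}$ directly to that of $\mathcal{M}$, by using the measurability of $\phi$ to pull back test sets from $\mathcal{R}^{\prime}$ into $\mathcal{R}$. The key observation is that for any measurable $S^{\prime}\subset\mathcal{R}^{\prime}$, the preimage $S := \phi^{-1}(S^{\prime})$ is a measurable subset of $\mathcal{R}$, and the events $\{(\phi\circ\mathcal{M})(\mathbf{X})\in S^{\prime}\}$ and $\{\mathcal{M}(\mathbf{X})\in S\}$ coincide pointwise. Once this identity is in hand, the DP guarantee on $\mathcal{M}$ applied to the set $S$ delivers the required inequality with the same parameters $(\varepsilon,\delta)$.

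First I would fix arbitrary neighboring datasets $\mathbf{X}\simeq\tilde{\mathbf{X}}$ and an arbitrary measurable subset $S^{\prime}\subset\mathcal{R}^{\prime}$. Next I would invoke the hypothesis that $\phi:\mathcal{R}\to\mathcal{R}^{\prime}$ is measurable to conclude that $S:=\phi^{-1}(S^{\prime})$ lies in the $\sigma$-algebra of $\mathcal{R}$, so that Definition~\ref{def:dp} can legitimately be applied to $\mathcal{M}$ on $S$. Then, using the elementary set identity $(\phi\circ\mathcal{M})^{-1}(S^{\prime})=\mathcal{M}^{-1}(S)$, I would chain
\begin{equation*}
\mathbb{P}\{(\phi\circ\mathcal{M})(\mathbf{X})\in S^{\prime}\}
=\mathbb{P}\{\mathcal{M}(\mathbf{X})\in S\}
\leq e^{\varepsilon}\cdot\mathbb{P}\{\mathcal{M}(\tilde{\mathbf{X}})\in S\}+\delta
=e^{\varepsilon}\cdot\mathbb{P}\{(\phi\circ\mathcal{M})(\tilde{\mathbf{X}})\in S^{\prime}\}+\delta,
\end{equation*}
where the middle inequality is exactly the $(\varepsilon,\delta)$-DP property of $\mathcal{M}$ on the measurable set $S$. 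Since $\mathbf{X},\tilde{\mathbf{X}}$ and $S^{\prime}$ were arbitrary, this is precisely the $(\varepsilon,\delta)$-DP condition for $\phi\circ\mathcal{M}$.

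There is essentially no obstacle in this argument: the proof is a one-line change-of-variable for probability measures, and the only point requiring any care is ensuring that $\phi^{-1}(S^{\prime})$ is measurable so that Definition~\ref{def:dp} is applicable. That is precisely what the measurability hypothesis on $\phi$ provides. I note in passing that the same scheme extends to the case where the post-processing map $\phi$ is itself randomized and independent of $\mathcal{M}$, by conditioning on the output of $\mathcal{M}$ and integrating the deterministic bound against the law of $\phi$; however, this extension is not needed for the statement as formulated.
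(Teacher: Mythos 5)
Your proof is correct and is exactly the standard pull-back argument for deterministic post-processing: the paper does not prove this lemma itself but cites \citep{dwork2014algorithmic}, whose proof proceeds the same way (preimage of the test set under the measurable map, then apply the DP guarantee of $\mathcal{M}$ to that preimage). Your closing remark about extending to randomized $\phi$ by conditioning on the output of $\mathcal{M}$ is also the route the cited reference takes for the general case, though it is not needed for the statement as given.
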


\subsection{Preliminaries in random matrix theory}
     
In this subsection, we introduce some concepts and results from random matrix theory. 
     
\begin{definition}[Empirical spectral distribution, ESD]\label{ESD}
For any $d \times d$ Hermitian matrix $\mathbf{A}$ with eigenvalues $\lambda_1(\mathbf{A}) \geq \cdots \ge \lambda_d(\mathbf{A})$, the empirical spectral distribution of $\mathbf{A}$ is defined as
\begin{equation}\label{esd}
F_{d}^{\mathbf{A}}(t):= \frac{1}{d}\sum_{i=1}^d \mathbf{1}_{\{\lambda_i(\mathbf{A}) \leq t\}}.
\end{equation}
\end{definition}
     
\begin{definition}[Limiting spectral distribution, LSD]\label{LSD}
For any $d \times d$ Hermitian matrix $\mathbf{A}  $ with empirical spectral distribution $F_{d}^\mathbf{A}(t)$, if $F_d^\mathbf{A}(t)$ weakly converges to a measure as $d \to \infty$, denoted as $F^\mathbf{A}(t)$, then the measure $F^\mathbf{A}(t)$ is called the limiting spectral distribution of $\mathbf{A}$.
\end{definition}

A fundamental result~\citep{marchenko1967distribution} in random matrix theory tells us that for a normalized population $\mathbf{z}$, the LSD of the sample covariance matrix $
\mathbf{S_z} =n^{-1} \sum_{i=1}^n(\mathbf{z}_i-\bar{\mathbf{z}})(\mathbf{z}_i-\bar{\mathbf{z}})^{\top}$ is the well-known Marcenko-Pastur (M-P) law in the high-dimensional regime $y_n \to y \in (0,\infty)$. Here, the M-P law $F_{y}(t)$ has a density function
\begin{equation}
\label{MP density}
f_{y}(t) = \frac{1}{2\pi yt} \sqrt{(t-\lambda_{-})(\lambda_{+}-t)} \cdot \mathbf{1}_{\{\lambda_{-} \leq t\leq \lambda_{+} \}}
\end{equation}
and has a point mass $1-1/y$ at the origin if $y >1$, where $\lambda_{-} = (1-\sqrt{y})^2$ and $\lambda_{+} = (1+\sqrt{y})^2$.

\begin{lemma}[Marcenko-Pastur Law; Theorem~2.9 in~\citep{yao2015sample}]\label{mp_law_1}
Suppose that the entries of the design matrix $\mathbf{Z}$ from $\mathbf{z}$ are independently and identically distributed (i.i.d.) random variables with mean zero and unit variance, and $y_n \to y \in (0,\infty)$. Then almost
surely, $F_d^{\mathbf{S_z}}$ weakly converges to the M-P law defined by (\ref{MP density}).
\end{lemma}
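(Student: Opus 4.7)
The plan is to use the Stieltjes transform method. Writing $m_n(z) := d^{-1}\mathrm{Tr}\{(\mathbf{S_z}-z\mathbf{I}_d)^{-1}\}$ for $z \in \CC^+$, the continuity theorem for Stieltjes transforms reduces almost-sure weak convergence of $F_d^{\mathbf{S_z}}$ to the M-P law $F_y$ to the pointwise almost-sure convergence $m_n(z) \to m_y(z)$ on $\CC^+$, where $m_y$ is the Stieltjes transform of $F_y$ and is characterized as the unique solution in $\CC^+$ of the self-consistent equation
\$
m_y(z) = \frac{1}{1-y-z-yzm_y(z)}.
\$
Since $m_n$ is analytic and uniformly bounded by $1/\mathrm{Im}(z)$ on compacts of $\CC^+$, it suffices to establish this convergence on a countable dense subset.

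Because only a finite second moment is assumed on the entries, the next step is truncation, centralization, and rescaling: replace each entry $z_{ij}$ by a bounded surrogate $\hat z_{ij}$ obtained by truncating at $\eta_n\sqrt n$, recentering to mean zero, and rescaling to unit variance, with $\eta_n \downarrow 0$ chosen slowly. Using rank and Hoffman--Wielandt-type eigenvalue perturbation inequalities, together with Chebyshev control of the truncation error and Bernstein bounds on the modified mean and variance, one shows that the L\'evy distance between the ESDs of $\mathbf{S_z}$ and its modified counterpart tends to zero almost surely. After this reduction, we may assume the entries are uniformly bounded by $\eta_n\sqrt{n}$, so all higher moments are finite and grow in a controlled way.

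Third, I would derive an approximate self-consistent equation for $\EE m_n(z)$. Writing $\mathbf{S_z} = n^{-1}\sum_{k=1}^n \mathbf{z}_k \mathbf{z}_k^{\top}$ (the centering by $\bar{\mathbf{z}}$ is a rank-one perturbation not affecting the LSD by interlacing), apply the Sherman--Morrison identity to each rank-one summand to express the diagonal entries of the resolvent in terms of quadratic forms $n^{-1}\mathbf{z}_k^{\top}\mathbf{B}_k\mathbf{z}_k$ with $\mathbf{B}_k$ independent of $\mathbf{z}_k$. The standard quadratic-form concentration inequality gives $n^{-1}\mathbf{z}_k^{\top}\mathbf{B}_k\mathbf{z}_k - n^{-1}\mathrm{Tr}(\mathbf{B}_k) = o_p(1)$, and because $\mathbf{B}_k$ is a minor of the full resolvent, $n^{-1}\mathrm{Tr}(\mathbf{B}_k) = y_n m_n(z) + o(1)$. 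Summing over $k$ and taking expectations yields that $\EE m_n(z)$ satisfies the fixed-point equation above up to $o(1)$ error, and uniqueness of the solution in $\CC^+$ identifies the limit as $m_y(z)$.

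Finally, to upgrade in-probability convergence to almost-sure convergence, decompose $m_n(z) - \EE m_n(z)$ as a sum of martingale differences with respect to the filtration $\sigma(\mathbf{z}_1,\ldots,\mathbf{z}_k)$; Burkholder's inequality applied to resolvent perturbations yields $\EE|m_n(z) - \EE m_n(z)|^{2p} = O(n^{-p})$ for any fixed $p$, and Borel--Cantelli closes the argument for each $z$ in a countable dense set. The main obstacle is the weak moment assumption: the truncation level $\eta_n$ must be chosen small enough that truncation and recentering do not affect the Stieltjes limit on compact subsets of $\CC^+$, yet large enough that $\hat z_{ij}$ retains enough higher moments for the quadratic-form concentration and the Burkholder bounds to hold uniformly in $n$. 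Propagating these errors through the \emph{nonlinear} fixed-point equation --- where a small perturbation can be amplified near its singularities on the boundary of the M-P support --- is the genuine technical crux of the proof.
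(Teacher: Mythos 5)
The paper offers no proof of this lemma at all --- it is imported verbatim as Theorem~2.9 of the cited monograph of Yao, Zheng and Bai --- and your outline reproduces the standard Stieltjes-transform argument used there (truncation/centralization/rescaling under the bare second-moment hypothesis, Sherman--Morrison plus quadratic-form concentration to obtain the self-consistent equation $m=\{1-y-z-yzm\}^{-1}$, and a martingale--Burkholder--Borel--Cantelli step to upgrade to almost sure convergence), so it is correct and follows essentially the same route as the source. One small correction of emphasis: for the \emph{global} law one only needs the approximate fixed-point equation at points $z$ with $\mathrm{Im}\, z\ge v_0>0$, where the fixed-point map is uniformly stable, so the ``amplification near the boundary of the M-P support'' that you identify as the technical crux is a genuine difficulty only for local laws and edge universality, not for the weak convergence of the ESD asserted here.
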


The Stieltjes transform provides a convenient way to represent and manipulate probability distributions, particularly in the context of RMT.

\begin{definition}[Stieltjes transform]\label{Stieltjes transform}
Let $P$ be a probability measure on $\mathbb R$. Its Stieltjes transform is defined by
\begin{align*}
m_P(z) := \int \frac{1}{x-z} \mathrm{d}P(x),\quad z \in \mathbb{C} \setminus \Gamma_{P},
\end{align*}
where $\Gamma_P$ denotes the support of $P$.
\end{definition}

Under certain conditions, one can reconstitute $P$ from its Stieltjes transformation $m_P(z)$ by the {\it inverse formula}. For any two continuity points $a<b$ of $P$, it holds that 
\#\label{inverse_formula}
P([a,b])=\lim_{\nu\rightarrow 0+}\frac{1}{\pi}\int_a^b {\rm Im}\ m_P(x+\mathbf{i} \nu)\mathrm{d}x,
\#
where $\mathrm{Im}\, z$ denotes the imaginary part of $z$, and $\mathbf{i}$ is the imaginary unit.

With these notions, the M-P law can be generalized to more general settings as follows.
\begin{lemma}[Generalized Marcenko-Pastur Law; Theorem~2.14 in~\citep{yao2015sample}]\label{Generalized_MP}
Consider the sample covariance matrix $\mathbf{S_x}=n^{-1}\mathbf{\Sigma}^{1/2}\mathbf Z^\top \mathbf Z\mathbf{\Sigma}^{1/2}$ for some positive semidefinite matrix $\mathbf{\Sigma}$, where $\mathbf{Z}$ is defined in Lemma \ref{mp_law_1}. If the ESD of $\mathbf{\Sigma}$,  denoted by $Q_n$, weakly converges to a nonrandom probability measure $Q$, then almost surely, the ESD of $\mathbf{S_x}$ weakly converges to a nonrandom probability measure $F_{y,Q}$ when $y_n \to y \in (0,\infty)$, whose Stieltjes transform $m(z)$ is determined by the equation
\#\label{Stieltjes_equation_general_Sigma}
m_{y,Q}(z)\equiv m(z)=\int\frac{1}{t\{1-y-yzm(z)\}-z}\mathrm{d} Q(t),\quad z\in \CC^+.
\#
\end{lemma}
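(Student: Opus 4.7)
\medskip

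\noindent\textbf{Proof proposal.}
The plan is to prove convergence at the level of Stieltjes transforms: for each fixed $z \in \mathbb{C}^+$, I would show that $m_n(z) := \frac{1}{d}\operatorname{Tr}(\mathbf{S_x}-z\mathbf{I}_d)^{-1}$ converges almost surely to a deterministic limit $m(z)$, and that this $m(z)$ satisfies the Silverstein-type equation \eqref{Stieltjes_equation_general_Sigma}. Weak convergence of the ESD $F_d^{\mathbf{S_x}}$ to a probability measure $F_{y,Q}$ with Stieltjes transform $m(z)$ then follows from the standard continuity theorem for Stieltjes transforms together with the inverse formula \eqref{inverse_formula}. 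Hence the task splits into three standard ingredients: (i) almost sure concentration of $m_n(z)$ around its mean, (ii) derivation of a deterministic equation satisfied by $\mathbb{E}\,m_n(z)$ up to an $o(1)$ error, and (iii) uniqueness of the solution to the limiting equation in a suitable analytic class.

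For step (i), I would use the martingale decomposition that is standard in random matrix theory. Let $\mathbf{z}_k$ denote the $k$-th row of $\mathbf{Z}$, let $\mathcal{F}_k = \sigma(\mathbf{z}_1,\dots,\mathbf{z}_k)$, and write
\[
m_n(z) - \mathbb{E}\,m_n(z) \;=\; \sum_{k=1}^{n} \bigl(\mathbb{E}_k - \mathbb{E}_{k-1}\bigr)\bigl[m_n(z) - m_n^{(k)}(z)\bigr],
\]
where $m_n^{(k)}(z)$ is the Stieltjes transform of the leave-one-out matrix $\mathbf{S}_{(k)}$ obtained by deleting the $k$-th sample. Each martingale difference is $O(1/(nd))$ in spectral bound by the rank-one interlacing estimate $|\operatorname{Tr}\mathbf{G}(z) - \operatorname{Tr}\mathbf{G}_{(k)}(z)| \le 1/|\operatorname{Im} z|$. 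Burkholder's inequality together with Borel--Cantelli then gives $m_n(z) - \mathbb{E}\,m_n(z) \to 0$ almost surely along a countable dense subset of $\mathbb{C}^+$, extended to all of $\mathbb{C}^+$ by the uniform Lipschitz property of Stieltjes transforms on compact subsets.

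For step (ii), I would apply the Sherman--Morrison identity to the rank-one perturbation $\mathbf{S_x} = \mathbf{S}_{(k)} + \tfrac{1}{n}\mathbf{\Sigma}^{1/2}\mathbf{z}_k\mathbf{z}_k^\top\mathbf{\Sigma}^{1/2}$, which, after averaging $1 = \frac{1}{n}\sum_k \operatorname{tr}[\mathbf{S_x}\mathbf{G}(z)] + z\,m_n(z)\cdot \frac{d}{n}$ style manipulations, yields an identity of the schematic form
\[
\frac{1}{d}\operatorname{Tr}\bigl[\mathbf{\Sigma}\,(\mathbf{S_x}-z\mathbf{I}_d)^{-1}\bigr]
\;=\; \frac{1}{d}\sum_{j=1}^{d}\frac{\tau_j}{\tau_j\bigl(1-y_n-y_n z\,m_n(z)\bigr)-z} \;+\; \varepsilon_n(z),
\]
where $\{\tau_j\}$ are the eigenvalues of $\mathbf{\Sigma}$ and $\varepsilon_n(z)$ is an error involving quadratic forms $\frac{1}{n}\mathbf{z}_k^\top\mathbf{\Sigma}^{1/2}\mathbf{G}_{(k)}(z)\mathbf{\Sigma}^{1/2}\mathbf{z}_k$ minus their mean. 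The trace lemma (quadratic-form concentration for vectors with i.i.d.\ unit-variance entries) together with the bound $\|\mathbf{G}_{(k)}(z)\|\le 1/|\operatorname{Im} z|$ gives $\varepsilon_n(z)\to 0$ in $L^2$. Using the identity $\operatorname{Tr}[\mathbf{\Sigma}(\mathbf{S_x}-z\mathbf{I}_d)^{-1}] = d + z\operatorname{Tr}(\mathbf{S_x}-z\mathbf{I}_d)^{-1}$ (up to the $y_n$-factor entering via the companion transform) converts the left-hand side into $m_n(z)$ itself, producing a self-consistent equation that, in the limit $y_n\to y$ and $F_d^{\mathbf{\Sigma}}\Rightarrow Q$, becomes \eqref{Stieltjes_equation_general_Sigma}.

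For step (iii) I would invoke the classical argument of Silverstein: any solution $m(z)$ of \eqref{Stieltjes_equation_general_Sigma} mapping $\mathbb{C}^+$ to $\mathbb{C}^+$ with the normalization $|z m(z)|$ bounded as $\operatorname{Im} z \to \infty$ is unique, so the deterministic limit is well-defined and coincides with the Stieltjes transform of a probability measure $F_{y,Q}$. The main obstacle I anticipate is step (ii): carefully controlling the quadratic-form error $\varepsilon_n(z)$ uniformly enough to pass both the expectation limit and the a.s.\ concentration through to a single equation requires the trace lemma with explicit moment bounds, and one must track that the denominators $\tau_j(1-y_n-y_n z\,m_n(z))-z$ stay bounded away from zero on $\mathbb{C}^+$, which is exactly where the analytic properties of Stieltjes transforms (positive imaginary part on $\mathbb{C}^+$) become essential.
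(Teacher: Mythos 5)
This lemma is not proved in the paper: it is quoted verbatim as Theorem~2.14 of the cited reference (Yao, Zheng and Bai, 2015), so there is no in-paper argument to compare against. Your sketch is, however, a faithful outline of the classical Bai--Silverstein proof that the cited reference itself uses: almost-sure concentration of the Stieltjes transform via a leave-one-out martingale decomposition, derivation of the self-consistent (Silverstein) equation via Sherman--Morrison and the trace lemma for quadratic forms, and uniqueness of the analytic solution mapping $\CC^+$ to $\CC^+$; the passage from the companion transform $\underline{m}(z)=-(1-y)/z+y\,m(z)$ to the form \eqref{Stieltjes_equation_general_Sigma} via $1-y-yzm(z)=-z\underline{m}(z)$ is exactly how the equation in the statement arises. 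One small quantitative slip: the rank-one interlacing bound $|\mathrm{Tr}\,\mathbf{G}(z)-\mathrm{Tr}\,\mathbf{G}_{(k)}(z)|\le 1/\mathrm{Im}\,z$ makes each martingale difference of $m_n=d^{-1}\mathrm{Tr}\,\mathbf{G}$ of order $O(1/d)$, not $O(1/(nd))$; the sum of squared differences is then $O(n/d^2)=O(1/n)$ under $d/n\to y$, which still yields summable fourth moments via Burkholder and hence the Borel--Cantelli step, so the conclusion is unaffected.
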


\section{Privatized covariance matrix testing}
    \label{section:3} 
In this section, we present the privatized test tailored for the high-dimensional testing problem (\ref{simple hypothesis}). We begin with some assumptions for the rest of the paper.
     
\begin{assumption}[High dimensions]\label{high dimension assumption}
Assume that the dimension $d$ and the sample size $n$ both tend to infinity such that $y_n \to y$ for some constant $y \in (0,\infty)$. 
\end{assumption}

\begin{assumption}[Sub-Gaussian data]\label{sub gaussian assumption}
There exists some $d \times d$ positive semidefinite matrix $\mathbf{\Sigma}$ such that $\mathbf{x}=\mathbf{\Sigma}^{1/2}\mathbf{z}$, where $\mathbf{z}$ follows a sub-Gaussian distribution such that $\mathbb E(\mathbf{z})=\mathbf{0},{\rm Cov}(\mathbf{z}) = \mathbf{I}_d$ and
$$
\mathbb E \left\{ \exp\left( \boldsymbol{\alpha}^{\top}\mathbf z \right)\right\} \leq \exp\left( \frac{\|\boldsymbol{\alpha}\|_2^2 \,\sigma^2}{2}\right),\quad \forall\, \boldsymbol{\alpha}\in \mathbb R^d
$$
for some $\sigma>0$. Further, there exists some constants $\gamma >0$ and $c >0$ such that $ \mathrm{Tr}(\mathbf{\Sigma})\leq \gamma d$ and $\|\mathbf{\Sigma}\| \leq c$.
\end{assumption}

\subsection{Sensitivity analysis and privacy guarantee}
In most existing studies on differentially private statistical tasks, the design matrix $\mathbf{X}$ is typically assumed to be fixed, with the $\ell_2$ norm of its columns bounded \citep{amin2019differentially, chaudhuri2011differentially, dwork2014analyze, blum2005practical}. However, this assumption is often too restrictive and impractical, especially for common populations, such as unbounded ones. To address this limitation and identify an appropriate noise scale for achieving DP with data drawn from sub-Gaussian distributions, we begin by analyzing the $\ell_1$ sensitivity of sample eigenvalues in such settings.  
Formally, the query for sample eigenvalues $\Lambda: \mathbb{R}^{n \times d} \to \mathbb{R}^{d}$ takes the design matrix $\mathbf{X}$ as input and outputs the eigenvalues of $\mathbf{S_x}$, whose $\ell_1$ sensitivity is defined as:
\begin{equation}\label{eigenvalue sensitivity}
\Delta_1 \Lambda: = \sup_{\mathbf{X}\simeq \tilde{\mathbf{X}}} \left\|\Lambda\left(\frac{1}{n}\mathbf{X}^{\top}\mathbf{X} \right) - \Lambda\left(\frac{1}{n} \tilde{\mathbf{X}}^{\top} \tilde{\mathbf{X}} \right)\right\|_1.
\end{equation}
The following result  presents a high-probability bound for the $\ell_1$ sensitivity of sample eigenvalues.
\begin{theorem}\label{upper bound of eigenvalue sensitivity}
Under Assumption \ref{sub gaussian assumption}, for any $ t >0 $, we have 
\begin{equation}\label{theorem1}
\Delta_1 \Lambda \leq  \sigma^2 \cdot\left\{\frac{2\mathrm{Tr}(\mathbf{\Sigma})}{n} + 4 \sqrt{ \mathrm{Tr}(\mathbf{\Sigma}^2)} \cdot \sqrt{\frac{t}{n}}+4 \left\|\mathbf{\Sigma}\right\|t   \right\}    
\end{equation}
with the probability at least $1-2 e^{-nt} $, where $\sigma>0$ is the sub-Gaussian parameter of $\mathbf \Sigma^{-1/2}\mathbf x$. 
\end{theorem}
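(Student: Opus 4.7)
The plan is to reduce the eigenvalue sensitivity to the nuclear norm of a rank-two perturbation of the Gram matrix, and then control the resulting random quadratic form by a Hanson--Wright-type tail inequality. The reduction step starts from the Lidskii--Mirsky--Wielandt majorization inequality for Hermitian matrices, which gives $\sum_{i=1}^d |\lambda_i(\mathbf{A}) - \lambda_i(\mathbf{B})| \le \|\mathbf{A}-\mathbf{B}\|_*$, where $\|\cdot\|_*$ denotes the nuclear norm. Applying this to $\mathbf{A} = n^{-1}\mathbf{X}^\top\mathbf{X}$ and $\mathbf{B} = n^{-1}\tilde{\mathbf{X}}^\top\tilde{\mathbf{X}}$, and observing that neighboring datasets differ in exactly one row, say the $k$-th, I obtain the rank-two decomposition $\mathbf{A} - \mathbf{B} = n^{-1}(\mathbf{x}_k\mathbf{x}_k^\top - \tilde{\mathbf{x}}_k\tilde{\mathbf{x}}_k^\top)$. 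Since each rank-one positive-semidefinite piece satisfies $\|\mathbf{x}_k\mathbf{x}_k^\top\|_* = \|\mathbf{x}_k\|_2^2$, the triangle inequality for the nuclear norm gives
\[
\|\Lambda(\mathbf{A}) - \Lambda(\mathbf{B})\|_1 \;\leq\; n^{-1}\bigl(\|\mathbf{x}_k\|_2^2 + \|\tilde{\mathbf{x}}_k\|_2^2\bigr),
\]
so the sensitivity question reduces to controlling $\|\mathbf{x}\|_2^2 = \mathbf{z}^\top\mathbf{\Sigma}\mathbf{z}$ in probability.

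Next, under Assumption \ref{sub gaussian assumption}, the quantity $\mathbf{z}^\top\mathbf{\Sigma}\mathbf{z}$ is a sub-Gaussian quadratic form. I would invoke a Hsu--Kakade--Zhang type tail bound, which can be derived from the stated MGF assumption by diagonalizing $\mathbf{\Sigma}$ and applying a coordinate-wise one-dimensional sub-Gaussian estimate:
\[
\PP\bigl\{\mathbf{z}^\top\mathbf{\Sigma}\mathbf{z} > \sigma^2\bigl(\mathrm{Tr}(\mathbf{\Sigma}) + 2\sqrt{\mathrm{Tr}(\mathbf{\Sigma}^2)\,s} + 2\|\mathbf{\Sigma}\|\,s\bigr)\bigr\} \leq e^{-s}
\]
for every $s>0$. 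Specializing $s = nt$ and dividing by $n$, each of $\|\mathbf{x}_k\|_2^2/n$ and $\|\tilde{\mathbf{x}}_k\|_2^2/n$ is bounded by $\sigma^2\bigl\{\mathrm{Tr}(\mathbf{\Sigma})/n + 2\sqrt{\mathrm{Tr}(\mathbf{\Sigma}^2)}\sqrt{t/n} + 2\|\mathbf{\Sigma}\|\,t\bigr\}$ with probability at least $1-e^{-nt}$. A union bound over these two events, followed by summation, delivers the claimed inequality (\ref{theorem1}) with probability at least $1-2e^{-nt}$.

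The main conceptual obstacle is that $\Delta_1 \Lambda$ is defined in (\ref{eigenvalue sensitivity}) as a worst-case supremum over all neighboring datasets, yet under Assumption \ref{sub gaussian assumption} the sample space is unbounded, so no deterministic uniform sensitivity bound can exist. The correct reading of the theorem is that it controls, with high probability, the random quantity $n^{-1}(\|\mathbf{x}_k\|_2^2 + \|\tilde{\mathbf{x}}_k\|_2^2)$ arising from any single-row substitution---a ``typical-data'' sensitivity---and the failure probability $2e^{-nt}$ is absorbed later into the $\delta$ slack of approximate differential privacy. A secondary technical point is tracking constants so that a single $\sigma^2$ factor multiplies every term in the final bound; this is routine given the homogeneity of the sub-Gaussian MGF assumption in $\sigma$, and is what produces the somewhat unusual appearance of $\sigma^2\mathrm{Tr}(\mathbf{\Sigma})/n$ rather than the true mean $\mathrm{Tr}(\mathbf{\Sigma})/n$ in the leading term.
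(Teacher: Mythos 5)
Your proposal is correct and follows essentially the same route as the paper: reduce the eigenvalue sensitivity to $n^{-1}(\|\mathbf{x}_1\|_2^2+\|\tilde{\mathbf{x}}_1\|_2^2)$, then apply the Hsu--Kakade--Zhang quadratic-form tail bound (the paper's Lemma~\ref{subgaussian tail bound}) with $t\mapsto nt$ and a union bound over the two rows. The only difference is in the deterministic reduction: you invoke the Lidskii--Mirsky majorization inequality $\sum_i|\lambda_i(\mathbf{A})-\lambda_i(\mathbf{B})|\le\|\mathbf{A}-\mathbf{B}\|_*$ for the rank-two perturbation, whereas the paper passes through the intermediate matrix $n^{-1}\mathbf{Y}^\top\mathbf{Y}$ and uses the positive-semidefinite ordering to show each one-row $\ell_1$ change equals the trace difference exactly; both yield the identical bound \eqref{tightest upper bound}, and your observation about the ``typical-data'' (high-probability rather than worst-case) reading of $\Delta_1\Lambda$ matches the paper's own remark.
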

     
\begin{remark}
As shown in the proof of Theorem 1 (see Appendix), the $\ell_1$ sensitivity of $\Lambda$ can be controlled by
\begin{equation}\label{tightest upper bound}
\Delta_1 \Lambda \leq \frac{1}{n} \mathbf{x}_1^{\top}\mathbf{x}_1 + \frac{1}{n} \tilde{\mathbf{x}}_1^{\top}\tilde{\mathbf{x}}_1,
\end{equation}
where $\mathbf{x}_1$ and $\tilde{\mathbf{x}}_1$ are independent copies drawn from $\mathbf{x}$. Due to the unboundedness of $\mathbf{x}$, the standard Laplace mechanism cannot be directly applied, as it requires finite sensitivity to ensure privacy protection.  As a solution, Theorem \ref{upper bound of eigenvalue sensitivity} provides a high-probability upper bound for the sensitivity of sample eigenvalues. Specifically, the inequality in  (\ref{theorem1}) indicates that $\Delta_1\Lambda$  is controlled by a quantity depending on $ \mathbf{\Sigma}, n,\sigma$, and a tuning parameter $t$. By substituting this upper bound for $\Delta_1\Lambda$ into the Laplace mechanism, the statistic (\ref{LRTdp}) achieves $\varepsilon$-DP with probability tending to one at an exponential rate.
\end{remark}

When \(t\) is chosen as a sequence tending to zero, the first term in  (\ref{theorem1}) becomes dominant, while the last two terms become asymptotically negligible under Assumptions \ref{high dimension assumption}–\ref{sub gaussian assumption}. Specifically, taking \(t = n^{-2r}\) for some \(0 < r < 1/2\), the following result provides a privacy guarantee for our proposed statistics.

\begin{theorem}
\label{privacy guarantee}
Under Assumptions \ref{high dimension assumption}-\ref{sub gaussian assumption}, for any $0 < r < 1/2$, the $\ell_1$ sensitivity of $\Lambda$ satisfies
\begin{equation}
\label{equ: theorem_2}
    \Delta_1 \Lambda \leq \frac{2.01\sigma^2\gamma d}{n}
\end{equation}
with probability at least $1 - 2\exp\{-n^{1-2r}\}$, provided that $n^{-r} \leq c^{\prime}\min \left\{1, \frac{d}{n}\right\}$ for some universal constant $c^{\prime} > 0$. Furthermore, the privatized statistics (\ref{LRTdp})-(\ref{L3dp}) with the noise scale determined by (\ref{equ: theorem_2}) are $\varepsilon$-differentially private with the same probability.
\end{theorem}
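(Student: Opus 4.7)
The plan is to derive Theorem~\ref{privacy guarantee} as a direct specialization of Theorem~\ref{upper bound of eigenvalue sensitivity} combined with the structural bounds provided by Assumption~\ref{sub gaussian assumption}, and then invoke the Laplacian mechanism together with post-processing to transfer the sensitivity bound into a privacy statement for the statistics \eqref{LRTdp}--\eqref{L3dp}.

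First, I would substitute $t=n^{-2r}$ into Theorem~\ref{upper bound of eigenvalue sensitivity}, so that the probability becomes $1-2\exp\{-n\cdot n^{-2r}\}=1-2\exp\{-n^{1-2r}\}$, matching the conclusion. On this event,
$$
\Delta_1\Lambda\;\le\;\sigma^{2}\left\{\frac{2\,\mathrm{Tr}(\mathbf\Sigma)}{n}+4\sqrt{\mathrm{Tr}(\mathbf\Sigma^{2})}\,\sqrt{\frac{n^{-2r}}{n}}+4\|\mathbf\Sigma\|\,n^{-2r}\right\}.
$$
Next I would use Assumption~\ref{sub gaussian assumption} to control each term: $\mathrm{Tr}(\mathbf\Sigma)\le\gamma d$ gives the leading contribution $2\sigma^{2}\gamma d/n$, while $\mathrm{Tr}(\mathbf\Sigma^{2})\le\|\mathbf\Sigma\|\,\mathrm{Tr}(\mathbf\Sigma)\le c\gamma d$ and $\|\mathbf\Sigma\|\le c$ bound the remaining two terms.

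The main technical step is to check that, under the hypothesis $n^{-r}\le c'\min\{1,d/n\}$ for a sufficiently small universal constant $c'>0$, the last two terms together contribute at most $0.01\sigma^{2}\gamma d/n$. For the cross term, divide by the leading term:
$$
\frac{4\sigma^{2}\sqrt{c\gamma d}\,n^{-r-1/2}}{2\sigma^{2}\gamma d/n}
=\frac{2\sqrt{c}}{\sqrt{\gamma}}\cdot\frac{n^{-r}}{\sqrt{d/n}}.
$$
Since $\min\{1,d/n\}\le\sqrt{d/n}$ in both regimes $d\le n$ and $d>n$, the hypothesis yields $n^{-r}\lesssim\sqrt{d/n}$, so the ratio is bounded by a constant multiple of $c'$. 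Similarly, for the last term,
$$
\frac{4\sigma^{2}c\,n^{-2r}}{2\sigma^{2}\gamma d/n}
=\frac{2c}{\gamma}\cdot\frac{n^{-2r}}{d/n},
$$
and the hypothesis implies $n^{-2r}\le c'^{2}\min\{1,(d/n)^{2}\}\le c'^{2}\,d/n$, so this ratio is likewise $O(c'^{2})$. Choosing $c'$ small enough to force both ratios below $0.005$ each produces the desired bound $\Delta_1\Lambda\le 2.01\sigma^{2}\gamma d/n$. The only mild subtlety here is verifying the inequality $\min\{1,d/n\}\le\sqrt{d/n}$ in both asymptotic regimes, which I would handle by a brief case split on whether $d\le n$ or $d>n$.

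Finally, I would translate this sensitivity bound into the privacy guarantee. Condition on the high-probability event above; then the Laplacian mechanism of Lemma~\ref{lemma Laplacian mechanism}, calibrated with scale $2.01\sigma^{2}\gamma d/(n\varepsilon)$, produces privatized eigenvalues $\{\tilde\lambda_i\}_{i=1}^{K}$ that are $\varepsilon$-differentially private. Since each of $L_1^{\mathrm{dp}}, L_2^{\mathrm{dp}}, L_3^{\mathrm{dp}}$ is a deterministic, measurable function of $\{\tilde\lambda_i\}_{i=1}^{K}$, the post-processing property (Lemma~\ref{Post-processing}) preserves $\varepsilon$-DP, and the whole conclusion holds with probability at least $1-2\exp\{-n^{1-2r}\}$. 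The harder conceptual point is not the algebra but the observation that Theorem~\ref{upper bound of eigenvalue sensitivity} only controls $\Delta_1\Lambda$ with high probability rather than deterministically, which is why the final privacy guarantee must also be qualified by this same probability.
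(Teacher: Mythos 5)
Your proposal is correct and follows essentially the same route as the paper's proof: substitute $t=n^{-2r}$ into Theorem~\ref{upper bound of eigenvalue sensitivity}, bound the three terms via Assumption~\ref{sub gaussian assumption} (the paper uses $\mathrm{Tr}(\mathbf{\Sigma}^2)\le c^2 d$ and the inequality $\sqrt{d/n}\le\max\{1,d/n\}$, which is the same case split as your $\min\{1,d/n\}\le\sqrt{d/n}$), and finish with the Laplacian mechanism plus post-processing. One arithmetic nit: forcing each of the two ratios below $0.005$ only yields a total of $2.02\sigma^{2}\gamma d/n$; you need each below $0.0025$ (equivalently, their sum below $0.005$) to land on $2.01$, which is harmless since $c'$ is a free universal constant.
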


\begin{remark}
We can observe that there still exists two unknown parameters $\sigma, \gamma$ 
in the right hand side of (\ref{equ: theorem_2}). In a sense, the sub-Gaussian parameter reflects the variance of the standardized variable $\mathbf{\Sigma}^{-1/2}\mathbf{x}$, e.g., this constant corresponding to a Gaussian variable is $\sigma= 1$. 
For practical considerations, we simply 
assume $\sigma=1$ in this context. 

Recall $K = \min\{d,n\}$ is the number of non-zero eigenvalues of $\mathbf{S_x}$. To estimate the constant \( \gamma \), an intuitive approach is using \( d^{-1} \sum_{i=1}^K \lambda_i \), which is a consistent estimator for \( \mathrm{Tr}(\mathbf{\Sigma})/d \) at the rate of $O_p(n^{-1})$ (see \citep{bai2004clt}). However, privacy concerns complicate this estimation, as adding noise to the sample eigenvalues—necessary to ensure privacy—requires knowing noise scale, which in turn depends on \( \gamma \). To tackle the later issue,  we manually preset a value \( \widetilde{\gamma} \) larger than 1, based on the fact that \(\gamma =1\) under the null hypothesis, which ensures a sufficient amount of noise for privacy protection. Since the noise has zero mean, the law of large numbers ensures that we can still obtain a consistent estimate \( \widehat{\gamma} \) for \( \gamma \), while achieving differential privacy. Further details are provided in Algorithm $\ref{alg:1}$ below.
\end{remark}

\subsection{Privatized statistics and testing procedure} 

To address the issue of ineffective performance associated with individual statistics, we develop a differentially private integrated test statistic that incorporates the three types of statistics (\ref{LRTdp})-(\ref{L3dp}) introduced earlier. Formally, we define
\begin{equation}
    \label{specific functions}
    g_1(x) = |x| - \log |x| - 1,\ g_2(x) = |x-1|^2,\ g_3(x) = |x - 1|,
\end{equation}
 and construct  the integrally privatized statistic as 
\begin{equation}
\label{Tmaxdp}
    T_{max}^{\mathrm{dp}}: = \max_{1\leq m \leq 3}\{T_m^{\mathrm{dp}}\} \quad {\rm{with}} \quad
T_m^{\mathrm{dp}} =  \dfrac{\sqrt{K}|L_{m}^{\mathrm{dp}} - \mu_{0}(g_m)|}{v_{0}^{1/2}(g_m,g_m)} \quad {\rm{for}} \quad
1\leq m \leq 3,
\end{equation}
where $L_{m}^{\mathrm{dp}}, 1\leq m \leq 3$ are defined in (\ref{LRTdp})-(\ref{L3dp}), and $\mu_{0}(g_m), v_{0}(g_m,g_m)/K$ denotes the asymptotic mean and variance of $L_{m}^{\mathrm{dp}}$ for  $ 1\leq m\leq 3$ under $H_0$, respectively. These values can be 
explicitly determined in Theorem~\ref{asymptotic normality 2}.
\begin{remark}{
    In modern data analysis, the structure of underlying signals is typically unknown, and thus the test based on a single, possibly misspecified statistic can undermine detection power. To address this, we construct a robust testing procedure by integrating three complementary loss functions. The entropy-based loss (corresponding to the likelihood ratio) is known to be optimal in classical low-dimensional settings under the Neyman–Pearson framework, and still offer certain advantages in the literature of high-dimensional tests~\citep{bai2009corrections, zheng2019hypothesis}, even though its optimality under high-dimensional or privacy-constrained settings has not been established. To further improve robustness, the squared loss is included for its strong performance under dense alternatives, while the absolute deviation loss contributes  stability when the signal pattern is irregular or difficult to characterize. This integrated approach enhances the adaptability of the test, ensuring more reliable performance across diverse practical scenarios where the structure of the alternative is uncertain.}
\end{remark}
To study the null asymptotics of the privatized statistic, we begin with a general central limit theorem (CLT) for a linear spectral statistics (LSS) of the privatized spectrum $\{\lambda_i+\ell_i\}_{i=1}^K$. Define the  privatized LSS associated with a specific function $g$ as
\begin{equation}\label{general case}
L^{\mathrm{dp}}(g):= \frac{1}{K} \sum_{1\leq i \leq K}g \left(\lambda_i + \ell_i \right).
\end{equation} 
 For a fixed $r\in\NN^+$ and a vector of functions $\mathbf g:=(g_1,\cdots,g_r)^\top$, we first study the asymptotic joint distribution of $L^{\mathrm{dp}}(\mathbf{g}):=\{L^{\mathrm{dp}}(g_1),\cdots,L^{\mathrm{dp}}(g_r)\}^{\top} $, which covers the proposed statistics $(L_1^{\mathrm{dp}},L_2^{\mathrm{dp}},L_3^{\mathrm{dp}})^{\top}$ defined in (\ref{LRTdp})-(\ref{L3dp}) under both the null and the alternative hypotheses as special cases. 
Before going any further, we make some assumptions on the LSD of the sample covariance matrix and certain regularity for functions $\{g_m\}_{m=1}^r$.

\begin{assumption}\label{assumption_Sigma_alternative}
Assume the ESD of the population covariance matrix $\mathbf{\Sigma}$ weakly converges to a nonrandom probability measure $\rho_\kappa$. Let $F_{y,\rho_{\kappa}}$ denote the LSD of the sample covariance matrix $\mathbf{S_x}$. Under the null hypothesis, write $F_{y,\rho_{\kappa}}$ as $F_{y,\rho_{0}}$, corresponding to the M-P law $F_{y}$ defined by (\ref{MP density}), and under the alternatives, write $F_{y,\rho_{\kappa}}$ as $F_{y,\rho_{1}}$, corresponding to the general distribution $F_{y, Q}$ specified in Lemma~\ref{Generalized_MP}.
\end{assumption}

Denote by $f_{y,\rho_{\kappa}}$  the probability density function of  $F_{y,\rho_{\kappa}}$.
Suppose the support of $\rho_{\kappa}$ is $\cup_{k=1}^p [a_{2k},a_{2k-1}]\cap (0,\infty)$, which means that the spectrum bulk of the LSD has possible multiple components, i.e., $p> 1$. Further, we assume regular conditions on every edge $a_k$ and bulk component as in \citep{knowles2017anisotropic}. This regularity is formally specified as follows.  

\begin{assumption}[Regularity of $f_{y,\rho_\kappa}$]\label{regularity}
\begin{enumerate}
\item[(1)] There exists some fixed constant $\tau>0$ such that for each $k=1,\cdots,2p$,  
\$
a_k\ge \tau,\ \min_{1\le k\neq \ell \le p}|a_k-a_\ell|\ge \tau,\ \min_{1\le i\le d}|x_k+\lambda_{i}^{-1}(\mathbf{\Sigma})|\ge \tau,
\$ 
where $\lambda_{i}(\mathbf{\Sigma})$ denotes the $i$-th largest eigenvalue of $\mathbf \Sigma$ and $x_k=m (a_k)$. Here $m(\cdot)$ is defined on $\RR$ by continuous extension from its definition in Lemma \ref{Generalized_MP} on $\CC^+$.

\item[(2)] For any fixed small constant $\tau'>0$, the density $\rho$ in $[a_{2k}+\tau',a_{2k-1}-\tau']$ is bounded from below by some constant $c\equiv c(\tau,\tau')$.  
\end{enumerate}
\end{assumption}

\begin{assumption}[Lyapunov's condition]
    \label{Lyapunov assumption}
    For any nonrandom sequence  $\{\theta_i\}_{i = 1}^K \subseteq [a,b]$ with $0\leq a \leq b$, assume $\{g_m(\theta_i+\ell_i)\}_{i=1}^K$ satisfies the Lyapunov's condition for all $1\le m\le r$, that is, there exists some $\delta>0$ such that
\begin{equation}\label{lyapunov}
    \lim_{K \to \infty} \frac{1}{B_{K,m}^{2+\delta}} \sum_{1\leq i \leq K} \mathbb E\left\{ \left|g_m(\theta_i + \ell_i) - \mathbb Eg_m(\theta_i + \ell_i) \right|^{2+\delta}\right\} = 0,\quad \forall 1\leq m\leq r,
\end{equation}
where $B_{K,m}^2 := \sum_{1\leq i \leq K}\mathrm{Var}\{g_m(\theta_i + \ell_i)\}$.
\end{assumption}

\begin{assumption}[Regularity of $\bf g$]
\label{assumption approximation error}
If a bounded nonrandom sequence $\{\eta_i\}_{i=1}^K$ can well approximate an almost surely bounded random sequence $\{\xi_i\}_{i =1 }^K$ such that there exists some constant $c>0$, 

\#\label{approximation_condition}
\PP\left(|\xi_i-\eta_i|\le c K^{-1/2-\epsilon},\ \forall 1\le i\le K\right)=1-o(1),\quad K \to \infty,
\#
for any small constant $\epsilon>0$, 
\begin{equation}\label{approximately equal}
\frac{1}{K} \sum_{1\le i\le K} g_m(\xi_{i}+\ell_i)
= \frac{1}{K} \sum_{1\le i\le K} g_m(\eta_i+\ell_i) + o_p(K^{-1/2}), \quad 1\leq m\leq r, 
\end{equation}
where $\{\ell_i\}_{i=1}^K$ represents an i.i.d. Laplacian noise with zero mean and a constant variance.
\end{assumption}

Define the mean vector $\mu_{\kappa}(\mathbf{g}):= \{\mu_{\kappa}(g_1), \cdots, \mu_{\kappa}(g_r)\}^{\top}$  with
\begin{equation}
     \label{mean vector}
    \mu_{\kappa}(g_m) = (1\vee y) \int_{t>0}b_{g_m}(t)\mathrm{d}F_{y,\rho_{\kappa}}(t),\ 1\leq m \leq r,
\end{equation}
and the covariance matrix $V_{\kappa}(\mathbf{g}):=[v_{\kappa}(g_m,g_s)]_{m,s = 1}^r$ with
\begin{equation}
    \label{covariance}
    v_{\kappa}(g_m,g_s) = (1\vee y)\int_{t>0} \left\{b_{g_{m} ,g_{s}}(t) - b_{g_{m}}(t) b_{g_{s}}(t)\right\}{\rm d} F_{y,\rho_{\kappa}} (t),
\end{equation}
where $b_{g_m}(t)$ and $b_{g_{m}, g_{s}}(t), 1\leq m,s \leq r$ are given by
\begin{equation*}
    \label{median functions}
    b_{g_m}(t) = \int g_{m}(x+t)\,\mathrm{d}H(x) \quad {\rm{and}} \quad  b_{g_{m},g_{s}}(t) = \int g_{m}(x+t)g_{s}(x+t)\,\mathrm{d}H(x),
\end{equation*}
respectively. Here $H(x)$ denotes the probability distribution function of the Laplacian noise, 
and $F_{y,\rho_{\kappa}}$ denotes the LSD of the sample covariance matrix specified in Assumption~\ref{assumption_Sigma_alternative}, where ``$\,\kappa$" takes $0$ or $1$, corresponding to the null hypothesis and the alternative hypothesis, respectively.   
If the covariance matrix defined by \eqref{covariance} is non-degenerated, the following lemma provides the asymptotic joint normality of $L^{\mathrm{dp}}(\mathbf{g})$, including the null and alternative cases. 

\begin{lemma}[Asymptotic normality]\label{asymptotic normality}
Under Assumptions \ref{high dimension assumption}-\ref{assumption approximation error}, it holds that as $d\to \infty$, 
\$
\sqrt{K}\left\{L^{\mathrm{dp}}(\mathbf{g}) - \mu_{\kappa}(\mathbf{g})\right\}\xrightarrow{D} \mathcal{N}_r(\mathbf{0}, V_{\kappa}(\mathbf{g})), 
\$
where the mean vector $\mu_{\kappa}(\mathbf{g})$ and the covariance matrix $V_{\kappa}(\mathbf{g})$ are defined by \eqref{mean vector} and  \eqref{covariance}, respectively.
\end{lemma}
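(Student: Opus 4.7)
The plan is to reduce the claim to a classical multivariate Lyapunov CLT for a triangular array of row--independent vectors, exploiting the smoothing effect of the Laplace noise. The key observation is that, conditional on the dataset (hence on the spectrum $\{\lambda_i\}_{i=1}^K$), the random variables $\{g_m(\lambda_i+\ell_i)\}_{i=1}^K$ are independent across $i$ because the noises $\{\ell_i\}$ are i.i.d.\ and drawn independently of the data, and the dataset enters only through the spectrum. The strategy is to first deterministically replace each $\lambda_i$ by a classical location $\eta_i$, then execute a joint CLT for the resulting noise-only sum.

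For the replacement step, Assumptions~\ref{sub gaussian assumption} and~\ref{regularity} place us in the scope of the anisotropic local Marcenko--Pastur law of \citep{knowles2017anisotropic}, which delivers the eigenvalue rigidity $\max_{1\le i\le K}|\lambda_i-\eta_i|\le cK^{-1/2-\epsilon}$ with probability $1-o(1)$, where $\eta_i$ is the $((i-1/2)/K)$-quantile of the non-zero part of the LSD $F_{y,\rho_\kappa}$. Substituting $\xi_i=\lambda_i$ into Assumption~\ref{assumption approximation error} yields
\$
L^{\mathrm{dp}}(g_m)=\frac{1}{K}\sum_{i=1}^K g_m(\eta_i+\ell_i)+o_p(K^{-1/2}),\qquad 1\le m\le r,
\$
so it suffices to prove the joint CLT for the reduced statistics $\widetilde L^{\mathrm{dp}}(g_m):=K^{-1}\sum_{i=1}^K g_m(\eta_i+\ell_i)$, whose summands are now truly independent across $i$.

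For the CLT step, the Cram\'er--Wold device reduces the task to showing that for every $\mathbf a\in\RR^r$,
\$
\sqrt K\sum_{m=1}^r a_m\left\{\widetilde L^{\mathrm{dp}}(g_m)-\frac{1}{K}\sum_{i=1}^K b_{g_m}(\eta_i)\right\}\xrightarrow{D}\mathcal N\bigl(0,\mathbf a^\top V_\kappa(\mathbf g)\mathbf a\bigr).
\$
Since $\EE g_m(\eta_i+\ell_i)=b_{g_m}(\eta_i)$ and $\EE\{g_m(\eta_i+\ell_i)g_s(\eta_i+\ell_i)\}=b_{g_m,g_s}(\eta_i)$, the summands are centered and independent across $i$, with aggregate variance
\$
\frac{1}{K}\sum_{i=1}^K\sum_{m,s=1}^r a_m a_s\left\{b_{g_m,g_s}(\eta_i)-b_{g_m}(\eta_i)b_{g_s}(\eta_i)\right\}.
\$
Assumption~\ref{regularity}(2) makes $\{\eta_i\}$ quantiles of a density bounded below on a compact set, so $K^{-1}\sum_i\delta_{\eta_i}$ converges weakly to the non-zero part of $F_{y,\rho_\kappa}$ rescaled by $(1\vee y)$ into a probability measure; a Riemann-sum argument then produces the limiting variance $\mathbf a^\top V_\kappa(\mathbf g)\mathbf a$. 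The convolution $b_{g_m}(t)=\EE g_m(t+\ell)$ inherits Lipschitz regularity from the Laplace density, so $b_{g_m}$ and $b_{g_m,g_s}$ are continuous and bounded on the support of $\rho_\kappa$. The Lindeberg--Lyapunov condition for the independent triangular array follows from Assumption~\ref{Lyapunov assumption} applied to $\sum_m a_m g_m$ (Minkowski's inequality controls its $(2+\delta)$-th moment), delivering the desired univariate limit.

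Finally, the deterministic centering $K^{-1}\sum_i b_{g_m}(\eta_i)$ converges to $\mu_\kappa(g_m)$ by the same Riemann-sum argument with error $O(K^{-1})=o(K^{-1/2})$, so recentering by $\mu_\kappa(\mathbf g)$ is costless in the limit and completes the proof. I expect the main technical obstacle to be the $K^{-1/2-\epsilon}$-rigidity invoked at the outset, especially under the alternative, where $F_{y,\rho_1}$ may have disconnected spectral support with non-trivial edge behavior; this is precisely what the regularity conditions~(1)--(2) in Assumption~\ref{regularity} are designed to feed into the anisotropic local-law machinery of \citep{knowles2017anisotropic}.
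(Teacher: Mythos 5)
Your proposal follows essentially the same route as the paper's proof: eigenvalue rigidity from \citep{knowles2017anisotropic} to replace $\lambda_i$ by classical locations, Assumption~\ref{assumption approximation error} to transfer this to the statistics, the Cram\'er--Wold device plus Lyapunov's CLT for the now-independent Laplace-noise array, and Riemann-sum convergence of the mean and covariance to \eqref{mean vector} and \eqref{covariance}. The only detail you gloss over is that the Lyapunov ratio for $\sum_m a_m g_m$ also needs its variance sum bounded \emph{below}, which the paper secures via the standing non-degeneracy of $V_\kappa(\mathbf g)$ (i.e., $\lambda_r(V_\kappa(\mathbf g))>0$); with that noted, the argument is correct.
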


\begin{remark}
This lemma can be viewed as a DP version of the LSS results obtained in \citep{bai2004clt}. 
It is proved using tools from RMT. Mathematically, most nonzero eigenvalues of $\mathbf{S_x}$, that is, those within the bulk of the spectrum, are located around their ``classical location'' and fluctuate at a magnitude of $O(K^{-1})$, which is called rigidity. This implies a convergence rate $O_p(K^{-1})$ for LSS $\frac{1}{K}\sum_{1\le i\le K}g(\lambda_i)$ for a specific function $g$. However, the Laplacian noise terms introduced $\{\ell_i\}_{i=1}^K$ in \eqref{general case} lead to a fluctuation of $O_p(K^{-1/2})$ after averaging and scaling by $\sqrt{K}$, thus dominating the asymptotic variance of the perturbed LSS, specifically the privateized statistic $L^{\mathrm{dp}}(g)$. This suggests that the privatized statistic exhibits a different magnitude of asymptotic variance compared to the original LSS. Moreover, from a privacy perspective, it is impossible to recover $\{\lambda_i\}_{i=1}^K$ from $\{\lambda_i+\ell_i\}_{i=1}^K$ with LSS consistency. This lemma, along with its technical contributions, paves the way for a systematic understanding of differential privacy in high-dimensional statistics. 
\end{remark}

Under the null hypothesis $H_0$, Assumption~\ref{assumption_Sigma_alternative} holds automatically. By checking the Lyapunov's condition specified in Assumption~\ref{Lyapunov assumption} and the regularity in Assumption~\ref{assumption approximation error} for functions defined in (\ref{specific functions}), we obtain the following result.
\begin{theorem}[Asymptotic null distribution]\label{asymptotic normality 2}
Under Assumptions \ref{high dimension assumption}-\ref{sub gaussian assumption}, for the specific functions defined in~(\ref{specific functions}), it holds that as $d\to \infty$, 
$$
\sqrt{K}\left\{L^{\mathrm{dp}}(\mathbf{g}) - \mu_{0}(\mathbf{g})\right\} \xrightarrow{D} \mathcal{N}_3(\mathbf{0}, V_{0}(\mathbf{g})),
$$
 where the mean vector $\mu_{0}(\mathbf{g})$ and the covariance matrix $V_0(\mathbf{g})$ are determined in Lemma~\ref{asymptotic normality} by taking $F_{y,\rho_{\kappa}}$ as $F_{y,\rho_0}$.
\end{theorem}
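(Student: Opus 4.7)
The plan is to derive the theorem as a specialization of Lemma~\ref{asymptotic normality} to the null hypothesis with $\mathbf{g}=(g_1,g_2,g_3)^\top$ from \eqref{specific functions}. It suffices to check, under $H_0$ and Assumptions~\ref{high dimension assumption}--\ref{sub gaussian assumption}, that Assumptions~\ref{assumption_Sigma_alternative}, \ref{regularity}, \ref{Lyapunov assumption}, and \ref{assumption approximation error} hold for these specific $g_m$; the lemma then delivers the joint asymptotic normality with the parameters obtained by substituting $F_{y,\rho_0}=F_y$ into \eqref{mean vector} and \eqref{covariance}.

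The structural conditions come first. Under $H_0$, $\mathbf{\Sigma}=\mathbf{I}_d$ so the ESD of $\mathbf{\Sigma}$ is $\delta_1$, and Lemma~\ref{mp_law_1} identifies $F_{y,\rho_0}$ with the standard M-P law $F_y$, verifying Assumption~\ref{assumption_Sigma_alternative}. For Assumption~\ref{regularity}, the edges $\lambda_\pm=(1\pm\sqrt{y})^2$ are bounded away from $0$ whenever $y\neq 1$, the density $f_y$ in \eqref{MP density} is bounded below on compact subsets of $(\lambda_-,\lambda_+)$, and the remaining requirements---including the lower bound on $|m(\lambda_\pm)+1|$ since $\lambda_i^{-1}(\mathbf{\Sigma})=1$ under $H_0$---are verified directly from the explicit M-P formulas.

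For Assumption~\ref{Lyapunov assumption} I need uniform $(2+\delta)$-moment control of $g_m(\theta_i+\ell_i)$ over $\theta_i\in[a,b]$. For $g_2(x)=(x-1)^2$ and $g_3(x)=|x-1|$ this is immediate from the exponential tails of the Laplacian. For $g_1(x)=|x|-\log|x|-1$ the only delicate point is the log singularity at $0$, which is resolved by combining the boundedness of the Laplacian density with the local integrability $\int_{-R}^{R}|\log|u||^{2+\delta}\,\mathrm{d}u<\infty$, yielding a uniform bound on $\mathbb{E}|g_1(\theta_i+\ell_i)|^{2+\delta}$. A uniform positive lower bound on $\mathrm{Var}(g_m(\theta_i+\ell_i))$---following from the continuity and non-constancy of each $g_m$ together with the smooth Laplacian density---then gives $B_{K,m}^2\asymp K$, and Lyapunov's ratio becomes $O(K^{-\delta/2})=o(1)$.

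The main obstacle will be Assumption~\ref{assumption approximation error} for $g_1$. For $g_3$, global $1$-Lipschitz continuity immediately gives $K^{-1}|\sum_i\{g_3(\xi_i+\ell_i)-g_3(\eta_i+\ell_i)\}|\le cK^{-1/2-\epsilon}=o(K^{-1/2})$ on the rigidity event, and for $g_2$ local Lipschitz control together with $K^{-1}\sum_i|\ell_i|=O_p(1)$ yields the same rate. For $g_1$, however, a naive mean-value bound introduces a factor $1/|\eta_i^*+\ell_i|$ that fails to be integrable against the Laplacian density. The workaround is to bound the expected per-index error directly: writing $u=\eta_i+\ell_i$ and $\Delta=\xi_i-\eta_i$, the identity $\log|u+\Delta|-\log|u|=\log|1+\Delta/u|$ combined with splitting the integral against $f_\ell(u-\eta_i)\,\mathrm{d}u$ into the regions $|u|\le 2|\Delta|$ and $|u|>2|\Delta|$ yields $\mathbb{E}|\log|u+\Delta|-\log|u||\lesssim|\Delta|(1+|\log|\Delta||)$. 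Substituting $|\Delta|\le cK^{-1/2-\epsilon}$ gives an average per-index bound of order $K^{-1/2-\epsilon}\log K=o(K^{-1/2})$, and Markov's inequality then upgrades this to the required $o_p(K^{-1/2})$ bound in probability. Combining the four verifications with Lemma~\ref{asymptotic normality} delivers the claimed joint normality with $\mu_0(\mathbf{g})$ and $V_0(\mathbf{g})$.
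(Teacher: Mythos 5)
Your overall strategy matches the paper's: both reduce the theorem to Lemma~\ref{asymptotic normality} by verifying Assumptions~\ref{Lyapunov assumption} and~\ref{assumption approximation error} for $g_1,g_2,g_3$, and both treat $g_2$ and $g_3$ the same way (Lipschitz-type control). The interesting divergence is in the two technical verifications. For the Lyapunov condition, the paper establishes continuity of the moments $s_{k,m}(\theta)$ in $\theta$ on a compact interval and extracts the inf/sup by compactness, whereas you bound the $(2+\delta)$-moments directly via local integrability of $|\log|u||^{2+\delta}$ against the bounded Laplacian density; both are fine, yours is marginally more self-contained. The real difference is Assumption~\ref{assumption approximation error} for $g_1$: the paper linearizes via $\log|1+x|\le|x|$, which produces the sum $\sum_i 1/|\eta_i+\ell_i|$ of heavy-tailed variables with \emph{infinite mean} (the survivor function decays like $c_i/u$), and must then invoke a generalized/stable central limit theorem to get the $O_p(K\log K)$ rate. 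You avoid that machinery entirely by keeping the log-difference intact and bounding $\mathbb{E}\left|\log|u+\Delta|-\log|u|\right|\lesssim |\Delta|(1+|\log|\Delta||)$ through the split $|u|\le 2|\Delta|$ versus $|u|>2|\Delta|$ --- the point being that the logarithmic singularity is integrable even though $1/|u|$ is not --- and then apply Markov. This is more elementary and arguably cleaner; just make sure you state the expectation as conditional on the data (equivalently, on the rigidity event and the values $\xi_i$), using the independence of the Laplacian noise from $\{\lambda_i\}$, so that the bound $|\Delta|\le cK^{-1/2-\epsilon}$ can be substituted uniformly before taking Markov's inequality. With that small clarification the argument is complete and delivers the same conclusion as the paper's proof.
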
 

Theorem~\ref{asymptotic normality 2} indicates that the proposed test is asymptotically distribution-free, and we reject $H_0$ if and only if $|T_{max}^{\mathrm{dp}}|>z_{\alpha}$,  where $z_{\alpha}$ denotes the $(1-\alpha)$ quantile of $\max_{1\leq m\leq 3} |Y_m|$, where $(Y_1,Y_2,Y_3)^{\top}\sim \mathcal{N}_3(\mathrm{0},\mathbf{\Gamma}_0^{-1}V_0(\mathbf{g})\mathbf{\Gamma}_0^{-1})$ with $\mathbf{\Gamma}_0= \mathrm{diag}\{v_0^{1/2}(g_1,g_1), v_0^{1/2}(g_2,g_2),v_0^{1/2}(g_3,g_3)\}$. As demonstrated in simulation studies, the empirical Type I errors via its null limiting distribution can be substantially controlled without additional Monte Carlo approximation or resampling techniques for critical value determination. We summarize the procedure of the privatized test for problem (\ref{simple hypothesis}) in Algorithm \ref{alg:1}.

\begin{algorithm}[H]
  	\setstretch{2}
  	\caption{Privatized testing procedure}
  	\label{alg:1}
  	\renewcommand{\algorithmicrequire}{\textbf{Input:}}
  	\renewcommand{\algorithmicensure}{\textbf{Output:}}
  	\begin{algorithmic}[1]
  		\REQUIRE The dataset $\mathbf{X} \in \mathbb R^{n \times d }$, privacy parameter $\varepsilon$, preset parameter $\widetilde{\gamma}$,  mean vector $\mu_0(\mathbf{g})$, covariance matrix $V_0(\mathbf{g})$, the significance level $\alpha$.
  	\STATE Calculate non-zero eigenvalues of $\mathbf{S_x}= n^{-1}\mathbf{X}^{\top}\mathbf{X}$, denoted by $({\lambda}_1,\ldots, \lambda_K)^{\top}$. 
  		\STATE Generate Laplacian noise $\ell_i^{\ast} \sim \text{Lap}\left(0,\dfrac{2.01\widetilde{\gamma}d}{n\varepsilon}\right)$, and set $\lambda_i^{\ast} = {\lambda}_i + \ell_i^{\ast}$ for $1\leq i \leq K$.
    \STATE Estimate $\gamma$ by $\widehat{\gamma}=d^{-1}\cdot {\left|\sum_{i=1}^K\lambda_i^{\ast}\right|}$.
    \STATE Generate Laplacian noise $\ell_i \sim \text{Lap}\left(0,\dfrac{2.01\widehat{\gamma}d}{n\varepsilon}\right)$, and set $\tilde{\lambda}_i = {\lambda}_i + \ell_i$ for $1\leq i \leq K$.
  		\STATE Construct privatized  test statistics defined in (\ref{LRTdp}), (\ref{L2dp}) and (\ref{L3dp}), respectively.
    \STATE  Integrate the statistics (\ref{LRTdp}), (\ref{L2dp}) and (\ref{L3dp}) to get $T_{max}^{\mathrm{dp}}$ as defined in (\ref{Tmaxdp}).
 \ENSURE Reject (or accept) $H_0$ if and only if $T_{max}^{\mathrm{dp}} > z_{\alpha}$ ($T_{max}^{\mathrm{dp}} \leq  z_{\alpha}$, respectively).
  		
  	\end{algorithmic}
  \end{algorithm}

   \begin{remark}
In the conventional framework of differential privacy (DP) algorithm design, a trusted data curator typically assumes responsibility for aggregating raw datasets and disseminating privacy-preserving outputs to analysts through carefully designed mechanisms~\citep{dwork2014algorithmic}. 

In Algorithm~\ref{alg:1}, the curator first computes the sample eigenvalues $\{\lambda_i\}_{i = 1}^K$ based on the collected data, perturbs each eigenvalue using the Laplace mechanism, and then releases the privatized eigenvalues $\{\tilde\lambda_i\}_{i = 1}^K$. These released eigenvalues are subsequently used for various hypothesis testing tasks. Since no information regarding the eigenvectors is disclosed, the analyst cannot reconstruct the sample covariance matrix $\mathbf{S_x}$, thereby providing a strong level of privacy protection. In addition, the curator can also release the scale of the added noise, which depends on the parameter $\gamma$. As mentioned earlier, a straightforward forward estimate for $\gamma$ is $d^{-1}\sum_{i = 1}^K\lambda_i$. This estimation is data-dependent and potentially sensitive from a privacy perspective, and thus must also be performed in a privacy-preserving manner. This consideration is explicitly addressed in our algorithm by introducing a perturbed estimate $\widehat{\gamma}$.
   \end{remark}

\subsection{Power study}

In this subsection, we discuss the asymptotic properties of $T_{max}^{\mathrm{dp}}$ under alternative hypotheses.

In general cases, the sample covariance matrix is $\mathbf{S_x}= n^{-1}\mathbf{\Sigma}^{1/2} \mathbf Z^\top \mathbf{Z} \mathbf{\Sigma}^{1/2}$, where rows of $\mathbf Z$ are sub-Gaussian as specified in Assumption~\ref{sub gaussian assumption}. 
Similar to Theorem~\ref{asymptotic normality 2}, we obtain the asymptotic joint distribution of $(L_1^{\mathrm{dp}},L_2^{\mathrm{dp}},L_3^{\mathrm{dp}})^{\top}$ under alternative hypotheses.

\begin{theorem}[Asymptotic alternative distribution]\label{CLT_H1}
Under Assumptions \ref{high dimension assumption}-\ref{regularity},  for the specific functions defined in~(\ref{specific functions}), it holds that  as $d\to \infty$, 
$$
\sqrt{K}\left\{L^{\mathrm{dp}}(\mathbf{g}) - \mu_{1}(\mathbf{g})\right\} \xrightarrow{D} \mathcal{N}_3(\mathbf{0}, V_{1}(\mathbf{g})),
$$
where the mean vector $\mu_{1}(\mathbf{g})$ and the covariance matrix $V_1(\mathbf{g})$ are determined in Lemma~\ref{asymptotic normality} by taking $F_{y,\rho_{\kappa}}$ as $F_{y,\rho_1}$.
\end{theorem}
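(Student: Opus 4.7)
The plan is to deduce this theorem directly from Lemma~\ref{asymptotic normality} by specializing to the functions $g_1, g_2, g_3$ in (\ref{specific functions}) and taking $F_{y,\rho_\kappa}=F_{y,\rho_1}$. Assumptions~\ref{high dimension assumption}--\ref{regularity} are given in the hypothesis, so the remaining work is to verify Assumption~\ref{Lyapunov assumption} (Lyapunov) and Assumption~\ref{assumption approximation error} (regularity of $\mathbf g$) for these specific choices, and then read off the stated mean/variance from \eqref{mean vector}--\eqref{covariance}.

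First, I would verify Lyapunov's condition. Under Assumption~\ref{regularity}, the support of the LSD $F_{y,\rho_1}$ lies in $\cup_k[a_{2k},a_{2k-1}]\subset[\tau,\infty)$, so the classical locations $\theta_i$ lie in a compact interval $[a,b]\subset(0,\infty)$. The Laplacian noise $\ell_i$ has a centered density with scale $O(1)$ (since the scale parameter $2.01\widehat\gamma d/(n\varepsilon)$ converges to $2.01\gamma y/\varepsilon$) and possesses finite moments of all orders. For $g_2(x)=|x-1|^2$ and $g_3(x)=|x-1|$, which are polynomially bounded, the $(2{+}\delta)$-th absolute moments of $g_m(\theta_i+\ell_i)$ are uniformly bounded in $i$. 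For $g_1(x)=|x|-\log|x|-1$, the integrability of $|\log|t+\ell||^{2+\delta}$ against the Laplacian density reduces to that of $|\log|u||^{2+\delta}$ near $u=0$, which is finite because logarithmic singularities are integrable to every positive power. Since the Laplacian perturbation gives $\mathrm{Var}\{g_m(\theta_i+\ell_i)\}\ge c>0$ uniformly, we obtain $B_{K,m}^2\asymp K$, so the ratio in (\ref{lyapunov}) is $O(K/K^{1+\delta/2})=o(1)$.

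Second, I would check the approximation Assumption~\ref{assumption approximation error}. I would take $\{\eta_i\}$ to be the classical locations predicted by $F_{y,\rho_1}$ via the quantile of the LSD, and $\{\xi_i\}$ the sample eigenvalues $\lambda_i$. Under Assumption~\ref{regularity}, the anisotropic local laws and rigidity estimates of \citep{knowles2017anisotropic} give $\max_i|\lambda_i-\eta_i|\le cK^{-1/2-\epsilon}$ with probability $1-o(1)$, which yields \eqref{approximation_condition}. For the Lipschitz functions $g_2$ and $g_3$, a pointwise mean-value bound gives $|g_m(\xi_i+\ell_i)-g_m(\eta_i+\ell_i)|\le\mathrm{Lip}(g_m)|\xi_i-\eta_i|$ and hence the left-hand side of \eqref{approximately equal} is $O_p(K^{-1/2-\epsilon})=o_p(K^{-1/2})$. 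For $g_1$, which is locally Lipschitz away from $x=0$, I would split the sum according to whether $|\eta_i+\ell_i|>K^{-1/4}$ or not: on the first event the derivative bound $|g_1'(x)|\le 1+|x|^{-1}$ combined with the rigidity yields the desired $o_p(K^{-1/2})$ contribution, while the complementary event has probability $O(K^{-1/4})$ per index and contributes negligibly in $L^1$ thanks to the integrability of $|\log|u||$ noted above.

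The main obstacle is controlling the logarithmic singularity of $g_1$, which must be tamed twice: once for the $(2{+}\delta)$-th moment condition in the Lyapunov step, and once for the approximation step when the argument $\eta_i+\ell_i$ can land near zero. In both places the quantitative input is the integrability of $|\log|u||^{2+\delta}$ with respect to the Laplacian density, combined with the fact that $\eta_i$ itself stays bounded away from zero. Once both assumptions are established, Lemma~\ref{asymptotic normality} with $\kappa=1$ yields the joint asymptotic normality of $(L_1^{\mathrm{dp}},L_2^{\mathrm{dp}},L_3^{\mathrm{dp}})^{\top}$ with mean vector $\mu_1(\mathbf g)$ and covariance matrix $V_1(\mathbf g)$, completing the proof.
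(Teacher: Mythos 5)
Your overall strategy is exactly the paper's: since the support of $F_{y,\rho_1}$ is still contained in a compact interval bounded away from the origin under Assumption~\ref{regularity}, the verifications of Assumptions~\ref{Lyapunov assumption} and \ref{assumption approximation error} carried out for the null case transfer verbatim, and Lemma~\ref{asymptotic normality} with $\kappa=1$ gives the conclusion. Your Lyapunov check and your treatment of $g_2,g_3$ in the approximation step are fine and match the paper.

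However, your handling of the logarithmic singularity of $g_1$ in the approximation step has a quantitative gap. With the threshold $K^{-1/4}$, the good-event bound is, per index, $|g_1(\xi_i+\ell_i)-g_1(\eta_i+\ell_i)|\lesssim (1+K^{1/4})\,|\xi_i-\eta_i|$, and rigidity only provides $\max_i|\xi_i-\eta_i|\prec K^{-2/3}$; averaging over $i$ gives $O_p(K^{1/4-2/3+\epsilon})=O_p(K^{-5/12+\epsilon})$, which is \emph{not} $o_p(K^{-1/2})$. The bad event fares no better: its $L^1$ contribution is of order $K^{-1/4}\log K$. No choice of threshold $K^{-a}$ repairs this, since the good-event bound requires $a<1/6$ while the bad-event bound requires $a>1/2$; even the optimal tradeoff $a=1/3$ only yields $O_p(K^{-1/3})$. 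The missing idea is the one the paper uses: write $|g_1(\xi_i+\ell_i)-g_1(\eta_i+\ell_i)|\le |\xi_i-\eta_i|\{1+1/|\eta_i+\ell_i|\}$ (via $\log|1+x|\le|x|$) and control the heavy-tailed sum $\sum_{i}1/|\eta_i+\ell_i|$ \emph{as a whole}. Each summand has survival function $\asymp c_i/u$, so by the generalized CLT for such barely-non-integrable independent summands the sum is $O_p(K\log K)$, whence the averaged error is $O_p(K^{-1/2-\epsilon}\log K)=o_p(K^{-1/2})$. A per-index truncation cannot see this cancellation/concentration, so your splitting step would fail as stated; the rest of the argument stands once this step is replaced by the stable-law bound.
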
 

Theorems~\ref{asymptotic normality 2} and~\ref{CLT_H1} indicate that the asymptotical variances of $L_{m}^{\mathrm{dp}}$ for $ 1\leq m \leq 3$ are $O(K^{-1})$ under both $H_0$ and $H_1$. Denote the power function of  $T_{max}^{\mathrm{dp}} $ by $\mathrm{Power}(T_{max}^{\mathrm{dp}})$ and write $\mu_{0,m} := \mu_0(g_m), \mu_{1,m} := \mu_1(g_m),s_{0,m}:=v_0^{1/2}(g_m,g_m),s_{1,m}:=v_1^{1/2}(g_m,g_m)$ for short. Then 
$$ 
\begin{aligned}&\,\,\quad 1-\mathrm{Power}(T_{max}^{\mathrm{dp}})\\&= \PP \left(T_{max}^{\mathrm{dp}}\leq z_{\alpha} \mid H_1\ {\rm is\ true}\right) \\
& = \mathbb{P} \left(\max_{1\leq m\leq 3}\left|\dfrac{\sqrt{K}(L_{m}^{\mathrm{dp}} - \mu_{0,m})}{s_{0,m}} \right|\leq z_{\alpha}\,\,\Big|\,\, H_1\ {\rm is\ true}\right)\\
&= \PP \left\{ \max_{1\leq m\leq 3}\left|\frac{\sqrt K (L_{m}^{\mathrm{dp}}-\mu_{1,m})}{s_{1,m}}- \frac{\sqrt K}{s_{1,m}}(\mu_{0,m}-\mu_{1,m})\right| \leq  \frac{s_{0,m}}{s_{1,m}}z_{\alpha}\,\,\Big|\,\, H_1\ {\rm is\ true}\right\}.
\end{aligned}
$$
By the asymptotical normality of $L_{m}^{\mathrm{dp}}, 1\leq m \leq 3$  established in Theorem~\ref{CLT_H1}, the last display can be approximated by  
$$
\PP \left( \max_{1\leq m\leq 3}\left|Y_m - A_{m}\right| \leq B_{m}z_{\alpha}\right)\ \text{with}\ (Y_1,Y_2,Y_3)^{\top} \sim \mathcal{N}_3(\mathbf{0},\mathbf{\Gamma}_1^{-1}V_1(\mathbf{g})\mathbf{\Gamma}_1^{-1}),
$$
where $\mathbf{\Gamma}_1 = \mathrm{diag}\{s_{1,1},s_{1,2},s_{1,3}\}$. If there exists some $1\leq m \leq 3$ such that $\sqrt{K}|\mu_{0,m} -\mu_{1,m} | \to \infty$, then $|A_{m}| = \infty$ and $B_{m}$ is of constant order.  Consequently, we have  $\mathrm{Power}(T_{max}^{\mathrm{dp}}) \rightarrow 1$ as $n \to \infty$. This implies that our privatized test statistic can consistently detect alternative hypotheses if and only if $\max_{1\leq m\leq 3} \sqrt{K}|\mu_{0,m} -\mu_{1,m} | \to \infty$.
 Furthermore, if $\max_{1\leq m \leq 3}\sqrt{K}|\mu_{0,m} -\mu_{1,m} | \to c$ for some finite constant $c>0$, then there exists some $1\leq m \leq 3$ such that $\sqrt{K}|\mu_{0,m} -\mu_{1,m} | \to c$ given the convergence of statistics, and thus $|A_m| >0$. Together with the fact that $B_k =1$ for all $1\leq k\leq 3$, this leads to $\mathrm{Power}(T_{max}^{\mathrm{dp}}) > \alpha$ as $n\to\infty$. Therefore, we conclude that the privatized test statistic detects local alternative hypotheses distinct from the null at a rate of $1/\sqrt{n}$.

\section{Numerical studies}
	\label{section:4}
\subsection{Simulation}
In this subsection, we conduct simulations to illustrate the performance of the proposed test. In the following simulations, the sample size $n$ takes values from $\{400, 600, 800\}$, the ratio $y$ of dimension to sample size takes values from $\{0.5, 1, 5\}$, the privacy parameter $\varepsilon$ takes values from $\{1, 2, 4, 8\}$ and the preset parameter $\widetilde{\gamma}$ is chosen to be $2$.  We take the significance level $\alpha = 0.05$. All experiments are repeated $2,000$ times to compute the empirical sizes and powers. We consider two types of populations: Gaussian and Uniform distributions. Specifically, we generate data from the following two models:
\begin{itemize}
\setlength{\itemindent}{0pt}
\item \textrm{Model I:} $\mathbf{x} = \mathbf{\Sigma}^{1/2}\mathbf{z}$,  with $\mathbf{z}\sim \mathcal{N}(\mathbf{0},\mathbf{I}_d)$,
\item \textrm{Model II:} $\mathbf{x} = \mathbf{\Sigma}^{1/2}\mathbf{z}$, with 
 $\mathbf{z}\sim \textrm{Unif}\left([\sqrt{3},\sqrt{3}]^d\right)$.
\end{itemize}
We first consider the covariance matrix with the following structure:
\begin{align*}
\mathbf{\Sigma} = (1+ \delta)\cdot \mathbf{I}_d,
\end{align*}
where the parameter $\delta$ takes values from $\{0,\pm 0.25,\pm0.5 \}$. In particular, the value $\delta = 0$ corresponds to the null hypothesis and $\delta \neq 0$ to the alternative hypotheses. The empirical sizes and powers of the privatized test are reported in Tables \ref{table_model_I}-\ref{table_model_II}.
\begin{landscape}
\begin{table}[htb!]
\tiny
\centering
\setlength
\tabcolsep{6pt}
\renewcommand{\arraystretch}{1.35}
\caption{Empirical sizes and powers for Model I}
\label{table_model_I}
\begin{tabular}{c|c|cccc|cccc|cccc|cccc|cccc}
\hline
&  & \multicolumn{4}{c|}{$\delta = 0$} & \multicolumn{4}{c|}{$\delta = -0.25$}  & \multicolumn{4}{c|}{$\delta = -0.5$} & \multicolumn{4}{c|}{$\delta = 0.25$} & \multicolumn{4}{c}{$\delta = 0.5$} \\ 
\hline
& $(n,d)$   & $\varepsilon = 1$ & $\varepsilon = 2$ & $\varepsilon = 4$ & $\varepsilon = 8$ & $\varepsilon = 1$ & $\varepsilon = 2$ & $\varepsilon = 4$ & $\varepsilon = 8$ & $\varepsilon = 1$ & $\varepsilon = 2$ & $\varepsilon = 4$ & $\varepsilon = 8$ & $\varepsilon = 1$ & $\varepsilon = 2$ & $\varepsilon = 4$ & $\varepsilon = 8$ & $\varepsilon = 1$ & $\varepsilon = 2$ & $\varepsilon = 4$ & $\varepsilon = 8$ \\ \hline
\multirow{3}{*}{$T_1^{\mathrm{dp}}$}     & (400,200) & 0.043             & 0.053             & 0.053             & 0.048             & 0.063             & 0.102             & 0.260             & 0.501             & 0.447             & 0.930             & 1                 & 1                 & 0.179             & 0.171             & 0.108             & 0.070             & 0.484             & 0.799             & 0.723             & 0.743             \\
& (600,300) & 0.046   & 0.058   & 0.048   & 0.059   & 0.079  & 0.113    & 0.344  & 0.658             & 0.518             & 0.984             & 1                 & 1                 & 0.221             & 0.237             & 0.132             & 0.093             & 0.655             & 0.926             & 0.904             & 0.922             \\
                           & (800,400) & 0.054             & 0.055             & 0.049             & 0.043             & 0.085             & 0.115             & 0.382             & 0.778             & 0.596             & 0.996             & 1                 & 1                 & 0.294             & 0.303             & 0.174             & 0.137             & 0.767             & 0.979             & 0.967             & 0.984             \\ \hline
\multirow{3}{*}{$T_2^{\mathrm{dp}}$}     & (400,200) & 0.054             & 0.049             & 0.059             & 0.076             & 0.149             & 0.635             & 1                 & 1                 & 0.346             & 0.859             & 1                 & 1                 & 0.131             & 0.602             & 1                 & 1                 & 0.228             & 0.969             & 1                 & 1                 \\
                           & (600,300) & 0.052             & 0.052             & 0.045             & 0.065             & 0.188             & 0.815             & 1                 & 1                 & 0.390             & 0.945             & 1                 & 1                 & 0.128             & 0.769             & 1                 & 1                 & 0.316             & 0.998             & 1                 & 1                 \\
                           & (800,400) & 0.045             & 0.049             & 0.055             & 0.054             & 0.225             & 0.893             & 1                 & 1                 & 0.467             & 0.974             & 1                 & 1                 & 0.175             & 0.870             & 1                 & 1                 & 0.372             & 1                 & 1                 & 1                 \\ \hline
\multirow{3}{*}{$T_3^{\mathrm{dp}}$}     & (400,200) & 0.058             & 0.058             & 0.060             & 0.062             & 0.209             & 0.635             & 0.994             & 1                 & 0.313             & 0.659             & 0.984             & 1                 & 0.167             & 0.628             & 0.999             & 1                 & 0.370             & 0.985             & 1                 & 1                 \\
                           & (600,300) & 0.050             & 0.052             & 0.051             & 0.054             & 0.256             & 0.806             & 1                 & 1                 & 0.371             & 0.802             & 0.998             & 1                 & 0.197             & 0.799             & 1                 & 1                 & 0.527             & 1.000             & 1                 & 1                 \\
                           & (800,400) & 0.045             & 0.049             & 0.052             & 0.051             & 0.326             & 0.882             & 1                 & 1                 & 0.457             & 0.905             & 1                 & 1                 & 0.262             & 0.904             & 1                 & 1                 & 0.610             & 1                 & 1                 & 1                 \\ \hline
\multirow{3}{*}{$T_{max}^{\mathrm{dp}}$} & (400,200) & 0.054             & 0.057             & 0.064             & 0.076             & 0.156             & 0.600             & 1                 & 1                 & 0.482             & 0.968             & 1                 & 1                 & 0.177             & 0.578             & 1                 & 1                 & 0.429             & 0.979             & 1                 & 1                 \\
                           & (600,300) & 0.046             & 0.061             & 0.044             & 0.060             & 0.205             & 0.775             & 1                 & 1                 & 0.563             & 0.997             & 1                 & 1                 & 0.208             & 0.757             & 1                 & 1                 & 0.609             & 0.998             & 1                 & 1                 \\
                           & (800,400) & 0.053             & 0.052             & 0.053             & 0.053             & 0.258             & 0.868             & 1                 & 1                 & 0.657             & 0.999             & 1                 & 1                 & 0.286             & 0.865             & 1                 & 1                 & 0.717             & 1                 & 1                 & 1                 \\ \hline
\end{tabular}
\vspace{1.5em}

\begin{tabular}{c|c|cccc|cccc|cccc|cccc|cccc}
\hline
                           &           & \multicolumn{4}{c|}{$\delta = 0$}                                             & \multicolumn{4}{c|}{$\delta = -0.25$}                                         & \multicolumn{4}{c|}{$\delta = -0.5$}                                          & \multicolumn{4}{c|}{$\delta = 0.25$}                                          & \multicolumn{4}{c}{$\delta = 0.5$}                                            \\ \hline
                           & $(n,d)$   & $\varepsilon = 1$ & $\varepsilon = 2$ & $\varepsilon = 4$ & $\varepsilon = 8$ & $\varepsilon = 1$ & $\varepsilon = 2$ & $\varepsilon = 4$ & $\varepsilon = 8$ & $\varepsilon = 1$ & $\varepsilon = 2$ & $\varepsilon = 4$ & $\varepsilon = 8$ & $\varepsilon = 1$ & $\varepsilon = 2$ & $\varepsilon = 4$ & $\varepsilon = 8$ & $\varepsilon = 1$ & $\varepsilon = 2$ & $\varepsilon = 4$ & $\varepsilon = 8$ \\ \hline
\multirow{3}{*}{$T_1^{\mathrm{dp}}$}     & (400,400) & 0.051             & 0.053             & 0.054             & 0.048             & 0.219             & 0.326             & 0.189             & 0.097             & 0.453             & 0.214             & 0.171             & 0.600             & 0.188             & 0.515             & 0.695             & 0.577             & 0.339             & 0.966             & 1                 & 1                 \\
                           & (600,600) & 0.058             & 0.046             & 0.044             & 0.057             & 0.325             & 0.452             & 0.256             & 0.110             & 0.595             & 0.260             & 0.188             & 0.752             & 0.210             & 0.685             & 0.855             & 0.782             & 0.485             & 0.997             & 1                 & 1                 \\
                           & (800,800) & 0.056             & 0.053             & 0.047             & 0.044             & 0.423             & 0.556             & 0.313             & 0.143             & 0.714             & 0.282             & 0.229             & 0.820             & 0.279             & 0.805             & 0.931             & 0.882             & 0.556             & 0.999             & 1                 & 1                 \\ \hline
\multirow{3}{*}{$T_2^{\mathrm{dp}}$}     & (400,400) & 0.046             & 0.050             & 0.050             & 0.059             & 0.212             & 0.646             & 1                 & 1                 & 0.562             & 0.984             & 1                 & 1                 & 0.089             & 0.363             & 0.997             & 1                 & 0.143             & 0.740             & 1                 & 1                 \\
                           & (600,600) & 0.052             & 0.044             & 0.049             & 0.053             & 0.209             & 0.795             & 1                 & 1                 & 0.623             & 0.999             & 1                 & 1                 & 0.101             & 0.504             & 1                 & 1                 & 0.188             & 0.886             & 1                 & 1                 \\
                           & (800,800) & 0.051             & 0.045             & 0.053             & 0.054             & 0.238             & 0.877             & 1                 & 1                 & 0.708             & 1                 & 1                 & 1                 & 0.114             & 0.596             & 1                 & 1                 & 0.199             & 0.950             & 1                 & 1                 \\ \hline
\multirow{3}{*}{$T_3^{\mathrm{dp}}$}     & (400,400) & 0.043             & 0.045             & 0.047             & 0.057             & 0.274             & 0.724             & 1                 & 1                 & 0.680             & 0.993             & 1                 & 1                 & 0.126             & 0.466             & 0.991             & 1                 & 0.228             & 0.881             & 1                 & 1                 \\
                           & (600,600) & 0.052             & 0.044             & 0.046             & 0.055             & 0.302             & 0.873             & 1                 & 1                 & 0.753             & 0.999             & 1                 & 1                 & 0.138             & 0.638             & 0.999             & 1                 & 0.332             & 0.974             & 1                 & 1                 \\
                           & (800,800) & 0.051             & 0.050             & 0.047             & 0.061             & 0.364             & 0.938             & 1                 & 1                 & 0.835             & 1                 & 1                 & 1                 & 0.189             & 0.722             & 1                 & 1                 & 0.375             & 0.990             & 1                 & 1                 \\ \hline
\multirow{3}{*}{$T_{max}^{\mathrm{dp}}$} & (400,400) & 0.047             & 0.056             & 0.052             & 0.055             & 0.277             & 0.701             & 1                 & 1                 & 0.724             & 0.991             & 1                 & 1                 & 0.162             & 0.522             & 0.995             & 1                 & 0.302             & 0.944             & 1                 & 1                 \\
                           & (600,600) & 0.057             & 0.043             & 0.048             & 0.059             & 0.348             & 0.858             & 1                 & 1                 & 0.816             & 0.999             & 1                 & 1                 & 0.174             & 0.690             & 0.999             & 1                 & 0.417             & 0.991             & 1                 & 1                 \\
                           & (800,800) & 0.052             & 0.054             & 0.043             & 0.056             & 0.432             & 0.919             & 1                 & 1                 & 0.896             & 1                 & 1                 & 1                 & 0.230             & 0.797             & 1                 & 1                 & 0.478             & 0.998             & 1                 & 1                 \\ \hline
\end{tabular}

\vspace{1.5em}

\begin{tabular}{c|c|cccc|cccc|cccc|cccc|cccc}
\hline
                           &            & \multicolumn{4}{c|}{$\delta = 0$}                                             & \multicolumn{4}{c|}{$\delta = -0.25$}                                         & \multicolumn{4}{c|}{$\delta = -0.5$}                                          & \multicolumn{4}{c|}{$\delta = 0.25$}                                          & \multicolumn{4}{c}{$\delta = 0.5$}                                            \\ \hline
                           & $(n,d)$    & $\varepsilon = 1$ & $\varepsilon = 2$ & $\varepsilon = 4$ & $\varepsilon = 8$ & $\varepsilon = 1$ & $\varepsilon = 2$ & $\varepsilon = 4$ & $\varepsilon = 8$ & $\varepsilon = 1$ & $\varepsilon = 2$ & $\varepsilon = 4$ & $\varepsilon = 8$ & $\varepsilon = 1$ & $\varepsilon = 2$ & $\varepsilon = 4$ & $\varepsilon = 8$ & $\varepsilon = 1$ & $\varepsilon = 2$ & $\varepsilon = 4$ & $\varepsilon = 8$ \\ \hline
\multirow{3}{*}{$T_1^{\mathrm{dp}}$}     & (400,2000) & 0.050             & 0.057             & 0.046             & 0.048             & 0.421             & 0.951             & 1                 & 1                 & 0.880             & 1                 & 1                 & 1                 & 0.153             & 0.613             & 1                 & 1                 & 0.260             & 0.941             & 1                 & 1                 \\
                           & (600,3000) & 0.050             & 0.043             & 0.053             & 0.047             & 0.493             & 0.994             & 1                 & 1                 & 0.946             & 1                 & 1                 & 1                 & 0.185             & 0.782             & 1                 & 1                 & 0.353             & 0.989             & 1                 & 1                 \\
                           & (800,4000) & 0.047             & 0.037             & 0.052             & 0.048             & 0.578             & 0.999             & 1                 & 1                 & 0.980             & 1                 & 1                 & 1                 & 0.222             & 0.881             & 1                 & 1                 & 0.437             & 1                 & 1                 & 1                 \\ \hline
\multirow{3}{*}{$T_2^{\mathrm{dp}}$}     & (400,2000) & 0.046             & 0.060             & 0.045             & 0.050             & 0.218             & 0.775             & 1                 & 1                 & 0.681             & 1                 & 1                 & 1                 & 0.083             & 0.311             & 0.997             & 1                 & 0.122             & 0.648             & 1                 & 1                 \\
                           & (600,3000) & 0.050             & 0.041             & 0.058             & 0.044             & 0.265             & 0.887             & 1                 & 1                 & 0.759             & 1                 & 1                 & 1                 & 0.097             & 0.455             & 1                 & 1                 & 0.145             & 0.801             & 1                 & 1                 \\
                           & (800,4000) & 0.051             & 0.047             & 0.052             & 0.051             & 0.278             & 0.953             & 1                 & 1                 & 0.845             & 1                 & 1                 & 1                 & 0.111             & 0.557             & 1                 & 1                 & 0.166             & 0.887             & 1                 & 1                 \\ \hline
\multirow{3}{*}{$T_3^{\mathrm{dp}}$}     & (400,2000) & 0.050             & 0.049             & 0.046             & 0.044             & 0.388             & 0.941             & 1                 & 1                 & 0.848             & 1                 & 1                 & 1                 & 0.138             & 0.572             & 1                 & 1                 & 0.240             & 0.920             & 1                 & 1                 \\
                           & (600,3000) & 0.050             & 0.044             & 0.057             & 0.054             & 0.449             & 0.990             & 1                 & 1                 & 0.930             & 1                 & 1                 & 1                 & 0.170             & 0.741             & 1                 & 1                 & 0.316             & 0.984             & 1                 & 1                 \\
                           & (800,4000) & 0.046             & 0.039             & 0.052             & 0.050             & 0.530             & 0.998             & 1                 & 1                 & 0.963             & 1                 & 1                 & 1                 & 0.198             & 0.845             & 1                 & 1                 & 0.392             & 0.998             & 1                 & 1                 \\ \hline
\multirow{3}{*}{$T_{max}^{\mathrm{dp}}$} & (400,2000) & 0.043             & 0.058             & 0.044             & 0.048             & 0.381             & 0.940             & 1                 & 1                 & 0.861             & 1                 & 1                 & 1                 & 0.133             & 0.569             & 1                 & 1                 & 0.235             & 0.922             & 1                 & 1                 \\
                           & (600,3000) & 0.049             & 0.041             & 0.052             & 0.049             & 0.453             & 0.990             & 1                 & 1                 & 0.937             & 1                 & 1                 & 1                 & 0.160             & 0.732             & 1                 & 1                 & 0.302             & 0.986             & 1                 & 1                 \\
                           & (800,4000) & 0.047             & 0.045             & 0.046             & 0.048             & 0.530             & 0.999             & 1                 & 1                 & 0.971             & 1                 & 1                 & 1                 & 0.190             & 0.850             & 1                 & 1                 & 0.385             & 0.999             & 1                 & 1                 \\ \hline
\end{tabular}
\end{table}
\end{landscape}

\begin{landscape}
\begin{table}[htb!]
\tiny
		 	\centering
            \setlength
		 	\tabcolsep{6pt}
		 	\renewcommand{\arraystretch}{1.35}
            \caption{Empirical sizes and powers for Model II}
            \label{table_model_II}
\begin{tabular}{c|c|cccc|cccc|cccc|cccc|cccc}
\hline
                           &           & \multicolumn{4}{c|}{$\delta = 0$}                                             & \multicolumn{4}{c|}{$\delta = -0.25$}                                         & \multicolumn{4}{c|}{$\delta = -0.5$}                                          & \multicolumn{4}{c|}{$\delta = 0.25$}                                          & \multicolumn{4}{c}{$\delta = 0.5$}                                            \\ \hline
                           & $(n,d)$   & $\varepsilon = 1$ & $\varepsilon = 2$ & $\varepsilon = 4$ & $\varepsilon = 8$ & $\varepsilon = 1$ & $\varepsilon = 2$ & $\varepsilon = 4$ & $\varepsilon = 8$ & $\varepsilon = 1$ & $\varepsilon = 2$ & $\varepsilon = 4$ & $\varepsilon = 8$ & $\varepsilon = 1$ & $\varepsilon = 2$ & $\varepsilon = 4$ & $\varepsilon = 8$ & $\varepsilon = 1$ & $\varepsilon = 2$ & $\varepsilon = 4$ & $\varepsilon = 8$ \\ \hline
\multirow{3}{*}{$T_1^{\mathrm{dp}}$}     & (400,200) & 0.055             & 0.048             & 0.048             & 0.048             & 0.074             & 0.105             & 0.224             & 0.477             & 0.453             & 0.935             & 1                 & 1                 & 0.171             & 0.151             & 0.094             & 0.079             & 0.487             & 0.763             & 0.726             & 0.711             \\
                           & (600,300) & 0.054             & 0.059             & 0.048             & 0.047             & 0.072             & 0.107             & 0.306             & 0.643             & 0.519             & 0.983             & 1                 & 1                 & 0.237             & 0.218             & 0.134             & 0.100             & 0.662             & 0.918             & 0.896             & 0.920             \\
                           & (800,400) & 0.041             & 0.049             & 0.048             & 0.053             & 0.078             & 0.123             & 0.376             & 0.756             & 0.583             & 0.995             & 1                 & 1                 & 0.275             & 0.266             & 0.158             & 0.124             & 0.778             & 0.974             & 0.966             & 0.980             \\ \hline
\multirow{3}{*}{$T_2^{\mathrm{dp}}$}     & (400,200) & 0.051             & 0.055             & 0.044             & 0.066             & 0.164             & 0.643             & 1                 & 1                 & 0.331             & 0.843             & 1                 & 1                 & 0.135             & 0.578             & 1                 & 1                 & 0.250             & 0.975             & 1                 & 1                 \\
                           & (600,300) & 0.048             & 0.043             & 0.047             & 0.060             & 0.192             & 0.807             & 1                 & 1                 & 0.410             & 0.952             & 1                 & 1                 & 0.145             & 0.751             & 1                 & 1                 & 0.301             & 0.998             & 1                 & 1                 \\
                           & (800,400) & 0.041             & 0.047             & 0.055             & 0.054             & 0.210             & 0.902             & 1                 & 1                 & 0.469             & 0.980             & 1                 & 1                 & 0.166             & 0.859             & 1                 & 1                 & 0.375             & 1                 & 1                 & 1                 \\ \hline
\multirow{3}{*}{$T_3^{\mathrm{dp}}$}     & (400,200) & 0.057             & 0.055             & 0.044             & 0.055             & 0.217             & 0.638             & 0.996             & 1                 & 0.308             & 0.663             & 0.990             & 1                 & 0.158             & 0.616             & 0.996             & 1                 & 0.388             & 0.989             & 1                 & 1                 \\
                           & (600,300) & 0.055             & 0.048             & 0.045             & 0.050             & 0.269             & 0.801             & 1                 & 1                 & 0.381             & 0.823             & 0.999             & 1                 & 0.208             & 0.792             & 1                 & 1                 & 0.522             & 1                 & 1                 & 1                 \\
                           & (800,400) & 0.043             & 0.054             & 0.057             & 0.047             & 0.318             & 0.902             & 1                 & 1                 & 0.457             & 0.897             & 1                 & 1                 & 0.263             & 0.892             & 1                 & 1                 & 0.623             & 1                 & 1                 & 1                 \\ \hline
\multirow{3}{*}{$T_{max}^{\mathrm{dp}}$} & (400,200) & 0.054             & 0.057             & 0.043             & 0.061             & 0.174             & 0.601             & 0.999             & 1                 & 0.482             & 0.967             & 1                 & 1                 & 0.169             & 0.557             & 1                 & 1                 & 0.431             & 0.985             & 1                 & 1                 \\
                           & (600,300) & 0.055             & 0.057             & 0.048             & 0.055             & 0.207             & 0.779             & 1                 & 1                 & 0.576             & 0.996             & 1                 & 1                 & 0.235             & 0.738             & 1                 & 1                 & 0.600             & 1                 & 1                 & 1                 \\
                           & (800,400) & 0.050             & 0.048             & 0.059             & 0.056             & 0.240             & 0.887             & 1                 & 1                 & 0.655             & 1                 & 1                 & 1                 & 0.273             & 0.854             & 1                 & 1                 & 0.725             & 1                 & 1                 & 1                 \\ \hline
\end{tabular}
\vspace{1.5em}

\begin{tabular}{c|c|cccc|cccc|cccc|cccc|cccc}
\hline
                           &           & \multicolumn{4}{c|}{$\delta = 0$}                                             & \multicolumn{4}{c|}{$\delta = -0.25$}                                         & \multicolumn{4}{c|}{$\delta = -0.5$}                                          & \multicolumn{4}{c|}{$\delta = 0.25$}                                          & \multicolumn{4}{c}{$\delta = 0.5$}                                            \\ \hline
                           & $(n,d)$   & $\varepsilon = 1$ & $\varepsilon = 2$ & $\varepsilon = 4$ & $\varepsilon = 8$ & $\varepsilon = 1$ & $\varepsilon = 2$ & $\varepsilon = 4$ & $\varepsilon = 8$ & $\varepsilon = 1$ & $\varepsilon = 2$ & $\varepsilon = 4$ & $\varepsilon = 8$ & $\varepsilon = 1$ & $\varepsilon = 2$ & $\varepsilon = 4$ & $\varepsilon = 8$ & $\varepsilon = 1$ & $\varepsilon = 2$ & $\varepsilon = 4$ & $\varepsilon = 8$ \\ \hline
\multirow{3}{*}{$T_1^{\mathrm{dp}}$}     & (400,400) & 0.056             & 0.052             & 0.053             & 0.057             & 0.228             & 0.317             & 0.202             & 0.112             & 0.443             & 0.200             & 0.174             & 0.571             & 0.174             & 0.517             & 0.663             & 0.561             & 0.372             & 0.957             & 1                 & 1                 \\
                           & (600,600) & 0.051             & 0.055             & 0.053             & 0.053             & 0.320             & 0.462             & 0.260             & 0.120             & 0.594             & 0.256             & 0.194             & 0.747             & 0.216             & 0.679             & 0.854             & 0.754             & 0.437             & 0.994             & 1                 & 1                 \\
                           & (800,800) & 0.054             & 0.053             & 0.057             & 0.041             & 0.404             & 0.562             & 0.324             & 0.135             & 0.710             & 0.275             & 0.255             & 0.846             & 0.267             & 0.802             & 0.933             & 0.872             & 0.539             & 0.999             & 1                 & 1                 \\ \hline
\multirow{3}{*}{$T_2^{\mathrm{dp}}$}     & (400,400) & 0.046             & 0.049             & 0.058             & 0.053             & 0.209             & 0.642             & 1                 & 1                 & 0.540             & 0.985             & 1                 & 1                 & 0.094             & 0.363             & 0.996             & 1                 & 0.148             & 0.744             & 1                 & 1                 \\
                           & (600,600) & 0.051             & 0.043             & 0.054             & 0.056             & 0.221             & 0.792             & 1                 & 1                 & 0.637             & 0.996             & 1                 & 1                 & 0.098             & 0.501             & 1                 & 1                 & 0.158             & 0.876             & 1                 & 1                 \\
                           & (800,800) & 0.048             & 0.046             & 0.052             & 0.058             & 0.241             & 0.882             & 1                 & 1                 & 0.689             & 1                 & 1                 & 1                 & 0.111             & 0.607             & 1                 & 1                 & 0.193             & 0.956             & 1                 & 1                 \\ \hline
\multirow{3}{*}{$T_3^{\mathrm{dp}}$}     & (400,400) & 0.050             & 0.052             & 0.057             & 0.048             & 0.285             & 0.720             & 1                 & 1                 & 0.656             & 0.995             & 1                 & 1                 & 0.131             & 0.459             & 0.985             & 1                 & 0.251             & 0.885             & 1                 & 1                 \\
                           & (600,600) & 0.056             & 0.041             & 0.052             & 0.057             & 0.313             & 0.872             & 1                 & 1                 & 0.767             & 0.999             & 1                 & 1                 & 0.152             & 0.616             & 1                 & 1                 & 0.286             & 0.972             & 1                 & 1                 \\
                           & (800,800) & 0.054             & 0.043             & 0.056             & 0.050             & 0.355             & 0.946             & 1                 & 1                 & 0.827             & 1                 & 1                 & 1                 & 0.181             & 0.748             & 1                 & 1                 & 0.363             & 0.995             & 1                 & 1                 \\ \hline
\multirow{3}{*}{$T_{max}^{\mathrm{dp}}$} & (400,400) & 0.051             & 0.045             & 0.064             & 0.053             & 0.291             & 0.688             & 1                 & 1                 & 0.707             & 0.992             & 1                 & 1                 & 0.158             & 0.510             & 0.989             & 1                 & 0.317             & 0.939             & 1                 & 1                 \\
                           & (600,600) & 0.058             & 0.048             & 0.054             & 0.057             & 0.343             & 0.845             & 1                 & 1                 & 0.828             & 0.998             & 1                 & 1                 & 0.182             & 0.673             & 1                 & 1                 & 0.377             & 0.990             & 1                 & 1                 \\
                           & (800,800) & 0.051             & 0.049             & 0.060             & 0.045             & 0.408             & 0.932             & 1                 & 1                 & 0.898             & 1                 & 1                 & 1                 & 0.214             & 0.794             & 1                 & 1                 & 0.468             & 0.999             & 1                 & 1                 \\ \hline
\end{tabular}

\vspace{1.5em}

\begin{tabular}{c|c|cccc|cccc|cccc|cccc|cccc}
\hline
                           &            & \multicolumn{4}{c|}{$\delta = 0$}                                             & \multicolumn{4}{c|}{$\delta = -0.25$}                                         & \multicolumn{4}{c|}{$\delta = -0.5$}                                          & \multicolumn{4}{c|}{$\delta = 0.25$}                                          & \multicolumn{4}{c}{$\delta = 0.5$}                                            \\ \hline
                           & $(n,d)$    & $\varepsilon = 1$ & $\varepsilon = 2$ & $\varepsilon = 4$ & $\varepsilon = 8$ & $\varepsilon = 1$ & $\varepsilon = 2$ & $\varepsilon = 4$ & $\varepsilon = 8$ & $\varepsilon = 1$ & $\varepsilon = 2$ & $\varepsilon = 4$ & $\varepsilon = 8$ & $\varepsilon = 1$ & $\varepsilon = 2$ & $\varepsilon = 4$ & $\varepsilon = 8$ & $\varepsilon = 1$ & $\varepsilon = 2$ & $\varepsilon = 4$ & $\varepsilon = 8$ \\ \hline
\multirow{3}{*}{$T_1^{\mathrm{dp}}$}     & (400,2000) & 0.055             & 0.049             & 0.051             & 0.057             & 0.420             & 0.948             & 1                 & 1                 & 0.893             & 1                 & 1                 & 1                 & 0.162             & 0.647             & 1                 & 1                 & 0.280             & 0.940             & 1                 & 1                 \\
                           & (600,3000) & 0.046             & 0.051             & 0.050             & 0.046             & 0.510             & 0.989             & 1                 & 1                 & 0.950             & 1                 & 1                 & 1                 & 0.181             & 0.800             & 1                 & 1                 & 0.366             & 0.989             & 1                 & 1                 \\
                           & (800,4000) & 0.051             & 0.061             & 0.047             & 0.050             & 0.567             & 0.998             & 1                 & 1                 & 0.980             & 1                 & 1                 & 1                 & 0.213             & 0.885             & 1                 & 1                 & 0.421             & 0.998             & 1                 & 1                 \\ \hline
\multirow{3}{*}{$T_2^{\mathrm{dp}}$}     & (400,2000) & 0.055             & 0.054             & 0.055             & 0.058             & 0.230             & 0.757             & 1                 & 1                 & 0.694             & 0.999             & 1                 & 1                 & 0.088             & 0.327             & 0.992             & 1                 & 0.119             & 0.640             & 1                 & 1                 \\
                           & (600,3000) & 0.051             & 0.057             & 0.045             & 0.047             & 0.273             & 0.881             & 1                 & 1                 & 0.768             & 1                 & 1                 & 1                 & 0.092             & 0.456             & 1                 & 1                 & 0.155             & 0.805             & 1                 & 1                 \\
                           & (800,4000) & 0.055             & 0.058             & 0.051             & 0.055             & 0.293             & 0.939             & 1                 & 1                 & 0.835             & 1                 & 1                 & 1                 & 0.110             & 0.558             & 1                 & 1                 & 0.164             & 0.885             & 1                 & 1                 \\ \hline
\multirow{3}{*}{$T_3^{\mathrm{dp}}$}     & (400,2000) & 0.053             & 0.053             & 0.049             & 0.057             & 0.389             & 0.931             & 1                 & 1                 & 0.865             & 1                 & 1                 & 1                 & 0.145             & 0.592             & 1                 & 1                 & 0.245             & 0.918             & 1                 & 1                 \\
                           & (600,3000) & 0.050             & 0.054             & 0.048             & 0.051             & 0.477             & 0.984             & 1                 & 1                 & 0.927             & 1                 & 1                 & 1                 & 0.167             & 0.760             & 1                 & 1                 & 0.328             & 0.984             & 1                 & 1                 \\
                           & (800,4000) & 0.050             & 0.053             & 0.048             & 0.051             & 0.525             & 0.997             & 1                 & 1                 & 0.960             & 1                 & 1                 & 1                 & 0.193             & 0.852             & 1                 & 1                 & 0.373             & 0.995             & 1                 & 1                 \\ \hline
\multirow{3}{*}{$T_{max}^{\mathrm{dp}}$} & (400,2000) & 0.054             & 0.053             & 0.052             & 0.059             & 0.383             & 0.935             & 1                 & 1                 & 0.877             & 1                 & 1                 & 1                 & 0.138             & 0.584             & 1                 & 1                 & 0.235             & 0.920             & 1                 & 1                 \\
                           & (600,3000) & 0.046             & 0.054             & 0.045             & 0.048             & 0.473             & 0.983             & 1                 & 1                 & 0.940             & 1                 & 1                 & 1                 & 0.161             & 0.753             & 1                 & 1                 & 0.317             & 0.983             & 1                 & 1                 \\
                           & (800,4000) & 0.053             & 0.056             & 0.046             & 0.053             & 0.523             & 0.998             & 1                 & 1                 & 0.971             & 1                 & 1                 & 1                 & 0.187             & 0.855             & 1                 & 1                 & 0.369             & 0.997             & 1                 & 1                 \\ \hline
\end{tabular}
\end{table}
\end{landscape}

From Tables \ref{table_model_I}-\ref{table_model_II}, it can be observed that under the null hypothesis, our proposed privatized testing method effectively controls Type I errors across different settings. Notably, the empirical sizes remain robust against the different values of privacy parameter $\varepsilon$. Under the alternative hypotheses, as the signal of the test enhances, our proposed test exhibits reasonable power performance, although it depends on privacy parameters. For sufficiently large $(d,n)$, the empirical powers approach one, which confirms the consistency of the test. On the other hand, the likelihood ratio statistic $T_1^{\mathrm{dp}}$ is as powerful as $T_{max}^{\mathrm{dp}}$ in most cases. However, in scenarios where $T_1^{\mathrm{dp}}$ exhibits low power, such as $\delta = 0.25, y = 0.5$ in Table~\ref{table_model_I} and $\delta = -0.25, y = 1$ in Table~\ref{table_model_II},  $T_{max}^{\mathrm{dp}}$ effectively inherits the strong power performance of $T_2^{\mathrm{dp}}$ or $T_3^{\mathrm{dp}}$, providing a more reliable result in such cases.
   
To further examine the power performance of privatized test against the different covariance structure, we design three additional alternatives with the following structures:
	 \begin{enumerate}
	 	\item[(1)] $
	 	H_1: \mathbf{\Sigma} = \text{diag}\{1, 0.05, 0.05, 0.05, \cdots\}$, we denote
its power by Power $1$;
	 	\item[(2)] $
	 	H_1 : \mathbf{\Sigma} = (\sigma_{ij}) $ with $ \sigma_{ij} = 2^{-|i-j|}\, \text{ for } 1\leq i,j\leq d
	 	$, we denote
its power by Power $2$;

   \item[(3)] $H_1: \mathbf{\Sigma} = \text{diag}\{\underbrace{2, \cdots,2}_{d/2}, \underbrace{0.5,\cdots, 0.5}_{d/2}\}$, we denote
its power by Power $3$. 
	 \end{enumerate}
     In which, the scenario (1) was adopted in~\citep{bai2009corrections}. The empirical powers are reported in Tables~\ref{additional_table_model_I}-\ref{additional_table_model_II}.
    
\begin{table}[H]
\tiny
		 	\centering
            \setlength
		 	\tabcolsep{7pt}
		 	\renewcommand{\arraystretch}{1.35}
            \caption{Empirical powers for Model I}
            \label{additional_table_model_I}
\begin{tabular}{c|c|cccc|cccc|cccc}
\hline
                                         &            & \multicolumn{4}{c|}{Power 1}                                                  & \multicolumn{4}{c|}{Power 2}                                                  & \multicolumn{4}{c}{Power 3}                                                   \\ \hline
                                         & $(n,d)$    & $\varepsilon = 1$ & $\varepsilon = 2$ & $\varepsilon = 4$ & $\varepsilon = 8$ & $\varepsilon = 1$ & $\varepsilon = 2$ & $\varepsilon = 4$ & $\varepsilon = 8$ & $\varepsilon = 1$ & $\varepsilon = 2$ & $\varepsilon = 4$ & $\varepsilon = 8$ \\ \hline
\multirow{3}{*}{$T_{1}^{\mathrm{dp}}$}   & (400,200)  & 1            & 1                 & 1                 & 1                 & 0.5075            & 0.951             & 0.9995            & 1                 & 0.572             & 0.945             & 1                 & 1                 \\
                                         & (400,200)  & 1                 & 1                 & 1                 & 1                 & 0.657             & 0.995             & 1                 & 1                 & 0.748             & 0.994             & 1                 & 1                 \\
                                         & (400,200)  & 1                 & 1                 & 1                 & 1                 & 0.7635            & 0.999             & 1                 & 1                 & 0.845             & 1                 & 1                 & 1                 \\ \hline
\multirow{3}{*}{$T_{2}^{\mathrm{dp}}$}   & (400,200)  & 0.997             & 1                 & 1                 & 1                 & 0.496             & 0.999             & 1                 & 1                 & 0.424             & 0.999             & 1                 & 1                 \\
                                         & (400,200)  & 1                 & 1                 & 1                 & 1                 & 0.627             & 1                 & 1                 & 1                 & 0.551             & 1                 & 1                 & 1                 \\
                                         & (400,200)  & 1                 & 1                 & 1                 & 1                 & 0.730             & 1                 & 1                 & 1                 & 0.656             & 1                 & 1                 & 1                 \\ \hline
\multirow{3}{*}{$T_{3}^{\mathrm{dp}}$}   & (400,200)  & 1                 & 1                 & 1                 & 1                 & 0.596             & 0.997             & 1                 & 1                 & 0.599             & 0.999             & 1                 & 1                 \\
                                         & (400,200)  & 1                 & 1                 & 1                 & 1                 & 0.749             & 1                 & 1                 & 1                 & 0.771             & 1                 & 1                 & 1                 \\
                                         & (400,200)  & 1                 & 1                 & 1                 & 1                 & 0.858             & 1                 & 1                 & 1                 & 0.869             & 1                 & 1                 & 1                 \\ \hline
\multirow{3}{*}{$T_{max}^{\mathrm{dp}}$} & (400,200)  & 1                 & 1                 & 1                 & 1                 & 0.598             & 0.999             & 1                 & 1                 & 0.609             & 1                 & 1                 & 1                 \\
                                         & (600,300)  & 1                 & 1                 & 1                 & 1                 & 0.756             & 1                 & 1                 & 1                 & 0.778             & 1                 & 1                 & 1                 \\
                                         & (800,400)  & 1                 & 1                 & 1                 & 1                 & 0.856             & 1                 & 1                 & 1                 & 0.869             & 1                 & 1                 & 1                 \\ \hline\hline
\multirow{3}{*}{$T_{1}^{\mathrm{dp}}$}   & (400,400)  & 0.815             & 1                 & 1                 & 1                 & 0.199             & 0.595             & 0.932             & 0.999             & 0.334             & 0.914             & 0.999             & 1                 \\
                                         & (600,600)  & 0.905             & 1                 & 1                 & 1                 & 0.220             & 0.747             & 0.988             & 1                 & 0.412             & 0.980             & 1                 & 1                 \\
                                         & (800,800)  & 0.928             & 1                 & 1                 & 1                 & 0.262             & 0.843             & 0.998             & 1                 & 0.504             & 0.994             & 1                 & 1                 \\ \hline
\multirow{3}{*}{$T_{2}^{\mathrm{dp}}$}   & (400,400)  & 0.457             & 0.933             & 1                 & 1                 & 0.179             & 0.694             & 1                 & 1                 & 0.179             & 0.817             & 1                 & 1                 \\
                                         & (600,600)  & 0.615             & 0.987             & 1                 & 1                 & 0.213             & 0.836             & 1                 & 1                 & 0.225             & 0.934             & 1                 & 1                 \\
                                         & (800,800)  & 0.681             & 0.995             & 1                 & 1                 & 0.223             & 0.896             & 1                 & 1                 & 0.254             & 0.968             & 1                 & 1                 \\ \hline
\multirow{3}{*}{$T_{3}^{\mathrm{dp}}$}   & (400,400)  & 0.599             & 0.996             & 1                 & 1                 & 0.210             & 0.636             & 0.996             & 1                 & 0.280             & 0.886             & 1                 & 1                 \\
                                         & (600,600)  & 0.741             & 1                 & 1                 & 1                 & 0.249             & 0.799             & 1                 & 1                 & 0.348             & 0.971             & 1                 & 1                 \\
                                         & (800,800)  & 0.815             & 1                 & 1                 & 1                 & 0.289             & 0.873             & 1                 & 1                 & 0.418             & 0.989             & 1                 & 1                 \\ \hline
\multirow{3}{*}{$T_{max}^{\mathrm{dp}}$} & (400,400)  & 0.803             & 1                 & 1                 & 1                 & 0.211             & 0.707             & 1                 & 1                 & 0.300             & 0.912             & 1                 & 1                 \\
                                         & (600,600)  & 0.898             & 1                 & 1                 & 1                 & 0.246             & 0.852             & 1                 & 1                 & 0.371             & 0.982             & 1                 & 1                 \\
                                         & (800,800)  & 0.929             & 1                 & 1                 & 1                 & 0.281             & 0.916             & 1                 & 1                 & 0.457             & 0.996             & 1                 & 1                 \\ \hline\hline
\multirow{3}{*}{$T_{1}^{\mathrm{dp}}$}   & (400,2000) & 0.999             & 1                 & 1                 & 1                 & 0.070             & 0.076             & 0.155             & 0.333             & 0.192             & 0.704             & 1                 & 1                 \\
                                         & (600,3000) & 1                 & 1                 & 1                 & 1                 & 0.051             & 0.092             & 0.197             &  0.446                 & 0.236             & 0.854             & 1                 & 1                 \\
                                         & (800,4000) & 1                 & 1                 & 1                 & 1                 & 0.056             & 0.118             & 0.250             & 0.536                  & 0.275             & 0.933             & 1                 & 1                 \\ \hline
\multirow{3}{*}{$T_{2}^{\mathrm{dp}}$}   & (400,2000) & 0.970             & 1                 & 1                 & 1                 & 0.063             & 0.078             & 0.360             & 0.953             & 0.116             & 0.421             & 1                 & 1                 \\
                                         & (600,3000) & 0.993             & 1                 & 1                 & 1                 & 0.059             & 0.109             & 0.475             &  0.998                 & 0.106             & 0.572             & 1                 & 1                 \\
                                         & (800,4000) & 0.999             & 1                 & 1                 & 1                 & 0.059             & 0.126             & 0.584             &           1        & 0.133             & 0.679             & 1                 & 1                 \\ \hline
\multirow{3}{*}{$T_{3}^{\mathrm{dp}}$}   & (400,2000) & 0.993             & 1                 & 1                 & 1                 & 0.064             & 0.071             & 0.136             & 0.132             & 0.176             & 0.675             & 1                 & 1                 \\
                                         & (600,3000) & 0.999             & 1                 & 1                 & 1                 & 0.050             & 0.094             & 0.155             &  0.159                 & 0.218             & 0.833             & 1                 & 1                 \\
                                         & (800,4000) & 1                 & 1                 & 1                 & 1                 & 0.054             & 0.115             & 0.199             &              0.186     & 0.249             & 0.911             & 1                 & 1                 \\ \hline
\multirow{3}{*}{$T_{max}^{\mathrm{dp}}$} & (400,2000) & 0.997             & 1                 & 1                 & 1                 & 0.063             & 0.083             & 0.302             & 0.939             & 0.169             & 0.665             & 1                 & 1                 \\
                                         & (600,3000) & 1                 & 1                 & 1                 & 1                 & 0.060             & 0.104             & 0.408             & 0.994                  & 0.199             & 0.824             & 1                 & 1                 \\
                                         & (800,4000) & 1                 & 1                 & 1                 & 1                 & 0.060             & 0.136             & 0.529             &  1                 & 0.242             & 0.912             & 1                 & 1                 \\ \hline
\end{tabular}
\end{table}

\begin{table}[H]
\tiny
		 	\centering
            \setlength
		 	\tabcolsep{7pt}
		 	\renewcommand{\arraystretch}{1.35}
            \caption{Empirical powers for Model II}
            \label{additional_table_model_II}
            \begin{tabular}{c|c|cccc|cccc|cccc}
\hline
                                         &            & \multicolumn{4}{c|}{Power 1}                                                  & \multicolumn{4}{c|}{Power 2}                                                  & \multicolumn{4}{c}{Power 3}                                                   \\ \hline
                                         & $(n,d)$    & $\varepsilon = 1$ & $\varepsilon = 2$ & $\varepsilon = 4$ & $\varepsilon = 8$ & $\varepsilon = 1$ & $\varepsilon = 2$ & $\varepsilon = 4$ & $\varepsilon = 8$ & $\varepsilon = 1$ & $\varepsilon = 2$ & $\varepsilon = 4$ & $\varepsilon = 8$ \\ \hline
\multirow{3}{*}{$T_{1}^{\mathrm{dp}}$}   & (400,200)  & 1                 & 1                 & 1                 & 1                 & 0.492             & 0.954             & 1                 & 1                 & 0.569             & 0.949             & 1                 & 1                 \\
                                         & (400,200)  & 1                 & 1                 & 1                 & 1                 & 0.653             & 0.999             & 1                 & 1                 & 0.747             & 0.994             & 1                 & 1                 \\
                                         & (400,200)  & 1                 & 1                 & 1                 & 1                 & 0.769             & 1                 & 1                 & 1                 & 0.838             & 1                 & 1                 & 1                 \\ \hline
\multirow{3}{*}{$T_{2}^{\mathrm{dp}}$}   & (400,200)  & 0.998             & 1                 & 1                 & 1                 & 0.492             & 0.998             & 1                 & 1                 & 0.421             & 0.998             & 1                 & 1                 \\
                                         & (400,200)  & 1                 & 1                 & 1                 & 1                 & 0.625             & 1                 & 1                 & 1                 & 0.546             & 1                 & 1                 & 1                 \\
                                         & (400,200)  & 1                 & 1                 & 1                 & 1                 & 0.731             & 1                 & 1                 & 1                 & 0.654             & 1                 & 1                 & 1                 \\ \hline
\multirow{3}{*}{$T_{3}^{\mathrm{dp}}$}   & (400,200)  & 1                 & 1                 & 1                 & 1                 & 0.605             & 0.998             & 1                 & 1                 & 0.599             & 1                 & 1                 & 1                 \\
                                         & (400,200)  & 1                 & 1                 & 1                 & 1                 & 0.736             & 1                 & 1                 & 1                 & 0.771             & 1                 & 1                 & 1                 \\
                                         & (400,200)  & 1                 & 1                 & 1                 & 1                 & 0.859             & 1                 & 1                 & 1                 & 0.868             & 1                 & 1                 & 1                 \\ \hline
\multirow{3}{*}{$T_{max}^{\mathrm{dp}}$} & (400,200)  & 1                 & 1                 & 1                 & 1                 & 0.603             & 1                 & 1                 & 1                 & 0.604             & 1                 & 1                 & 1                 \\
                                         & (600,300)  & 1                 & 1                 & 1                 & 1                 & 0.753             & 1                 & 1                 & 1                 & 0.774             & 1                 & 1                 & 1                 \\
                                         & (800,400)  & 1                 & 1                 & 1                 & 1                 & 0.863             & 1                 & 1                 & 1                 & 0.869             & 1                 & 1                 & 1                 \\ \hline \hline 
\multirow{3}{*}{$T_{1}^{\mathrm{dp}}$}   & (400,400)  & 0.816             & 1                 & 1                 & 1                 & 0.194             & 0.598             & 0.926             & 0.997             & 0.333             & 0.891             & 0.999             & 1                 \\
                                         & (600,600)  & 0.892             & 1                 & 1                 & 1                 & 0.215             & 0.750             & 0.990             & 1                 & 0.417             & 0.980             & 1                 & 1                 \\
                                         & (800,800)  & 0.950             & 1                 & 1                 & 1                 & 0.275             & 0.858             & 1                 & 1                 & 0.501             & 0.995             & 1                 & 1                 \\ \hline
\multirow{3}{*}{$T_{2}^{\mathrm{dp}}$}   & (400,400)  & 0.455             & 0.934             & 1                 & 1                 & 0.179             & 0.677             & 1                 & 1                 & 0.186             & 0.790             & 1                 & 1                 \\
                                         & (600,600)  & 0.612             & 0.983             & 1                 & 1                 & 0.186             & 0.831             & 1                 & 1                 & 0.222             & 0.928             & 1                 & 1                 \\
                                         & (800,800)  & 0.708             & 0.997             & 1                 & 1                 & 0.229             & 0.910             & 1                 & 1                 & 0.247             & 0.973             & 1                 & 1                 \\ \hline
\multirow{3}{*}{$T_{3}^{\mathrm{dp}}$}   & (400,400)  & 0.595             & 0.991             & 1                 & 1                 & 0.206             & 0.620             & 0.996             & 1                 & 0.268             & 0.878             & 1                 & 1                 \\
                                         & (600,600)  & 0.734             & 1                 & 1                 & 1                 & 0.231             & 0.787             & 1                 & 1                 & 0.348             & 0.964             & 1                 & 1                 \\
                                         & (800,800)  & 0.837             & 1                 & 1                 & 1                 & 0.288             & 0.887             & 1                 & 1                 & 0.406             & 0.993             & 1                 & 1                 \\ \hline
\multirow{3}{*}{$T_{max}^{\mathrm{dp}}$} & (400,400)  & 0.796             & 1                 & 1                 & 1                 & 0.209             & 0.694             & 1                 & 1                 & 0.295             & 0.896             & 1                 & 1                 \\
                                         & (600,600)  & 0.882             & 1                 & 1                 & 1                 & 0.226             & 0.856             & 1                 & 1                 & 0.371             & 0.979             & 1                 & 1                 \\
                                         & (800,800)  & 0.946             & 1                 & 1                 & 1                 & 0.287             & 0.927             & 1                 & 1                 & 0.446             & 0.998             & 1                 & 1                 \\ \hline\hline 
\multirow{3}{*}{$T_{1}^{\mathrm{dp}}$}   & (400,2000) & 0.998             & 1                 & 1                 & 1                 & 0.065             & 0.075             & 0.152             & 0.360             & 0.183             & 0.719             & 1                 & 1                 \\
                                         & (600,3000) & 1                 & 1                 & 1                 & 1                 & 0.054             & 0.080             & 0.192             & 0.452             & 0.221             & 0.868             & 1                 & 1                 \\
                                         & (800,4000) & 1                 & 1                 & 1                 & 1                 & 0.052             & 0.103             & 0.252             & 0.559             & 0.253             & 0.936             & 1                 & 1                 \\ \hline
\multirow{3}{*}{$T_{2}^{\mathrm{dp}}$}   & (400,2000) & 0.971             & 1                 & 1                 & 1                 & 0.074             & 0.091             & 0.354             & 0.963             & 0.101             & 0.442             & 0.999             & 1                 \\
                                         & (600,3000) & 0.994             & 1                 & 1                 & 1                 & 0.060             & 0.096             & 0.455             & 0.993             & 0.106             & 0.594             & 1                 & 1                 \\
                                         & (800,4000) & 1.000             & 1                 & 1                 & 1                 & 0.052             & 0.116             & 0.577             & 0.999             & 0.134             & 0.688             & 1                 & 1                 \\ \hline
\multirow{3}{*}{$T_{3}^{\mathrm{dp}}$}   & (400,2000) & 0.993             & 1                 & 1                 & 1                 & 0.065             & 0.079             & 0.129             & 0.144             & 0.165             & 0.685             & 1                 & 1                 \\
                                         & (600,3000) & 0.999             & 1                 & 1                 & 1                 & 0.052             & 0.084             & 0.154             & 0.176             & 0.202             & 0.838             & 1                 & 1                 \\
                                         & (800,4000) & 1                 & 1                 & 1                 & 1                 & 0.051             & 0.106             & 0.196             & 0.195             & 0.232             & 0.918             & 1                 & 1                 \\ \hline
\multirow{3}{*}{$T_{max}^{\mathrm{dp}}$} & (400,2000) & 0.997             & 1                 & 1                 & 1                 & 0.069             & 0.090             & 0.303             & 0.949             & 0.152             & 0.678             & 1                 & 1                 \\
                                         & (600,3000) & 1                 & 1                 & 1                 & 1                 & 0.056             & 0.094             & 0.383             & 0.991             & 0.187             & 0.838             & 1                 & 1                 \\
                                         & (800,4000) & 1                 & 1                 & 1                 & 1                 & 0.055             & 0.114             & 0.513             & 1                 & 0.222             & 0.918             & 1                 & 1                 \\ \hline
\end{tabular}
\end{table}
From the results reported in Tables~\ref{additional_table_model_I}-\ref{additional_table_model_II}, we observe that under additional alternatives, our proposed test shows a similar pattern of increasing empirical powers as $(d,n)$ grows larger.  It is noteworthy that even for small $(d,n)$ and high privacy level (e.g. $\varepsilon=1$ for Power 1), the privatized test still maintains satisfactory power levels. This suggests that the privatized test is robust against the different covariance structures. For $y = 0.5$, we also compare the privatized test with the methods proposed in~\citep{bai2009corrections, jiang2012likelihood}, and we find their empirical power reaches one in all settings. Although for high privacy levels ($\varepsilon = 1, 2$), the empirical power of the privatized test catches up and aligns with the performance of CLR test. Similarly to the observations above, in scenarios where $T_1^{\mathrm{dp}}$ exhibits low power, such as Power 2 with $y = 0.5$ in Table~\ref{additional_table_model_I} and Power 2 with $y = 5$ in Table~\ref{additional_table_model_II}, $T_{max}^{\mathrm{dp}}$ eliminates low power due to strong power performance of $T_2^{\mathrm{dp}}$ or $T_3^{\mathrm{dp}}$, which illustrates the integrated strategy of necessity once again.

\subsection{Real data analysis}

 In this subsection, we apply our proposed test to the sonar
 data, which is available in the UCI database (\url{https://archive.ics.uci.edu/}).
 The sonar dataset is a collection of sonar returns, which are the amplitudes of bouncing signals off underwater objects. This dataset was analyzed  by~\citep{yi2013sure} and \citep{qiu2015bandwidth}. It consists of $208$ observations, $60$ input variables, and one output variable, where each observation represents a sonar return. These returns were obtained by bouncing signals off a metal cylinder and a cylindrically shaped rock, with $111$ sonar returns from the metal cylinder and $97$ returns from the rock.
 
 In the study conducted by~\citep{yi2013sure}, it was found that the sample covariance matrix of the $60$ input variables exhibited a decaying pattern: the diagonal values decreased noticeably as the distance from the diagonal increased. Building upon this finding, we are interested in testing whether the sample correlation matrix $\mathbf{R}$ of this dataset is equal to the identity matrix $\mathbf{I}_d$. Specifically, we conduct the following hypothesis testing as:
 \begin{equation}
 	\label{correlation test 1}
 	H_0: \mathbf{R} = \mathbf{I}_d \quad \mathrm{versus} \quad H_1: \mathbf{R} \neq \mathbf{I}_d.
 \end{equation}
Here the privacy parameter $\varepsilon$ takes values from $\{1, 2, 4, 8\}$. To verify the results, we also use the CLR statistic proposed in~\citep{bai2009corrections}, which is consistent with our findings.  From Table \ref{Table: 4.5}, we observe that all the $p$-values are extremely close to zero, leading us to reject the null hypothesis $H_0: \mathbf{R} = \mathbf{I}_d$. Therefore, it is reasonable to conclude  that the covairates are correlated with each other.  This result aligns with~\citep{zhubandwidth}, which suggests that the covariance matrix exhibits a potentially bundleable structure.

 \begin{table}[H]
 	\renewcommand{\arraystretch}{1.3}
 	\centering
 	\caption{$p$-values for the sonar data}
 	\begin{tabular}{ccccc}
\hline
                        & $\varepsilon  = 1$ & $\varepsilon  = 2$ & $\varepsilon  = 4$ & $\varepsilon  = 8$ \\ \hline
$T_1^{\mathrm{dp}}$     & 3.24e-09           & 0                  & 0                  & 0                  \\
$T_2^{\mathrm{dp}}$     & 0                  & 0                  & 0                  & 0                  \\
$T_3^{\mathrm{dp}}$     & 5.80e-13           & 0                  & 0                  & 0                  \\
$T_{max}^{\mathrm{dp}}$ & 0                  & 0                  & 0                  & 0                  \\ \hline
\end{tabular}
 \label{Table: 4.5}
 \end{table}

\section{Concluding remarks}
\label{section:5}
In this paper, we propose a privatized test statistic using classical privacy mechanisms for high-dimensional covariance matrices within the differential privacy framework.  To strike a balanced trade-off between statistical accuracy and privacy protection, we further characterize the global sensitivity of sample eigenvalues for sub-Gaussian populations to determine a suitable noise scale. The constructed test can effectively control the Type I error while exhibiting satisfactory power performance. Our analysis provides new insights for other types of statistical tasks with privacy constraints, particularly in the context of unbounded populations. We also points to some limitations and intriguing directions for future investigation.

First, throughout our discussion, the ratio of dimension to sample size is of constant order, namely, $y \in (0, \infty)$. As claimed in Theorem \ref{upper bound of eigenvalue sensitivity}, to achieve privacy protection with high probability, the noise assigned for sample eigenvalues must have a standard deviation of $d/n$-order. Therefore, the proposed statistics and privacy analysis are not valid for the case of $d/n \to \infty$. It would be an interesting question that how to construct an appropriate statistic in this regime. Secondly, our work concentrates on the  one-sample test for  covariance matrice. There exists extensive research on comparing  covariance matrices between two or multiple populations, which is also attractive in contemporary applications \citep{bai2009corrections, jiang2012likelihood, zheng2015substitution}. It is therefore of interest to extend our method to two-sample or multi-sample covariance tests under DP. Thirdly, for other complex data types that frequently appear in real-world applications, such as time series~\citep{hamilton2020time}, panel data~\citep{hsiao2022analysis}, privacy concerns should be given greater consideration when performing statistical inference on these data types. Constructing privacy-preserving statistics tailored to specific data structures is another interesting and challenging direction. 

\bibliographystyle{apalike}
\bibliography{ref}

\appendix

\section{Proof of main results}
\label{app: A}
\subsection{Proof of Theorem \ref{upper bound of eigenvalue sensitivity}}

We frist introduce a lemma about tail bound for quadratic forms of sub-Gaussian vectors, which can be found in~\citep{hsu2012tail}.

\begin{lemma}\label{subgaussian tail bound}
Let $\mathbf A \in \mathbb{R}^{n \times n} $ be a matrix, and ${\mathbf \Sigma} = \mathbf A^{\top}\mathbf A$. Suppose that $\mathbf {z} = (z_1, \cdots ,z_n)^{\top}$ be a random vector such that, for some $\boldsymbol{\mu} \in \mathbb R^n$ and $\sigma > 0$, 
\begin{equation*}
\mathbb E \left[ \exp\left\{ \boldsymbol{\alpha}^{\top}(\mathbf z - \boldsymbol{\mu}) \right\}\right] \leq \exp\left( \frac{\|\boldsymbol{\alpha}\|_2^2 \sigma^2}{2}\right)
\end{equation*}
for all $\boldsymbol{\alpha} \in \mathbb R^n$. Then for all $t>0$, 
\begin{equation*}
\mathbb{P}\left(\|\mathbf{Az}\|_2^2>\sigma^2\cdot\left\{\mathrm{Tr} (\mathbf{\Sigma})+2\sqrt{\mathrm{Tr}(\mathbf{\Sigma}^2)t}+2\| \mathbf{\Sigma}\|t\right\}+\|\mathbf{A}\boldsymbol{\mu}\|_2^2\cdot\left[1+2\left\{\frac{\|\mathbf{\Sigma}\|^2}{\mathrm{Tr}( \mathbf{\Sigma}^2)}t\right\}^{1/2}\right]\right)\leq e^{-t}.
\end{equation*}    
\end{lemma}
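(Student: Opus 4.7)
The plan is to prove this via the Chernoff--Cram\'er approach: for any $\eta>0$,
$$\PP(\|\mathbf{Az}\|_2^2>r) \le e^{-\eta r}\,\EE\bigl[e^{\eta \|\mathbf{Az}\|_2^2}\bigr],$$
so everything reduces to controlling the moment generating function and then optimizing $\eta$ so that the right-hand side equals $e^{-t}$ when $r$ is the claimed upper bound.

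Setting $\mathbf{w}:=\mathbf{z}-\boldsymbol{\mu}$ (which is centered sub-Gaussian with parameter $\sigma$ by hypothesis), I would expand
$$\|\mathbf{Az}\|_2^2 = \mathbf{w}^\top\mathbf{\Sigma}\mathbf{w} + 2(\mathbf{\Sigma}\boldsymbol{\mu})^\top\mathbf{w} + \|\mathbf{A}\boldsymbol{\mu}\|_2^2,$$
and linearize the quadratic piece by the Gaussian identity $e^{\eta\|\mathbf{u}\|_2^2}=\EE_g[\exp(\sqrt{2\eta}\,g^\top \mathbf{u})]$ with $g\sim\cN(0,\mathbf{I}_n)$, applied to $\mathbf{u}=\mathbf{Aw}$. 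Fubini converts the inner expectation over $\mathbf{w}$ into that of $\exp\{\langle\sqrt{2\eta}\,\mathbf{A}^\top g+2\eta\,\mathbf{\Sigma}\boldsymbol{\mu},\,\mathbf{w}\rangle\}$, which the sub-Gaussian hypothesis dominates by $\exp\bigl\{\tfrac{\sigma^2}{2}\|\sqrt{2\eta}\,\mathbf{A}^\top g+2\eta\,\mathbf{\Sigma}\boldsymbol{\mu}\|_2^2\bigr\}$. What remains is a Gaussian integral of the form $\EE_g\bigl[\exp(g^\top\mathbf{M}g+2g^\top\mathbf{b})\bigr]$ with $\mathbf{M}=\sigma^2\eta\,\mathbf{A}\mathbf{A}^\top$ and $\mathbf{b}$ linear in $\mathbf{\Sigma}\boldsymbol{\mu}$, which is tractable whenever $\eta<1/(2\sigma^2\|\mathbf{\Sigma}\|)$.

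Evaluating that integral in closed form produces $\det(\mathbf{I}-2\mathbf{M})^{-1/2}\exp\bigl\{2\mathbf{b}^\top(\mathbf{I}-2\mathbf{M})^{-1}\mathbf{b}\bigr\}$. Diagonalizing $\mathbf{\Sigma}$ with eigenvalues $\{\lambda_i\}$ and applying $-\log(1-x)\le x+\tfrac{x^2}{2(1-x)}$ for $x\in[0,1)$ turns the log-determinant into $\sigma^2\eta\,\mathrm{Tr}(\mathbf{\Sigma})+\tfrac{2\sigma^4\eta^2\,\mathrm{Tr}(\mathbf{\Sigma}^2)}{1-2\sigma^2\eta\|\mathbf{\Sigma}\|}$, while the quadratic-in-$\mathbf{b}$ correction is controlled in terms of $\|\mathbf{A}\boldsymbol{\mu}\|_2^2$. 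Feeding the resulting MGF bound into the Chernoff inequality and choosing $\eta$ so that $2\sigma^2\eta\|\mathbf{\Sigma}\|<1$ and the curvature term is balanced against $t$ yields the stated probability bound.

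The principal technical obstacle is the bookkeeping around the cross-term $\mathbf{b}$: one has to extract $\mathbf{b}^\top(\mathbf{I}-2\mathbf{M})^{-1}\mathbf{b}$ sharply so that it contributes exactly the mean-correction $\|\mathbf{A}\boldsymbol{\mu}\|_2^2\bigl[1+2\{\|\mathbf{\Sigma}\|^2 t/\mathrm{Tr}(\mathbf{\Sigma}^2)\}^{1/2}\bigr]$, while simultaneously tuning $\eta$ so that the two curvature contributions collapse to the clean form $\sigma^2\bigl\{2\sqrt{\mathrm{Tr}(\mathbf{\Sigma}^2)t}+2\|\mathbf{\Sigma}\|t\bigr\}$ with the precise constants $2,2$. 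The $\boldsymbol{\mu}=\mathbf{0}$ special case reduces to a Laurent--Massart-type chi-squared tail for sub-Gaussian quadratic forms; the extra effort lies in propagating a nonzero $\boldsymbol{\mu}$ through the Gaussian integral without slackening constants.
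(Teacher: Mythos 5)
The paper does not prove this lemma at all: it is imported verbatim (as Lemma~\ref{subgaussian tail bound}) from \citet{hsu2012tail}, so there is no in-paper argument to compare against. Your sketch is, in fact, a faithful reconstruction of the proof given in that reference: Chernoff bound, Gaussian linearization $e^{\eta\|\mathbf u\|_2^2}=\EE_g[\exp(\sqrt{2\eta}\,g^{\top}\mathbf u)]$ applied to $\mathbf u=\mathbf A\mathbf w$, Fubini, the sub-Gaussian MGF hypothesis, the exact Gaussian integral $\det(\mathbf I-2\mathbf M)^{-1/2}\exp\{2\mathbf b^{\top}(\mathbf I-2\mathbf M)^{-1}\mathbf b\}$, and a sub-gamma optimization of $\eta$. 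The route is viable and yields exactly the stated constants, since the elementary bound $-\tfrac12\log(1-x)\le \tfrac{x}{2}+\tfrac{x^2}{4(1-x)}$ with $x=2\sigma^2\eta\lambda_i(\mathbf\Sigma)$ gives a log-MGF of the form $\sigma^2\eta\,\mathrm{Tr}(\mathbf\Sigma)+\eta^2 v/\{2(1-c\eta)\}$ with $v=2\sigma^4\mathrm{Tr}(\mathbf\Sigma^2)$ and $c=2\sigma^2\|\mathbf\Sigma\|$, from which $\PP\{Z>\sigma^2\mathrm{Tr}(\mathbf\Sigma)+\sqrt{2vt}+ct\}\le e^{-t}$ is standard. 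Two small points: your displayed curvature term $2\sigma^4\eta^2\mathrm{Tr}(\mathbf\Sigma^2)/(1-2\sigma^2\eta\|\mathbf\Sigma\|)$ carries a spurious factor of $2$ (the correct coefficient is $1$, and with the factor $2$ you would land on $2\sqrt{2\,\mathrm{Tr}(\mathbf\Sigma^2)t}$ rather than $2\sqrt{\mathrm{Tr}(\mathbf\Sigma^2)t}$); and the $\boldsymbol\mu$-dependent bookkeeping that you defer is genuinely where the remaining work sits --- in \citet{hsu2012tail} the cross term is absorbed as $\eta\|\mathbf A\boldsymbol\mu\|_2^2/(1-2\sigma^2\eta\|\mathbf\Sigma\|)$ in the MGF bound, and the stated coefficient $1+2\{\|\mathbf\Sigma\|^2t/\mathrm{Tr}(\mathbf\Sigma^2)\}^{1/2}$ then falls out of the same choice of $\eta$, so no separate tuning is needed. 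With those two repairs your plan closes completely.
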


\begin{proof}[Proof of Theorem~\ref{upper bound of eigenvalue sensitivity}]
Let $\mathbf X,\tilde{\mathbf X} \in \mathbb{R}^{n \times d}$ be neighboring design matrices drawn from $\mathbf x$. Following the  similar approach in~\citep{amin2019differentially}, we assume that $\mathbf X$ and $\tilde{\mathbf X}$ differ only in the first row and rewrite
\begin{equation*}
\mathbf{X} =\begin{bmatrix}
\mathbf{x}_1^{\top}\\
\mathbf{Y}
\end{bmatrix}, \quad 
\tilde{\mathbf{X}} =\begin{bmatrix}
\tilde{\mathbf{x}}_1^{\top}\\
\mathbf{Y}
\end{bmatrix},
\end{equation*}
then the sample covariance matrix defined by $\mathbf{X}$ is $ n^{-1} \mathbf{X}^{\top} \mathbf{X}=n^{-1}  \mathbf{Y}^{\top} \mathbf{Y}+n^{-1} \mathbf{x}_1 \mathbf{x}_1^{\top}$.
For any $\mathbf v \in \mathbb R^d$,
\$
\mathbf{v}^{\top}\mathbf{X}^{\top} \mathbf{X}\mathbf{v} - 	\mathbf{v}^{\top}\mathbf{Y}^{\top} \mathbf{Y}\mathbf{v} = 
\mathbf{v}^{\top}\mathbf{x}_1 \mathbf{x}_1^{\top}\mathbf{v} \geq 0,
\$
then it holds that
\$
\lambda_j \left(\frac{1}{n} \mathbf{X}^{\top} \mathbf{X}\right)  \geq \lambda_j \left(\frac{1}{n} \mathbf{Y}^{\top} \mathbf{Y}\right),\quad 1\leq j\leq d,
\$
which implies that
\$
\left \|\Lambda \left(\frac{1}{n} \mathbf{X}^{\top} \mathbf{X}\right)-\Lambda \left(\frac{1}{n} \mathbf{Y}^{\top} \mathbf{Y}\right)\right\|_1 =  \text{Tr}\left(\frac{1}{n}\mathbf{X}^{\top} \mathbf{X}-\frac{1}{n}\mathbf{Y}^{\top} \mathbf{Y}\right) = \frac{1}{n}\mathbf{x}_1^{\top} \mathbf{x}_1.
\$
The same results holds for $\tilde{\mathbf X}$ so we have
$$
\left\| \Lambda \left(\frac{1}{n}\mathbf{X}^{\top} \mathbf{X}\right)  -\Lambda \left(\frac{1}{n} \tilde{\mathbf{X}}^{\top} \tilde{\mathbf{X}}\right)\right\|_1 \leq \frac{1}{n}{\textbf{x}}_1^{\top} {\textbf{x}}_1+\frac{1}{n}\tilde{\textbf{x}}_1^{\top} \tilde{\textbf{x}}_1.
$$
Since $\mathbf x = \mathbf{\Sigma}^{1/2}\mathbf{z}$, applying Lemma \ref{subgaussian tail bound} to $\mathbf{z}$ by setting $\mathbf{A} = \mathbf{\Sigma}^{1/2}$ and $\boldsymbol{\mu} = \boldsymbol{0}$, we obtain that
	$$
	\mathbb{P}\left[ \|\mathbf{x}\|_2^2 > \sigma^2 \left\{ \mathrm{Tr} (\mathbf{\Sigma} )+ 2\sqrt{\mathrm{Tr} (\mathbf{\Sigma}^2)t} + 2 \|\mathbf{\Sigma}\|t\right\} \right] \leq e^{-t},
	$$
	where $\sigma$ denotes the sub-Gaussian parameter of $\mathbf{z}$. Replacing $t$ with $nt$, we have
	$$
	\mathbb{P}\left[ \frac{1}{n} \mathbf{x}^{\top}\mathbf{x} > \sigma^2 \left\{ \frac{ \mathrm{Tr} (\mathbf{\Sigma})}{n} + 2\sqrt{\mathrm{Tr} (\mathbf{\Sigma}^2)}\cdot \sqrt{\frac{t}{n}} + 2 \|\mathbf{\Sigma}\|t\right\}\right] \leq e^{-nt}.
	$$
	By the independence between $\mathbf{x}_1$ and $\tilde{\mathbf{x}}_1$, we conclude that 
	$$
	\mathbb P\left[\Delta_1 \Lambda \leq  \sigma^2 \cdot\left\{\frac{2\mathrm{Tr}(\mathbf{\Sigma})}{n} + 4 \sqrt{ \mathrm{Tr}(\mathbf{\Sigma}^2)} \cdot \sqrt{\frac{t}{n}}
	+4 \left\|\mathbf{\Sigma}\right\|t   \right\}  \right] \geq (1-e^{-nt})^2 \geq 1-2 e^{-nt}.
	$$
\end{proof}

\subsection{Proof of Theorem~\ref{privacy guarantee}}

\begin{proof}[Proof of~Theorem~\ref{privacy guarantee}]
	 Under Assumption  \ref{sub gaussian assumption}, we have $\mathrm{Tr}(\mathbf{\Sigma})\leq  \gamma d$ and $\mathrm{Tr}(\mathbf{\Sigma}^2) \leq c^2d$. For $0<t\leq 1$, 
	\begin{enumerate}
		\item[(1)]
		$$
		\frac{2\mathrm{Tr(\mathbf{\Sigma})}  }{n} \leq  \frac{2\gamma d}{n},
		$$
		\item[(2)]
		$$
		4\sqrt{\mathrm{Tr} (\mathbf{\Sigma}^2)}\cdot \sqrt{\frac{t}{n}} \leq 4 \sqrt{ c^2 d}\cdot \sqrt{\frac{t}{n}}
		\leq 4c\cdot \max\left\{ 1, \frac{d}{n} \right\}\cdot  \sqrt{t},$$
		\item[(3)]
		$$
		4 \|\mathbf{\Sigma}\|t \leq 4c\cdot t \leq 
		4c \cdot \max\left\{ 1, \frac{d}{n} \right\} \cdot \sqrt{t}.
		$$
	\end{enumerate}
	Taking $t = n^{-2r}$ for some $0<r<1/2$, we have
	$$
	\sigma^2 \left\{ \frac{2 \mathrm{Tr} (\mathbf{\Sigma})}{n} + 4\sqrt{\mathrm{Tr} (\mathbf{\Sigma}^2)}\cdot \sqrt{\frac{t}{n}} + 4 \|\mathbf{\Sigma}\|t\right\} \leq  \sigma^2\left(\frac{2\gamma d}{n} + 8c \cdot \max\left\{ 1, \frac{d}{n} \right\}\cdot n^{-r} \right)\leq   \frac{2.01\sigma^2\gamma d}{n},
	$$
    where the last equality follows if $n^{-r}  \leq  \frac{\gamma}{800c}\min \left\{ 1 ,\frac{d}{n}\right\}$. Combining with Theorem \ref{upper bound of eigenvalue sensitivity}, we have
    $$
    \mathbb P \left( \Delta_1 \Lambda \leq \frac{2.01\sigma^2\gamma d}{n}\right) \geq 1-2 \exp\left\{-n^{1-2r}\right\}
    $$
   provided $n^{-r}  \leq  c^{\prime} \min \left\{ 1 ,\frac{d}{n}\right\}$ with $c^{\prime} = \frac{\gamma}{800c}$.

By the definition of Laplacian mechanism in Lemma \ref{lemma Laplacian mechanism}, the privatized sample eigenvalues $(\tilde{\lambda}_1,\ldots ,\tilde{\lambda}_K)$ is $\varepsilon$-differentially private. Applying the post-processing property stated in Lemma \ref{Post-processing} yields the desired result.
\end{proof}

\subsection{Proof of Lemma~\ref{asymptotic normality}}

According to the {\it Cramer-Wold} theorem, it suffices to prove the asymptotic normality of the linear combination $ L^{\mathrm{dp}}(\mathbf v^\top\mathbf g)$ for arbitrary direction $\mathbf v=(v_1,\cdots,v_r)^\top \in \RR^r$ with $\|\mathbf v\|_1=1$. We proceed the proof in three parts ({\bf Part A-C}) as follows. \\

\noindent {\bf Part A.} We first derive the marginal asymptotic distribution for each coordinate $g_m$, $1\le m\le r$. It can be divided  into six main steps as follows.

{\bf Step 1}. Approximate the eigenvalue $\lambda_i$ by $\alpha_i$, $1\le i\le K$, where $\alpha_i$ denotes the {\it classical location} of $\lambda_i $, i.e., the $(1-i/d)$-quantile of the generalized M-P law defined via 
\$
F_{y,\rho_{\kappa}}(\alpha_i)=1-\frac{i}{d},\ {\rm{for}}\ 1\le i\le K.
\$

According to the rigidity of the eigenvalues $\{\lambda_i\}_{i = 1}^K$ in Lemma~\ref{rigidity}, we have $|\lambda_i-\alpha_i|\prec n^{-1}$ uniformly for those $i$'s satisfying $\alpha_i\in \cup_{1\le k\le p}(a_{2k},a_{2k-1})$, and $|\lambda_i-\alpha_i|\prec n^{-2/3}$ uniformly for those $i$'s satisfying sticking to the some edge $a_{2k}$ (or $a_{2k-1}$) in the sense that $|i- \sum_{1\le k\le j}d_k| \le N$ with $d_k$ defined in \eqref{def_dk}, where $k\in \{0,1,\cdots,p\}$ and $N\in \NN^+$ is any fixed integer. 
Then, for any large constant $D>0$ and small constant $\epsilon>0$, it holds that 
\$
\sup_{1\le i\le K}\PP\{|\lambda_i-\alpha_i|\ge n^{-2/3+\epsilon}\}\le n^{-D},\quad 
\$ 
for any $n\ge n_0$ with some large enough $n_0\in\NN^{+}$, where we relax the convergence rate $n^{-1}$ of the bulk eigenvalues to $n^{-2/3}$ for simplicity of presentation and this rate is enough to ensure the $n^{-1/2}$-CLT for $L^{\rm dp}(g_m)$ afterwards. Taking $D>1$, it follows that 
\#\label{step1a}
\PP \left\{ |\lambda_i-\alpha_i|\le n^{-2/3+\epsilon},\ \forall 1\le i\le K\right\} \ge 1-n^{-D}\cdot K =1-o(1)
\#
for any small $\epsilon>0$ and large enough $n$. 

Under the regularity Assumption~\ref{assumption approximation error} on $\mathbf g$, \eqref{step1a} leads to
\$
L^{\mathrm{dp}}(g_m)=\frac{1}{K} \sum_{1\le i\le K} g_m(\lambda_i+\ell_i)
= \frac{1}{K} \sum_{1\le i\le K} g_m(\alpha_i+\ell_i) + o_p(K^{-1/2}),\ 1\le m\le r,
\$
where \eqref{step1a} can match condition \eqref{approximation_condition} by taking $\epsilon<1/12$.  

{\bf Step 2}. Build the CLT for $L^{\rm dp}(g_m)$. For any fixed $1\le i\le K$, we denote
\$
\sigma_i^2(g_m):= & {\rm Var}\left\{g_m(\alpha_i+\ell_i) \right\}
= \mathbb E \left\{g_m^2(\alpha_i+\ell_i) \right\} -\left[ \mathbb E \left\{g_m(\alpha_i+\ell_i)\right\}\right]^2,
\$
where 
\$
\mathbb E\left\{g_m(\alpha_i+\ell_i)\right\} = \int g_m(\alpha_i+t)\mathrm{d} H(t)\ {\rm{and}}\ 
\mathbb E\left\{ g_m^2(\alpha_i+\ell_i)\right\} = \int g_m^2(\alpha_i+t)\mathrm{d} H(t).
\$
Using {\it Lyapunov's Central Limit Theorem} in Lemma~\ref{Lyapunov_CLT}, we have
\#\label{Lyapunov_CLT_eq}
\frac{\sqrt{K}}{s_d(g_m)}\left[\frac{1}{K} \sum_{1\le i\le K} g_m(\alpha_i+\ell_i) -\frac{1}{K}\sum_{1\le i\le K}\EE \left\{ g_m(\alpha_i+\ell_i) \right\} \right]
\overset{D}{\longrightarrow} \cN (0,1)
\#
with $s_d(g_m):=\sqrt{\sum_{1\le i\le K}\sigma_i^2(g_m)/K}$. That is, 
\$
\sqrt{K}\left[L^{\mathrm{dp}}(g_m) - \mathbb E \{L^{\mathrm{dp}}(g_m)\}\right]/s_d(g_m)\xrightarrow{D}\mathcal N (0,1).
\$ 
Therefore, it remains to calculate the mean $K^{-1}\sum_{1\le i\le K}\mathbb E \left\{g_m(\alpha_i+\ell_i)\right\}$ and variance $s_d^2(g_m)$.

{\bf Step 3}. Calculate $\sigma_i^2(g_m)$ for any fixed $1\le i\le K$. It boils down to $\mathbb E \left\{g_m^2(\alpha_i+\ell_i)\right\}$ and $\left[\mathbb E \left\{g_m(\alpha_i+\ell_i)\right\}\right]^2$. Denote the probability density function of the Laplacian noise as $h(t)$, then it holds that
\$
\mathbb E \left\{g_m^2(\alpha_i+\ell_i)\right\}
& = \int g_m^2(\alpha_i+t)\mathrm{d} H(t)
= \int g_m^2(\alpha_i+t) h(t) \mathrm{d} t
=: b_{g_m,g_m}(\alpha_i),\\
\left[\mathbb E \left\{g_m(\alpha_i+\ell_i)\right\}\right]^2
& = \left\{\int g_m(\alpha_i+t)\mathrm{d} H(t)\right\}^2
= \left\{\int g_m(\alpha_i+t) h(t) \mathrm{d} t\right\}^2
=: b_{g_m}^2(\alpha_i).
\$
It follows that 
\$
\sigma_i^2 (g_m) = \mathbb E\left\{ g_m^2(\alpha_i+\ell_i)\right\} -\left[ \mathbb E \left\{g_m(\alpha_i+\ell_i)\right\}\right]^2
= b_{g_m,g_m}(\alpha_i)-b_{g_m}^2(\alpha_i).
\$

{\bf Step 4}. Sum up $\sigma_i^2(g_m)$ to get $s_d^2(g_m)$. First, we have the decomposition:
\#\label{variance_decomposition}
s_d^2(g_m)&=\frac{1}{K}\sum_{1\le i\le K}\sigma_i^2(g_m)
= \frac{1}{K}\sum_{1\le i\le K}\left\{ b_{g_m,g_m}(\alpha_i)-b_{g_m}^2(\alpha_i)\right\}\nonumber\\
&= \frac{1}{K}\sum_{1\le i\le K}b_{g_m,g_m}(\alpha_i)  - \frac{1}{K}\sum_{1\le i\le K}b_{g_m}^2(\alpha_i).
\#
For the first summation in \eqref{variance_decomposition}, it holds that
\$
\lim_{d\rightarrow \infty}\left\{ \sum_{1\le i\le K}\frac{1}{K}b_{g_m,g_m}(\alpha_i)
-\frac{d}{K}\sum_{1\le i\le K}f_{y,\rho_{\kappa}}(\alpha_i)(\alpha_i-\alpha_{i+1}) b_{g_m,g_m}(\alpha_i)\right\} =0, 
\$
due to the fact that 
\$
\lim_{d\rightarrow \infty}\frac{f_{y,\rho_{\kappa}}(\alpha_i)(\alpha_i-\alpha_{i+1})}{d^{-1}}
= \lim_{d\rightarrow \infty}\frac{f_{y,\rho_{\kappa}}(\alpha_i)(\alpha_i-\alpha_{i+1})}{ F_{y,\rho_\kappa}(\alpha_{i}) - F_{y,\rho_\kappa}(\alpha_{i+1})}
= \lim_{d\rightarrow \infty}\frac{f_{y,\rho_{\kappa}}(\alpha_i)(\alpha_i-\alpha_{i+1})}{ \int_{\alpha_{i+1}}^{\alpha_i}f_{y,\rho_{\kappa}}(t)\mathrm{d}t}
= 1
\$ 
for each $1\le i\le K$, where the last step is implied by the continuity of $f_{y,\rho_{\kappa}}$ and the fact that $|\alpha_i-\alpha_{i+1}|=o(1)$.  
Furthermore, we have the limit 
\$
\lim_{d\rightarrow \infty}\sum_{1\le i\le d}f_{y,\rho_{\kappa}}(\alpha_i)(\alpha_i-\alpha_{i+1}) b_{g_m,g_m}(\alpha_i) 
= \int_{t>0} f_{y,\rho_{\kappa}}(t)b_{g_m,g_m}(t) \mathrm{d}t 
\equiv \int_{t>0} b_{g_m,g_m}(t)\mathrm{d}F_{y,\rho_{\kappa}}(t),
\$
which leads to 
\$
\lim_{d\rightarrow \infty}\frac{1}{K}\sum_{1\le i\le K}b_{g_m,g_m}(\alpha_i) 
=(1\vee y)\int_{t>0} b_{g_m,g_m}(t) \mathrm{d} F_{y,\rho_{\kappa}}(t).
\$
Similarly, for the second summation in \eqref{variance_decomposition}, it holds that
\$
\lim_{d\rightarrow \infty}\frac{1}{K}\sum_{1\le i\le K}b_{g_m}^2(\alpha_i) 
=(1\vee y) \int_{t>0} b_{g_m}^2(t) \mathrm{d} F_{y,\rho_{\kappa}}(t).
\$
Therefore, we have 
\$
\lim_{d\rightarrow \infty} s_d^2(g_m)= (1\vee y)\left\{\int_{t>0} b_{g_m,g_m}(t) \mathrm{d} F_{y,\rho_{\kappa}}(t)-\int_{t>0} b_{g_m}^2(t) \mathrm{d} F_{y,\rho_{\kappa}}(t)\right\}
=: s^2(g_m).
\$

{\bf Step 5}. Calculate the mean $K^{-1}\sum_{1\le i\le K}\mathbb E \left\{g_m(\alpha_i+\ell_i)\right\}$. First,
\$
\mathbb E \left\{g_m(\alpha_i+\ell_i)\right\}=\int g_m(\alpha_i+t) \mathrm{d} H(t) 
= \int g(\alpha_i+t)h(t) \mathrm{d}t = b_{g_m}(\alpha_i).
\$
Adopting the similar calculation of $d^{-1}\sum_{1\le i\le d}b_{g_m,g_m}(\alpha_i)$ in {\bf Step 4}, we have
\$
\lim_{d\rightarrow \infty} \frac{1}{K}\sum_{1\le i\le K}\mathbb E \left\{g_m(\alpha_i+\ell_i)\right\}=
\lim_{d\rightarrow \infty}\frac{1}{K}\sum_{1\le i\le K}{b_{g_m}(\alpha_i)} = (1\vee y)\int_{t>0} {b_{g_m}(t)}\mathrm{d} F_{y,\rho_{\kappa}}(t).
\$

{\bf Step 6}. Conclude the CLT. According to the CLT in Lemma \ref{Lyapunov_CLT}, the facts $s_d(g_m)\rightarrow s(g_m)$ and $K^{-1}\sum_{1\le i\le K}\mathbb E g_m(\alpha_i+\ell_i)\rightarrow (1\vee y) \int_{t>0} b_{g_m}(t)\mathrm{d}F_{y,\rho_{\kappa}}(t)=\mu_{\kappa}(g_m)$, we conclude that 
\$
\frac{\sqrt K}{s(g_m)}\left\{ L^{\mathrm{dp}}(g_m) - \mu_{\kappa}(g_m)\right\}
\xrightarrow{D} \mathcal N (0,1).
\$

\vspace{1em}

\noindent{\bf Part B.} We next deal with the asymptotic distribution for $L^{\mathrm{dp}}(\mathbf{v}^{\top}\mathbf{g})$. We first show the convergence of the covariance matrix of $L^{\mathrm{dp}}(\mathbf{g})$. 
It suffices to derive the asymptotic correlation between any two components $L^{\rm dp}(g_m)$ and $L^{\rm dp}(g_s)$ for $1\le m,s\le r$. We centralize and rescale $L^{\rm dp}(g_m)$ as $\tilde{L}^{\mathrm{dp}}(g_m) = \sqrt{K} \left\{{L}^{\mathrm{dp}}(g_m) - \mu_{\kappa}(g_m) \right\}$, $1\le m\le r$. Then
\#\label{cov_decomposition}
{\rm Cov} \left\{\tilde{L}^{\mathrm{dp}} (g_m),\tilde{L}^{\mathrm{dp}} (g_s)\right\}
& = \EE \left\{ K\cdot L^{\mathrm{dp}} (g_m) L^{\mathrm{dp}} (g_s)\right\} - K \mu_{\kappa}(g_m) \mu_{\kappa}(g_s),
\#
where the expectation operator $\EE(\cdot)$ is taken with respect to the randomness in both $\{\lambda_i\}_{i=1}^K$ and $\{\ell_i\}_{i = 1}^K$. 

By the same arguments used in {\bf Step 1} above, we make an approximation for the first term in the RHS of \eqref{cov_decomposition} as
\#\label{cov_approximation}
\EE \left\{ K\cdot L^{\mathrm{dp}} (g_m) L^{\mathrm{dp}} (g_s)\right\}
= & \EE \left\{\frac{1}{K}\sum_{1\le i,j\le K}g_m(\lambda_i+\ell_i)g_s(\lambda_j+\ell_j)\right\}\nonumber\\
\approx &  \EE \left\{\frac{1}{K}\sum_{1\le i,j\le K}g_m(\alpha_i+\ell_i)g_s(\alpha_j+\ell_j)\right\},
\#
where the approximation error induced by ``$\approx$'' is of order $o(K^{-1/2})$. 
Further, we have
\#\label{cov_convergence}
&\EE \left\{\frac{1}{K}\sum_{1\le i,j\le K}g_m(\alpha_i+\ell_i)g_s(\alpha_j+\ell_j)\right\}\nonumber\\
= & \EE \left\{\frac{1}{K}\sum_{1\le i\neq j\le K}g_m(\alpha_i+\ell_i)g_s(\alpha_j+\ell_j)\right\}+ \EE \left\{\frac{1}{K}\sum_{1\le i\le  K}g_m(\alpha_i+\ell_i)g_s(\alpha_i+\ell_i)\right\}\nonumber\\
= & \frac{1}{K}\sum_{1\le i\neq j\le K}\EE \left\{g_m(\alpha_i+\ell_i)\right\} \EE \left\{ g_s(\alpha_j+\ell_j)\right\} + \frac{1}{K}\sum_{1\le i\le K}\EE \left\{ g_m(\alpha_i+\ell_i)g_s(\alpha_i+\ell_i)\right\}\nonumber\\
= & \frac{1}{K}\sum_{1\le i, j\le K}\EE \left\{g_m(\alpha_i+\ell_i)\right\} \EE \left\{ g_s(\alpha_j+\ell_j)\right\} - \frac{1}{K}\sum_{1\le i\le K}\EE \left\{g_m(\alpha_i+\ell_i)\right\} \EE \left\{ g_s(\alpha_i+\ell_i)\right\}\nonumber\\
& + \frac{1}{K}\sum_{1\le i\le K}\EE \left\{ g_m(\alpha_i+\ell_i)g_s(\alpha_i+\ell_i)\right\}\nonumber\\
= & \frac{1}{K}\sum_{1\le i\le K} b_{g_m}(\alpha_i)\sum_{1\le j\le K} b_{g_s}(\alpha_j) - \frac{1}{K}\sum_{1\le i\le K}b_{g_m}(\alpha_i) b_{g_s}(\alpha_i)  + \frac{1}{K}\sum_{1\le i\le K} b_{g_m, g_s}(\alpha_i)\nonumber\\
\rightarrow & K (1\vee y)\int_{t>0} b_{g_m}(t){\rm d} F_{y,\rho_{\kappa}} (t) (1\vee y)\int_{t>0} b_{g_s}(t){\rm d} F_{y,\rho_{\kappa}} (t) \nonumber\\
&- (1\vee y)\int_{t>0} b_{g_m}(t) b_{g_s}(t){\rm d} F_{y,\rho_{\kappa}} (t) + (1\vee y)\int_{t>0} b_{g_m, g_s}(t) {\rm d} F_{y,\rho_{\kappa}} (t)\nonumber\\
= & K \mu_{\kappa}(g_m) \mu_{\kappa}(g_s) + (1\vee y)\int_{t>0} \left\{b_{g_m,g_s}(t) - b_{g_m}(t) b_{g_s}(t)\right\}{\rm d} F_{y,\rho_{\kappa}} (t),
\#
where the convergence ``$\rightarrow$'' in the fifth step is achieved similar to the calculation of the RHS of \eqref{variance_decomposition} in {\bf Step 4}.

Thus, \eqref{cov_decomposition}, \eqref{cov_approximation} and \eqref{cov_convergence} together lead to 
\$
{\rm Cov} \left\{\tilde L^{\mathrm{dp}} (g_m),\tilde L^{\mathrm{dp}}(g_s)\right\} \rightarrow (1\vee y)\int_{t>0} \left\{b_{g_m ,g_s}(t) - b_{g_m}(t) b_{g_s}(t)\right\}{\rm d} F_{y,\rho_{\kappa}} (t)=:v_{\kappa}(g_m,g_s).
\$
Accordingly, we write the limit of covariance matrix of $\sqrt{K}\left\{L^{\mathrm{dp}}(\mathbf{g}) - \mu_{\kappa}(\mathbf{g})\right\}$ as $V_\kappa (\mathbf g):= [v_\kappa (g_m,g_s)]_{m,s=1}^r$. \\

\noindent {\bf Part C.} To demonstrate the asymptotic normality of $L^{\rm dp}(\mathbf v^\top \mathbf g)$, it remains to verify that $\mathbf v^\top \mathbf g$ also satisfies all assumptions imposed on each $g_m$. The regularity~\eqref{approximately equal} in Assumption~\ref{assumption approximation error} holds obviously, so it suffices to check the Lyapunov's condition in Assumption~\ref{Lyapunov assumption}. 

Suppose there exists some constant $\delta>0$ such that the Lyapunov's condition \eqref{lyapunov} holds for $g_m, 1\leq m \leq r$. Now we verify the Lyapunov's condition for $\mathbf{v^{\top}g}= \sum_{m = 1}^rv_mg_m $, that is, 
\#\label{lyapunov_condition_vg}
\lim_{K \to \infty} \frac{1}{B_{K, \mathbf{v^\top g} }^{2+\delta}} \sum_{1\leq i \leq K} \mathbb E\left[ \left|\mathbf{v^{\top}g}(\theta_i + \ell_i) - \mathbb E \left\{\mathbf{v^{\top}g}(\theta_i + \ell_i)\right\} \right|^{2+\delta}\right] = 0,
\#
where $B_{K, \mathbf{v^{\top}g}}:= \sum_{1\leq i \leq K}\mathrm{Var} \{\mathbf{v^{\top}g}(\theta_i+ \ell_i)\}$.

First, using the inequality that $|\sum_{m=1}^r a_m|^{1+\nu} \leq r^{\nu}\sum_{m=1}^r|a_m|^{1+\nu} $ for any $\{a_m\}$ and $\nu >0$, the numerator in \eqref{lyapunov_condition_vg} for any $\mathbf{v^{\top}g}$ can be bounded by
\begin{align*}
&\quad \sum_{1\leq i\leq K} \mathbb E \left\{|\mathbf{v^{\top}g}(\theta_i+ \ell_i) - \mathbb E\mathbf{v^{\top}g}(\theta_i+ \ell_i)|^{2+\delta} \right\} \\&\leq r^{1+\delta } \sum_{1\leq m\leq r}v_m^{2+\delta}\sum_{1\leq i\leq K} \mathbb E \left\{ | {g_m}(\theta_i+ \ell_i) - \mathbb Eg_m(\theta_i+ \ell_i)|^{2+\delta} \right\}.
\end{align*}
According to Lyapunov's condition on $g_m$, $1\leq m \leq r$, we have
\$
\sum_{1\leq i\leq K} \mathbb E \left\{ | {g_m}(\theta_i+ \ell_i) - \mathbb Eg_m(\theta_i+ \ell_i)|^{2+\delta} \right\} 
= o\left(B_{K,m}^{2+\delta}\right) = o\left(K^{\frac{2+\delta}{2}}\right),\quad \forall 1 \leq m\leq r.
\$
This leads to 
\#\label{lyapunov_vg1}
\sum_{1\leq i\leq K} \mathbb E \left\{ | \mathbf{v^{\top}g}(\theta_i+ \ell_i) - \mathbb E\mathbf{v^{\top}g}(\theta_i+ \ell_i)|^{2+\delta} \right\} 
= o\left( r^{1+\delta}\cdot r\cdot \|\mathbf v\|_1^{2+\delta} \cdot K^{\frac{2+\delta}{2}}\right) = o\left(K^{\frac{2+\delta}{2}}\right).
\#
For the denominator in \eqref{lyapunov_condition_vg}, it holds that
\#\label{lyapunov_vg2}
B_{K, \mathbf{v^{\top}g}}^{2+\delta} &=  \left[\sum_{1\leq i \leq K}\mathrm{Var} \{\mathbf{v^{\top}g}(\theta_i+ \ell_i) \}\right]^{\frac{2+\delta}{2}} = \left(\mathbf{v}^{\top}\left[\sum_{1\leq i \leq K}\mathrm{Cov} \{\mathbf{g}(\theta_i+ \ell_i) \}\right]\mathbf{v} \right)^\frac{2+\delta}{2}\nonumber \\
& \gtrsim \left\{ K \lambda_r \left( V_\kappa (\mathbf g)\right) \right\}^{\frac{2+\delta}{2}}
= K^{\frac{2+\delta}{2} } \left\{\lambda_r \left( V_\kappa (\mathbf g)\right) \right\}^{\frac{2+\delta}{2}}.
\# 
So \eqref{lyapunov_vg1} and \eqref{lyapunov_vg2} imply the Lyapunov's condition for $\mathbf{v^{\top}g}$ once the the covariance matrix $V_{\kappa}(\mathbf{g})\in \RR^{r\times r}$ is non-degenrated, i.e., $\lambda_r (V_\kappa (\mathbf g) )>0$. 

Thus, this concludes the proof of Lemma~\ref{asymptotic normality}.

\subsection{Proof of Theorem~\ref{asymptotic normality 2}}

It suffices to verify that Assumptions \ref{Lyapunov assumption}-\ref{assumption approximation error} hold for $g_m, 1\leq m \leq 3$, where $g_1(x) = |x| - \log|x| -1$, $g_2(x) = (x-1)^2$ and $g_3(x) = |x-1|$. Let $\ell$ denote  the centered Laplacian variable with the probability density function $h(x)= \frac{1}{2\sigma}e^{-|x|/\sigma}$. 

\textbf{Step 1.}
For $1\leq m \leq 3$,
 if there exists some constants $c_1,c_2 >0$ such that
\begin{enumerate}
\item [(1)] $\inf_{\theta \in [a,b]} \mathbb E |g_m(\ell+ \theta) - \mathbb E g_m(\ell+ \theta)|^2 \geq c_1$,
    
\item[(2)] $\sup_{\theta \in [a,b]} \mathbb E |g_m(\ell+ \theta) - \mathbb E g_m(\ell+ \theta)|^4 \leq  c_2$,
\end{enumerate}
then Lyapunov's condition (\ref{lyapunov}) is satisfied for $g_m(x),1\leq m \leq 3$ by choosing $\delta = 2$. 

We first verify conditions (1) and (2) for $g_1(x) = |x| -\log|x| - 1$. Let $s_{k,1}(\theta)$ denote the $k$th moment of $g_1(\ell + \theta)$ for $k \geq 1$. Now we show the continuity of $s_{k,1}(\theta)$ w.r.t. $\theta \in [a,b]$ with $0\leq a<b$. Simple calculation yields that
$$
\begin{aligned}
s_{k,1}(\theta)& = \int_{-\infty}^{\infty}(|x+\theta|-\log|x+\theta|-1)^k  h(x)\mathrm{d}x= \int_{-\infty}^{\infty}(|x|-\log|x|-1)^k \cdot \frac{1}{2\sigma}e^{-|x-\theta|/\sigma}\mathrm{d}x\\
& = \frac{1}{2\sigma} \left\{e^{-\theta/\sigma} \int_{-\infty}^{\theta}(|x|-\log|x|-1)^k  e^{x/\sigma}\mathrm{d}x+ e^{\theta/\sigma}\int_{\theta}^{\infty}(|x|-\log|x|-1)^k e^{-x/\sigma}\mathrm{d}x\right\}.
\end{aligned}
$$
For the two improper integrals, by the convergence of $ \int_{-\infty}^{0}(|x|-\log|x|-1)^k \cdot\exp\left(\frac{x} {\sigma}\right)\mathrm{d}x$ and $ \int_0^{\infty}(|x|-\log|x|-1)^k \cdot\exp\left(-\frac{x} {\sigma}\right)\mathrm{d}x$, the continuity of $s_{k,1}(\theta)$ reduces to the continuity of $I_1(\theta) := \int_{0}^{\theta}(x-\log x-1)^k  e^{x/\sigma}\mathrm{d}x $ and $I_2(\theta):=\int_{0}^{\theta}(x-\log x-1)^k e^{-x/\sigma}\mathrm{d}x $ at the origin. By the convexity of $\phi(y) = y^k$ for $k\geq 1$ and the monotony of $\psi(y) = e^y$, we have
$$
|I_1(\theta)| = \left| \int_{0}^{\theta}(x-\log x-1)^k  e^{x/\sigma}\mathrm{d}x  \right|\leq e^{\theta/\sigma}  3^{k-1} \left\{ \int_{0}^{\theta} (x^k + |\log x|^k +1) \mathrm{d}x\right\},
$$
and similarly,
$$
|I_2(\theta)|= \left| \int_{0}^{\theta}  (x-\log x -1)^k e^{-x/\sigma}\mathrm{d}x \right|\leq  3^{k-1} \left\{ \int_{0}^{\theta} (x^k + |\log x|^k +1 ) \mathrm{d}x\right\}.
$$
Using the fact that $\int_0^{\theta}|\log x|^k\mathrm{d}x = (-1)^k\int_0^{\theta}(\log x)^k\mathrm{d}x$ for $0<\theta<1$ and the convergence of the improper integral $\int_0^{\theta}(\log x)^k\mathrm{d}x$, we have
$$
\lim_{\theta \to 0^{+}} I_1(\theta) = \lim_{\theta \to 0^{+}} I_2(\theta)= 0.
$$
We conclude that $s_{k,1}(\theta)$ is continuous w.r.t. $\theta \in [a,b]$ when $0 \leq a < b$, which implies there exists some $\theta_1^{\ast},\theta_2^{\ast} \in [a,b]$ such that

$$
\begin{aligned}
    &\inf_{\theta \in [a,b]} \mathbb E |g_1(\ell+ \theta) - \mathbb E g_1(\ell+ \theta)|^2=\mathbb E |g_1(\ell+ \theta_1^{\ast}) - \mathbb E g_1(\ell+ \theta_1^{\ast})|^2=:c_1,\\
    &\sup_{\theta \in [a,b]} \mathbb E |g_1(\ell+ \theta) - \mathbb E g_1(\ell+ \theta)|^4=\mathbb E |g_1(\ell+ \theta_2^{\ast}) - \mathbb E g_1(\ell+ \theta_2^{\ast})|^4=:c_2.
\end{aligned}
$$
For any $\theta \in [a,b]$, $g_1(\ell + \theta)$ is a substantially non-degenerate variable so the constants $c_1,c_2$ are strictly positive, which implies the Lyapunov's condition holds for $g_1$.

    Then we verify conditions (1) and (2) for $g_2(x) =(x-1)^2 $ and $g_3(x) = |x - 1|$. It reduces to show the continuity of $k$th moment of $g_m(\ell + \theta),m = 2,3$ for any $k \geq 1$. Denote them by $s_{k,2}(\theta)$ and $s_{k,3}(\theta)$, respectively. By definition,
$$
\begin{aligned}
s_{k,2}(\theta)& = \int_{-\infty}^{\infty}|x+\theta-1|^{2k}  h(x)\mathrm{d}x= \int_{-\infty}^{\infty}|x-1|^{2k} \cdot \frac{1}{2\sigma}e^{-|x-\theta|/\sigma}\mathrm{d}x\\
& = \frac{1}{2\sigma} \left\{e^{-\theta/\sigma} \int_{-\infty}^{\theta}|x-1|^{2k}  e^{x/\sigma}\mathrm{d}x+ e^{\theta/\sigma}\int_{\theta}^{\infty}|x - 1|^{2k} e^{-x/\sigma}\mathrm{d}x\right\},
\end{aligned}
$$
and 
$$
\begin{aligned}
s_{k,3}(\theta)& = \int_{-\infty}^{\infty}|x+\theta-1|^{k}  h(x)\mathrm{d}x= \int_{-\infty}^{\infty}|x-1|^{k} \cdot \frac{1}{2\sigma}e^{-|x-\theta|/\sigma}\mathrm{d}x\\
& = \frac{1}{2\sigma} \left\{e^{-\theta/\sigma} \int_{-\infty}^{\theta}|x-1|^{k}  e^{x/\sigma}\mathrm{d}x+ e^{\theta/\sigma}\int_{\theta}^{\infty}|x - 1|^{k} e^{-x/\sigma}\mathrm{d}x\right\}.
\end{aligned}
$$
Then the continuity of $s_{k,2}(\theta)$ and $s_{k,3}(\theta)$  w.r.t. $\theta \in [a,b]$ with $0\leq a<b$ follows from the convergence of associated improper integrals.

\textbf{Step 2}.
Now we turn to verify the regularity in Assumption~\ref{assumption approximation error} holds for $g_m, 1\leq m \leq 3$.

 For $m = 1$,
			$$
		\begin{aligned}
			\sum_{1\leq i \leq K} \left\{ g_1(\eta_i+ \ell_i) - g_1(\xi_i + \ell_i)\right\} 
			& = \sum_{1\leq i \leq K} \log \left|1+ \frac{ \xi_i - \eta_i}{\eta_i + \ell_i} \right| + \sum_{1\leq i \leq K}  \left(\left|\eta_i + \ell_i\right| - \left|\xi_i + \ell_i\right| \right)  \\
			&\leq 	\sum_{1\leq i \leq K}\left| \frac{ \xi_i - \eta_i}{\eta_i + \ell_i} \right| + \sum_{1\leq i \leq K}  \left| \xi_i -\eta_i\right|,
		\end{aligned}
		$$
		where the last equality follows using $\log|1+x| \leq |x|$ for $x \in \mathbb R$ and $|a| - |b| \leq |a-b|$ for $a , b \in \mathbb R$. Similarly, we also have
        $$
        \sum_{1\leq i \leq K} \left\{g_1(\xi_i + \ell_i) -  g_1(\eta_i+ \ell_i) \right\} 
			\leq 	\sum_{1\leq i \leq K}\left| \frac{ \xi_i - \eta_i}{\xi_i + \ell_i} \right| + \sum_{1\leq i \leq K}  \left| \xi_i -\eta_i\right|.
        $$
        As $|\xi_i - \eta_i | = O_p(K^{-1/2-\epsilon})$ for $1\leq i\leq K$, we have
		$$
		\frac{1}{K}	\sum_{1\leq i \leq K} \left\{ g_1(\xi_i+ \ell_i) - g_1(\eta_i + \ell_i)\right\} = O_p(K^{-3/2-\epsilon})\cdot \left(1+\sum_{1\leq i \leq K} \frac{ 1}{\left|\ell_i + \beta_i \right|} \right),
		$$
        where $\beta_i $ denotes $\xi_i$ or $\eta_i$ for $1\leq i\leq K$, and thus bounded almost surely.
		Now we concentrate on the divergence rate of $\sum_{1\leq i\leq K} U_i$ with $U_i = 1/|\ell_i + \beta_i|$.  For any $1\leq i\leq K$, by the exponential decay of $h(x)$, the survivor function of $U_i$ is
		$$
		\bar{F}_{U_i}(u):=\mathbb P(U_i>u)  \asymp \frac{c_i}{u},\qquad u \to \infty,
		$$
        where $c_i>0, 1\leq i\leq K$. 
	By the generalized central limit theorem (GCLT) for independent and non-identical random variables~\citep{shintani2018super}, there exists $a_K \asymp K\log K, b_K \asymp K$ and a stable law $H_{s}$ such that
		$$
		\frac{1}{b_K}\left(\sum_{1\leq i \leq K} U_i -a_K \right) \xrightarrow{D} H_{s}
		$$
		which means that 
		$$
			\frac{1}{K}	\sum_{1\leq i \leq K} \left\{ g_1(\xi_i+ \ell_i) - g_1(\eta_i + \ell_i)\right\} = O_p(K^{-3/2-\epsilon})\cdot O_p(K\log K) = o_p\left(K^{-1/2}\right).
		$$
For $m = 2$, 
\$
\frac{1}{K}\sum_{1\leq i \leq K} \left\{ g_2(\xi_i+ \ell_i) - g_2(\eta_i + \ell_i)\right\} = \frac{1}{K}	\sum_{1\leq i \leq K} \left(\xi_i - \eta_i\right)\left(\xi_i + \eta_i + 2 \ell_i -2\right) = o_p\left(K^{-1/2}\right),
\$
where the last equality follows by $|\xi_i - \eta_i | = O_p(K^{-1/2 -\epsilon})$ and $|\xi_i + \eta_i + 2 \ell_i -2|= O_p(1)$. For $m = 3$,  $g_3(x)$ is Lipschitz, condition (\ref{approximately equal}) holds. The proof is complete.

\subsection{Proof of Theorem~\ref{CLT_H1}}

Under Assumptions~\ref{assumption_Sigma_alternative}-\ref{regularity}, the support of the generalized M-P law is also contained in a finite closed interval $[a_1,b_1]$ with $0 \leq a_1 <b_1$. So the Lyapunov's condition holds under alternatives. Together with Lemma~\ref{asymptotic normality} and Theorem~\ref{asymptotic normality 2}, this completes the proof.

\section{Technical lemmas}
\label{app: B}
\begin{definition}[Stochastic dominance; Definition~3.4 in \citep{knowles2017anisotropic}]\label{stochastic_dominance} 
Let $\xi=\{\xi^{(n)}(u):n\in \NN, u\in U^{(n)}\}$ and $\eta=\{\eta^{(n)}(u):n\in\NN,u\in U^{(n)}\}$ be two families of nonnegative random variables with $U^{(n)}$ being a possibly $n$-dependent parameter set. If for any small constant $\epsilon>0$ and large constant $D>0$ it holds that 
\$
\sup_{u\in U^{(n)}}\PP \left\{\xi^{(n)}(u)> n^\epsilon \eta^{(n)}(u) \right\}\le n^{-D}
\$
for large enough $n\ge n_0 (\epsilon, D)$, then $\xi$ is said to be {\it stochastically dominated} by $\eta$ uniformly in $u$, denoted by $\xi\prec \eta$. 
\end{definition}

We define the {\it classical number} of eigenvalues in the $k$-th bulk component as
\#\label{def_dk}
d_k:= d \int_{a_{2k}}^{a_{2k-1} }\rho_{\kappa}(x)dx,
\#
which refers to the number of eigenvalues of $\mathbf{S_x}$ that located within the interval $[a_{2k},a_{2k-1}]$ in the limiting sense. Then we relabel the eigenvalues $\lambda_i$'s and classical locations $\alpha_i$'s in the $k$-th component as
\$
\lambda_{k,i}:=\lambda_{i+\sum_{\ell<k}d_\ell },\ 
\alpha_{k,i}:=\alpha_{i+\sum_{\ell<k}d_\ell}. 
\$

\begin{lemma}[Eigenvalue rigidity; Theorem~3.12 in~\citep{knowles2017anisotropic}]\label{rigidity}

Suppose that Assumption~\ref{regularity} holds. Then we have for all $1\leq k\leq p$ and $1\leq i \leq d_k$ satisfying $\alpha_{k,i}\ge \tau$ that 
\$ 
|\lambda_{k,i}-\alpha_{k,i}|\prec n^{-2/3}\left\{(i\wedge (d_k+1-i) \right\}^{-1/3},
\$
where ``$\prec$" denotes the stochastic dominance defined in Definition~\ref{stochastic_dominance}. 
\end{lemma}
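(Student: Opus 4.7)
The eigenvalue rigidity bound stated here is the sample-covariance analogue of the rigidity theorems developed by Erd\H{o}s--Schlein--Yau for Wigner matrices and extended by Pillai--Yin and Knowles--Yin to the deformed Mar\v{c}enko--Pastur setting. My plan is to organize the proof into the classical three-stage program: establish a strong local law for the resolvent of $\mathbf{S_x}$ on short spectral scales, combine it with a stability analysis of the self-consistent equation to pin down the density profile near each edge $a_k$, and finally convert these resolvent bounds into pointwise eigenvalue estimates via a counting/Helffer--Sj\"ostrand argument.

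For the first stage, I would introduce the resolvent $G(z) = (\mathbf{S_x} - z\mathbf{I})^{-1}$ and its normalized trace $m_n(z) = d^{-1}\mathrm{Tr}\, G(z)$, then use the Schur complement formula to write each diagonal entry $G_{ii}(z)$ in terms of a quadratic form $\mathbf{z}_i^\top A(z) \mathbf{z}_i$ where $A(z)$ is a minor resolvent independent of $\mathbf{z}_i$. Under the sub-Gaussian Assumption~\ref{sub gaussian assumption}, a Hanson--Wright-type large deviation estimate gives $|\mathbf{z}_i^\top A\mathbf{z}_i - \mathrm{Tr}\, A|\prec \sqrt{\mathrm{Tr}(AA^*)}$, which feeds into a self-consistent equation for $G_{ii}(z)$ with fluctuation terms. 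One bootstraps from large spectral scale $\eta=\mathrm{Im}\,z \asymp 1$ down to $\eta \gtrsim n^{-1+\epsilon}$ by a standard continuity argument in $\eta$. The output is both an entrywise local law $|G_{ij}(z) - \Pi_{ij}(z)| \prec \Psi(z)$ and an averaged law $|m_n(z)-m_{y,Q}(z)| \prec (n\eta)^{-1}$, where $\Pi(z)$ is the deterministic resolvent profile and $\Psi(z)$ the usual control parameter involving $\mathrm{Im}\,m(z)/(n\eta)$.

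For the second stage, the regularity conditions in Assumption~\ref{regularity} are exactly what is needed to establish \emph{quantitative stability} of the self-consistent equation from Lemma~\ref{Generalized_MP}. The lower bound $a_k \geq \tau$ keeps us away from the hard edge at zero, the separation $|a_k - a_\ell|\geq \tau$ ensures each bulk component of $\rho_\kappa$ can be treated independently, and the bound $|x_k + \lambda_i(\mathbf\Sigma)^{-1}|\geq \tau$ prevents resonances in the equation $1 = y\int t^2(t+1/m)^{-2}\mathrm{d} Q(t)$ at the edges. A Taylor expansion of the self-consistent equation at each $a_k$ then produces the characteristic square-root behavior $\rho_\kappa(x)\asymp \sqrt{|x-a_k|}$ and controls the deterministic resolvent up to the edge. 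This edge analysis is where the anomalous exponent $2/3$ in the rate $n^{-2/3}$ originates, and controlling it uniformly is the main technical obstacle.

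The final stage converts the local law into eigenvalue rigidity. Writing the empirical counting function $\mathcal{N}(E) = \sharp\{i: \lambda_i \leq E\}$ and its classical counterpart $d\, F_{y,\rho_\kappa}(E)$, one uses the Helffer--Sj\"ostrand representation to bound $|\mathcal{N}(E) - d F_{y,\rho_\kappa}(E)|$ by integrals of $\mathrm{Im}\,(m_n - m_{y,Q})$ against a smooth cutoff, and the local law from stage one controls these integrals. Inverting the resulting estimate on the counting function and using the edge density $\rho_\kappa(\alpha_{k,i})\asymp d^{-1/3}(i\wedge(d_k+1-i))^{-1/3}$ produced in stage two yields the claimed bound $|\lambda_{k,i}-\alpha_{k,i}|\prec n^{-2/3}\{i\wedge(d_k+1-i)\}^{-1/3}$, which interpolates between the $n^{-1}$ bulk rate and the $n^{-2/3}$ edge rate. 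The hardest part, as noted, is uniformity of the stability estimates across all edges and scales simultaneously, which is why the anisotropic local law of \citet{knowles2017anisotropic} is invoked rather than a simpler isotropic version.
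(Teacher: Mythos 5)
The paper does not actually prove this lemma: it is imported verbatim as Theorem~3.12 of \citep{knowles2017anisotropic}, and Appendix~B supplies only the statement, the definition of stochastic dominance, and the relabelling of eigenvalues by bulk component. So there is no internal argument to compare you against; the relevant comparison is with the proof in the cited reference. Against that benchmark your outline is architecturally correct: the three-stage program (local law for the resolvent down to scales $\eta \gtrsim n^{-1+\epsilon}$, quantitative stability of the self-consistent equation near each edge under the conditions of Assumption~\ref{regularity}, and a Helffer--Sj\"ostrand conversion of the averaged law into a counting-function estimate that is then inverted) is exactly how rigidity is obtained there, and you correctly identify the square-root edge behaviour as the source of the exponent $2/3$ and the interpolation $\{i\wedge(d_k+1-i)\}^{-1/3}$ between the bulk rate $n^{-1}$ and the edge rate $n^{-2/3}$.

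That said, if this were offered as a proof rather than a roadmap, three load-bearing steps are named but not carried out. First, the averaged bound $|m_n(z)-m_{y,Q}(z)|\prec (n\eta)^{-1}$ does not follow from the entrywise law $|G_{ij}-\Pi_{ij}|\prec\Psi$ by summing the diagonal: that only yields the rate $\Psi\asymp\sqrt{\mathrm{Im}\,m/(n\eta)}$, which is the square root of what rigidity requires; the improvement to $(n\eta)^{-1}$ needs a fluctuation-averaging argument exploiting cancellation among the centred diagonal entries, which your sketch omits entirely. Second, for a general population covariance $\mathbf{\Sigma}$ the diagonal Schur-complement scheme you describe is the isotropic one; the deterministic profile is a non-scalar matrix function of $\mathbf{\Sigma}$, and controlling $\mathbf v^{\top}(G-\Pi)\mathbf w$ for arbitrary directions --- which is needed to handle the eigenbasis of $\mathbf{\Sigma}$ --- is precisely the anisotropic local law that constitutes the main technical content of the cited paper; it is not a routine extension of the Hanson--Wright plus bootstrap argument. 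Third, the uniformity of the edge stability estimates across all $2p$ edges and all spectral scales simultaneously, which you yourself flag as the hardest part, is asserted rather than established. Since the paper resolves all of this by citation, the appropriate conclusion is that your outline is a faithful summary of the cited proof, not a substitute for it.
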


\begin{lemma}[Lyapunov's central limit theorem]\label{Lyapunov_CLT}
Suppose $\{x_i\}_{i=1}^n$ is a sequence of independent random variables, each with finite mean $\mu_i$ and variance $\sigma_i^2$. Define $s_n^2=\sum_{1\le i\le n}\sigma_i^2$. If for some $\delta>0$, {\it Lyapunov's condition} 
\$
\lim_{n\rightarrow \infty}\frac{1}{s_n^{2+\delta}}\sum_{1\le i\le n}\mathbb E \left(|x_i-\mu_i|^{2+\delta}\right)=0
\$
is satisfied, then a sum of $(x_i-\mu_i)/s_n$ weakly converges to $\cN (0,1)$, that is,
\$
\frac{1}{s_n}\sum_{1\le i\le n}(x_i-\mu_i)\overset{D}{\rightarrow}\cN(0,1).
\$
\end{lemma}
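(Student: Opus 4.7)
The plan is to derive Lyapunov's CLT by reducing it to the classical Lindeberg--Feller CLT, of which the stated hypothesis is a well-known strictly stronger moment-type sufficient condition. The reduction is a one-line domination argument; the Lindeberg--Feller theorem itself is then either cited as a standard result or, if a self-contained proof is desired, carried out by the characteristic function method.

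First I would show that the Lyapunov condition implies Lindeberg's condition, namely for every fixed $\epsilon>0$,
$$
\lim_{n\to\infty}\frac{1}{s_n^2}\sum_{i=1}^n \EE\left\{(x_i-\mu_i)^2 \mathbf{1}_{\{|x_i-\mu_i|>\epsilon s_n\}}\right\}=0.
$$
The key pointwise bound is that on the event $\{|x_i-\mu_i|>\epsilon s_n\}$ one has $(x_i-\mu_i)^2 = |x_i-\mu_i|^{2+\delta}\cdot|x_i-\mu_i|^{-\delta} \le (\epsilon s_n)^{-\delta}|x_i-\mu_i|^{2+\delta}$. Taking expectation, summing over $i$, and dividing by $s_n^2$ yields the upper bound $\epsilon^{-\delta}\, s_n^{-(2+\delta)}\sum_{i=1}^n \EE|x_i-\mu_i|^{2+\delta}$, which vanishes by the assumed Lyapunov condition for each fixed $\epsilon>0$.

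Second I would invoke the Lindeberg--Feller CLT to conclude that $s_n^{-1}\sum_{i=1}^n (x_i-\mu_i)\xrightarrow{D}\cN(0,1)$. For a self-contained characteristic-function proof, write $Y_{n,k}:=(x_k-\mu_k)/s_n$ so that $\sum_{k=1}^n \EE(Y_{n,k}^2)=1$, let $\phi_{n,k}(t):=\EE\{\exp(\mathbf{i} t Y_{n,k})\}$, and perform a two-piece second-order Taylor estimation split at truncation level $\epsilon$. One obtains $\phi_{n,k}(t)=1-\tfrac{t^2}{2}\EE(Y_{n,k}^2)+r_{n,k}(t)$ with $\sum_k |r_{n,k}(t)|$ controlled by $O(\epsilon)$ plus the Lindeberg sum, whence $\sum_k \log\phi_{n,k}(t)\to -t^2/2$ uniformly on compact sets of $t$. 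L\'evy's continuity theorem then delivers the convergence in distribution.

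The main technical subtlety is verifying the uniform-smallness estimate $\max_{1\le k\le n}\EE(Y_{n,k}^2)=o(1)$, which is needed in order to linearize $\log\phi_{n,k}(t)$ when passing from the product $\prod_k\phi_{n,k}(t)$ to $\sum_k\log\phi_{n,k}(t)$. This follows by splitting $\EE(Y_{n,k}^2)=\EE\{Y_{n,k}^2\mathbf{1}_{\{|Y_{n,k}|\le\epsilon\}}\}+\EE\{Y_{n,k}^2\mathbf{1}_{\{|Y_{n,k}|>\epsilon\}}\}$, bounding the first term by $\epsilon^2$ and the second term by the Lindeberg sum uniformly in $k$, and then letting $\epsilon\to 0$; all remaining estimates are routine.
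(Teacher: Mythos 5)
Your proof is correct, and it is the standard textbook derivation: the domination $(x_i-\mu_i)^2\le(\epsilon s_n)^{-\delta}|x_i-\mu_i|^{2+\delta}$ on the event $\{|x_i-\mu_i|>\epsilon s_n\}$ shows Lyapunov's condition implies Lindeberg's, and the Lindeberg--Feller theorem (or the characteristic-function argument you sketch) finishes the job. The paper states this lemma as a classical technical result without supplying any proof of its own, so there is no alternative route to compare against; your argument is exactly the canonical one the authors are implicitly relying on.
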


\end{document}